\theoremstyle{plain}
\newtheorem{theorem}{Theorem}[section]
\newtheorem{lemma}[theorem]{Lemma}
\newtheorem*{corollary}{Corollary}
\theoremstyle{definition}
\newtheorem{definition}{Definition}[section]
\theoremstyle{remark}
\newtheorem*{remark}{Remark}
\newif\ifCOM
\newcommand{\vectornorm}[1]{\|#1\|}
\begin{document}

\title{The algebraic structure of the densification and the sparsification tasks for CSPs
}
%\subtitle{Do you have a subtitle?\\ If so, write it here}

%\titlerunning{Short form of title}        % if too long for running head

\author{Rustem Takhanov\\
  School of Sciences and Humanities\\
  Nazarbayev University\\
  Astana \\
  \texttt{rustem.takhanov@nu.edu.kz} \\
}

%\authorrunning{Short form of author list} % if too long for running head

\date{Received: date / Accepted: date}
% The correct dates will be entered by the editor

\maketitle

\begin{abstract}
%There are two extreme cases of a constraint satisfaction problem (CSP) that involve the number of constraints, dense CSPs and sparse CSPs. 
The tractability of certain CSPs for dense or sparse instances is known from the 90s. Recently, the densification and the sparsification of CSPs were formulated as computational tasks and the systematical study of their computational complexity was initiated.

We approach this problem by introducing the densification operator, i.e. the closure operator that, given an instance of a CSP, outputs all constraints that are satisfied by all of its solutions. According to the Galois theory of closure operators, any such operator is related to a certain implicational system (or, a functional dependency) $\Sigma$. We are specifically interested in those classes of fixed-template CSPs, parameterized by constraint languages $\Gamma$, for which there is an implicational system $\Sigma$ whose size is a polynomial in the number of variables $n$. 
We show that in the Boolean case, such implicational systems exist if and only if $\Gamma$ is of bounded width. 
For such languages, $\Sigma$ can be computed in log-space or in a logarithmic time with a polynomial number of processors. Given an implicational system $\Sigma$, the densification task is equivalent to the computation of the closure of input constraints. The sparsification task is equivalent to the computation of the minimal key.  %This leads to ${\mathcal O}({\rm poly}(n)\cdot N^2)$-algorithm for the sparsification task where $N$ is the number of non-redundant sparsifications of an original CSP. 

%Finally, we give a complete classification of constraint languages over the Boolean domain for which the densification problem is tractable.
\keywords{Horn formula minimization\and sparsification of CSP\and densification of CSP\and polynomial densification operator\and implicational system\and bounded width\and datalog}
% \PACS{PACS code1 \and PACS code2 \and more}
\end{abstract}

\section{Introduction}
In the constraint satisfaction problem (CSP)~\cite{Feder,Bul-dich,Zhuk} we are given a set of variables with prescribed domains and a set of constraints. The task's goal is to assign each variable a value such that all the constraints are satisfied. Given an instance of CSP, besides the classical formulation, one can formulate many other tasks, such as maximum/minimum CSPs (Max/Min-CSPs)~\cite{Khanna}, valued CSP (VCSPs)~\cite{COOPER2003311,Schiex}, counting CSPs~\cite{Pesant05,Expressibility}, promise CSPs~\cite{Jakub,Brakensiek},  quantified CSPs~\cite{carvalho,Ferguson,Bauland2009TheCO}, and others.  %For example, in counting CSP the goal is not just finding a single assignment, but counting the total number of solutions of the classical CSP. In another problem, the minimum cost homomorphism problem, given an instance of the classical CSP with a set of variables $V$ whose domain is $D$, we are also given weights $w_{va}, v\in V, a\in D$; and our goal is to find such a solution of the classical CSP $h:V\rightarrow D$ that maximizes $\sum_{v\in V} w_{vh(v)}$. , hybrid CSPs~\cite{Effectiveness,HybridVCSPs} minimum cost homomorphism problems (MinHom)~\cite{takhanov:LIPIcs:2010:2493,KarimiMehdi}, 
Thus, the computational task of finding a single solution is not the only aspect that is of interest from the perspective of applications of CSPs. 

Sometimes in applications we have a CSP instance that defines a set of solutions, and we need to preprocess the instance by making it denser (i.e. adding new constraints) or, vice versa, sparser (removing as many constraints as we can) without changing the set of solutions.
%\section{Related work}
Let us give an example of such an application. %The idea of {\bf the densification} of a CSP instance appeared before in different contexts. 
The paper by Jia Deng et al.~\cite{DengJia} is dedicated to the Conditional Random Field (CRF) based on the so-called HEX graphs. 
The algorithm for the inference in CRFs presented there is based on the standard junction tree algorithm~\cite{Bishop}, but with one additional trick --- before constructing the junction tree of the factor graph, the factor tree is sparsified. %(step 1 of Algorithm~\ref{hex_infer}). 
This step aims to make the factor graph as close to the tree structure as possible. After that step, cliques of the junction tree are expected to have fewer nodes.
The sparsification of the HEX graph done in this approach is equivalent to the sparsification of a CSP instance, i.e. the deletion of as many constraints as possible while maintaining the set of solutions. The term “sparsification” is also used in a related line of work in which the goal is, given a CSP instance, to reduce the number of constraints without changing
the satisfiability of an instance~\cite{ChenJansen,Lagerkvist}. 

As was suggested in~\cite{DengJia}, the densification of a CSP instance could also help make inference algorithms more efficient.
If the factor tree is densified, then for every clique $c$ of the factor graph, the number of consistent assignments to variables of the clique $c$ is smaller. Thus, reducing the state space for each clique makes the inference faster. 
The sparsification-densification approach substantially accelerates the computation of the marginals as the number of nodes grows. %As table 3 from the master thesis~\cite{raimbek} shows, for a sufficiently large $n$ and a rich structure of the graph $G=([n], E_h, E_e)$ the gain in the computing time can be more than ten fold.
%The densification of the HEX graph that is implied in~\cite{DengJia} is equivalent to the densification of a CSP instance that we study in our paper.

It is well-known that the complexity of the sparsification problem, as well as the worst-case sparsifiability, depends on the constraint language, i.e. the types of constraints allowed in CSP. The computational complexity was completely classified for constraint languages consisting of the so-called irreducible relations~\cite{Schnoor}.

For a constraint language that consists of Boolean relations of the form $A_1 \wedge A_2  \wedge  ...  \wedge  A_n\to B$ (so-called pure Horn clauses),  the sparsification task is equivalent to the problem of finding a minimum size cover of a given functional dependency (FD) table. 
The last problem was studied in database theory long ago and is considered a classical topic. It was shown that this problem is NP-hard both in the general case and in the case a cover is restricted to be a subset of the given FD table. Surprisingly, if we re-define the size of a cover as the number of distinct left-hand side expressions $A_1 \wedge A_2  \wedge  ...  \wedge  A_n$, then the problem is polynomially solvable~\cite{Maier83}.

An important generalization of the previous constraint language is a set of Horn clauses (i.e. $B$ can be equal to ${\bf False}$). The sparsification problem for this language is known by the name {\em Horn  minimization}, i.e. it is a problem of finding the minimum size Horn formula that is equivalent to an input Horn formula.  Horn minimization is NP-hard if the number of clauses is to be minimized~\cite{Ausiello1984,Boros94onthe}, or if the number of literals is to be minimized~\cite{HAMMER1993131}. 
Moreover, in the former case Horn minimization cannot be $2^{\log ^{1-\epsilon} (n)}$-approximated if ${\rm NP}\not \subseteq {\rm DTIME}(n^{{\rm polylog}(n)})$~\cite{Bhattacharya}.

An example of a tractable sparsification problem is 2-SAT formula minimization~\cite{Chang2006HornFM} which corresponds to the constraint language of binary relations over the Boolean domain. 

The key idea of this paper's approach is to consider both  densification and  sparsification as two operations defined on the same set, i.e. the set of possible constraints. We observe that the densification is a closure operator on a finite set, and therefore, according to Galois theory~\cite{CASPARD2003241}, it can be defined using a functional dependency table, or so-called implicational system $\Sigma$ (over a set of possible constraints and, maybe, some additional literals). 
It turns out that $\Sigma$ can have a size bounded by some polynomial of the number of variables only if the constraint language is of bounded width (for tractable languages not of bounded width, the size of $\Sigma$ could still be substantially smaller than for NP-hard languages). For the Boolean domain, all languages of bounded width have a polynomial-size implicational system $\Sigma$. 

Given an implicational system $\Sigma$, the sparsification problem can be reformulated as a problem of finding the minimal key in $\Sigma$, i.e. such a set of constraints whose densification is the same as the densification of initial constraints. This task was actively studied in database theory, and we exploit the standard algorithm for the solution of the minimal key problem, found by Luchessi and Osborn~\cite{LUCCHESI1978270}. If $|\Sigma| = {\mathcal O}({\rm poly}(n))$ and literals of $\Sigma$ are all from the set of possible constraints, this leads us to a ${\mathcal O}({\rm poly}(n)\cdot N^2)$-sparsification algorithm where $N$ is the number of non-redundant sparsifications of an input instance. This algorithm can be applied to the Horn minimization problem, and, to our knowledge, this is the first algorithm that is polynomial in $N$. Of course, in the worst-case $N$ is large. %Additionally, we give a complete classification of all Boolean constraint languages for which the densification problem is tractable, by noting that the problem is polynomial-time Turing interreducible with the so called implication/equivalence problems~\cite{Schnoor2008,Hemaspaandra}.

Besides the mentioned works, densification/sparsification tasks were also studied for soft CSPs, and this unrelated research direction includes graph densification~\cite{Hardt,Escolano,Curado}, binary CSP sparsification~\cite{Benczur,Batson,Andoni,Filtser,Butti} and spectral sparsification of graphs and hypergraphs~\cite{FungWaiShing,HypergraphsSparsification}. In the 90's it was found that dense CSP instances (i.e. when the number of $k$-ary constraints is $\Theta(n^k)$) admit efficient algorithms for the Max-$k$-CSP and the maximum assignment problems~\cite{DenseInstances,newrounding,regularitylemma}. 
Though we deal with crisp CSPs and not any CSP instance can be densified to $\Theta(n^k)$ constraints, the idea to densify a CSP instance seems natural in this context. 
Note that the densification of a CSP that we study in our paper is substantially different from the notion of the densification of a graph. Initially, Hardt et al.~\cite{Hardt} define the densification of the graph $G=(V,E)$ as a new graph $H=(V, E'), E'\supseteq E$ such that the cardinalities of cuts in $G$ and $H$ are proportional. In~\cite{Escolano,Curado} and in the Ph.D. Thesis~\cite{CuradoPhD} the densification was naturally applied in a clustering problem to neighborhood graphs to make more intra-class links and smaller overhead of inter-class links.  It was shown that this makes the Laplacian of a graph better conditioned for a subsequent application of spectral methods. A theoretical analysis of the densification/sparsification tasks for soft CSPs requires mathematical tools substantially different from those that we develop in the paper.

\section{Preliminaries}\label{sec:prelim}
We assume that ${\rm P} \neq {\rm NP}$. The set $\{1,...,k\}$ is denoted by $[k]$. Given a relation $\varrho\subseteq R^{s}$ and a tuple $\mathbf{a}\in R^{s'}$, by $\vectornorm{\varrho}$ and $|\mathbf{a}|$ we denote $s$ and $s'$, respectively. 
A relational structure is a tuple $\mathbf{R} = (R, r_1, ..., r_k)$ where $R$ is finite set, called the domain of $\mathbf{R}$, and $r_i\subseteq R^{\vectornorm{r_i}}$, $i\in [k]$.
 If $p_0\in [\vectornorm{\varrho}]$, then ${\rm pr}_{\{p_0\}} (\varrho) = \{a_{p_0}| (a_1, ..., a_k)\in \varrho\}$, if $p_0<p_1 \leq \vectornorm{\varrho}$, then ${\rm pr}_{\{p_0,p_1\}} (\varrho) = \{(a_{p_0}, a_{p_1})|  (a_1, ..., a_k)\in \varrho\}$ etc.

\subsection{The homomorphism formulation of CSP}
Let us define first the notion of a homomorphism between relational structures.

\begin{definition} Let $\mathbf{R} = (V, r_1, ..., r_s)$ and $\mathbf{R}' = (V', r'_1, ..., r'_s)$ be relational structures with a common signature (that is arities of $r_i$ and $r'_i$ are the same for every $i \in [s]$). A mapping $h\colon V \to V'$ is called a \emph{homomorphism} from $\mathbf{R}$ to $\mathbf{R}'$ if for every $i \in [s]$ and for any $(x_1, ..., x_{\vectornorm{r_i}}) \in r_i$ we have that $\big((h(x_1), ..., h(x_{\vectornorm{r'_i}})\big) \in r'_i$. The set of all homomorphisms from $\mathbf{R}$ to $\mathbf{R}'$  is denoted by ${\rm Hom}(\mathbf{R}, \mathbf{R}')$.
\end{definition}

The classical CSP can be formulated as a homomorphism problem. 
\begin{definition} The {\bf CSP}  is a search task with:
\begin{itemize}
\item {\bf An instance:} two relational structures with a common signature, $\mathbf{R} = (V,r_1, ..., r_s)$ and $\boldsymbol{\Gamma} = (D, \varrho_1, ..., \varrho_s)$.
\item {\bf An output:} a homomorphism $h:\mathbf{R} \to \boldsymbol{\Gamma}$ if it exists, or answer ${\rm None}$, if it does not exist.
\end{itemize}
\end{definition}

A finite relational structure $\boldsymbol{\Gamma} = (D, \varrho_1, ..., \varrho_s)$ over a fixed finite domain $D$ is sometimes called a template.
For such $\boldsymbol{\Gamma}$ we will denote by $\Gamma$ (without boldface) the set of relations $\{ \varrho_1, ..., \varrho_s \}$. The set $\Gamma$ is called the constraint language. 

\begin{definition} The {\bf fixed template CSP} for a given template $\boldsymbol{\Gamma} = (D, \varrho_1, ..., \varrho_s)$, denoted ${\rm CSP}(\boldsymbol{\Gamma})$, is defined as follows:
given a relational structure $\mathbf{R} = (V,r_1, ..., r_s)$ of the same signature as $\boldsymbol{\Gamma}$, solve the CSP for an instance $(\mathbf{R}, \boldsymbol{\Gamma})$. If ${\rm CSP}(\boldsymbol{\Gamma})$ is solvable in a polynomial time, then $\boldsymbol{\Gamma}$ is called tractable. Otherwise, $\boldsymbol{\Gamma}$ is called NP-hard~\cite{Bul-dich,Zhuk}.
\end{definition}

\subsection{Algebraic approach to CSPs}
In the paper we will need standard definitions of primitive positive formulas and polymorphisms.
\begin{definition}
Let $\tau=\{\pi_1, ..., \pi_s\}$ be a set of symbols for predicates, with the arity $n_i$ assigned to $\pi_i$.
A first-order formula $\Phi(x_1, ..., x_k) = \exists x_{k+1}... x_{n} \Xi(x_1, ..., x_n)$ where $\Xi(x_1, ..., x_n) = \bigwedge_{t=1}^N \pi_{j_t} (x_{o_{t1}}, x_{o_{t2}}, ..., x_{o_{tn_{j_t}}})$, $j_t\in [s]$, $o_{tq}\in [n]$ is called the primitive positive formula over the vocabulary $\tau$. For a relational structure $\mathbf{R} = (V,r_1, ..., r_s)$, $\vectornorm{r_i}=n_i, i\in [s]$, $\Phi^{\mathbf{R}}$ denotes a $k$-ary predicate $$\{(a_1, ..., a_k)|a_i\in V, i\in [k],  \exists a_{k+1}, \cdots, a_n\in V:  (a_{o_{t1}}, a_{o_{t2}}, ..., a_{o_{tn_{j_t}}})\in r_{j_t}, t\in [N]\},$$ i.e. the result of interpreting the formula $\Phi$ on the model $\mathbf{R}$, where $\pi_i$ is interpreted as $r_i$. 
\end{definition}
For $\boldsymbol{\Gamma} = (D,\varrho_1, ..., \varrho_s)$ and $\tau =\{\pi_1, ..., \pi_s\}$, let us denote the set $\{ \Psi^{\boldsymbol{\Gamma}} | \Psi {\rm \,\,is\,\,primitive \,\,positive}\\ {\rm formula\,\,over\,\,}\tau \}$ by $\langle \Gamma\rangle$. 

%Subsequent definitions are standard.
\begin{definition}\label{D:first}
Let $\varrho  \subseteq D^m $ and $f:D^n  \to D$.
We say that the predicate $\varrho$ is preserved by $f$ (or, $f$ is a polymorphism of $\varrho$) if,
for every $\left( {x_1^i ,...,x_m^i } \right) \in \varrho, 1 \leq i \leq n$, we have that $\left( {f\left( {x_1^1 ,...,x_1^n
} \right),...,f\left( {x_m^1 ,...,x_m^n } \right)} \right) \in
\varrho $.
\end{definition}

For a set of predicates $\Gamma\subseteq \{\varrho| \varrho\subseteq D^m\}$, let ${\rm Pol}\left( \Gamma \right)$ denote the
set of operations $f: D^n \rightarrow D$ such that $f$ is a polymorphism of all predicates in $\Gamma$.  For a set of
operations $F\subseteq \{f| f: D^n\rightarrow D\}$, let $ {\rm Inv}\left( F \right)$ denote the set of
predicates $\varrho\subseteq D^m$ preserved under the operations of $F$. The next result is well-known \cite{bodnarchuk,geiger}.
\begin{theorem}[Geiger, Bodnarchuk, Kaluznin, Kotov, Romov]\label{T:1}
For a set of predicates $\Gamma$ over a finite set $D$, $\langle \Gamma \rangle = {\rm Inv}\left( {\rm Pol}\left( \Gamma \right) \right)$.
\end{theorem}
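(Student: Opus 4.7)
The plan is to prove the two inclusions separately. The direction $\langle \Gamma \rangle \subseteq {\rm Inv}({\rm Pol}(\Gamma))$ is a straightforward induction on the construction of primitive positive formulas: polymorphisms preserve atoms by definition, intersection (conjunction) of two invariant relations is again invariant under any operation acting coordinate-wise, and projection (existential quantification) commutes with such coordinate-wise application. So every pp-definable relation is preserved by ${\rm Pol}(\Gamma)$.

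For the nontrivial direction ${\rm Inv}({\rm Pol}(\Gamma)) \subseteq \langle \Gamma \rangle$, I would use the classical ``indicator structure'' (or ``free structure'') argument. Fix $\varrho \subseteq D^m$ preserved by every polymorphism of $\Gamma$ and enumerate its tuples as $\mathbf{a}^1, \ldots, \mathbf{a}^n$. Arrange them as the columns of an $m \times n$ matrix over $D$ and read off its rows $\mathbf{b}^1, \ldots, \mathbf{b}^m \in D^n$. Build a primitive positive formula $\Phi_\varrho(x_1, \ldots, x_m)$ whose existentially quantified variables are $\{y_{\mathbf{c}} : \mathbf{c} \in D^n \setminus \{\mathbf{b}^1, \ldots, \mathbf{b}^m\}\}$, with $x_i$ identified with $y_{\mathbf{b}^i}$, and whose atoms are $\varrho'(y_{\mathbf{c}^1}, \ldots, y_{\mathbf{c}^k})$ for every $\varrho' \in \Gamma$ of arity $k$ and every tuple $(\mathbf{c}^1, \ldots, \mathbf{c}^k) \in (D^n)^k$ whose row-wise projections $(c^1_i, \ldots, c^k_i)$ all lie in $\varrho'$, for $i \in [n]$.

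I would then verify $\Phi_\varrho^{\boldsymbol{\Gamma}} = \varrho$. The inclusion $\varrho \subseteq \Phi_\varrho^{\boldsymbol{\Gamma}}$ is witnessed, for each $\mathbf{a}^j$, by the $j$-th coordinate projection $y_{\mathbf{c}} := c_j$: every atom is satisfied by construction, and the free variables evaluate to $\mathbf{a}^j$. The critical step is the reverse inclusion: a satisfying assignment $\alpha\colon D^n \to D$ of $\Phi_\varrho$ in $\boldsymbol{\Gamma}$ is, by the very form of the atoms, precisely an $n$-ary polymorphism of $\Gamma$ in the sense of Definition~\ref{D:first}. Applying such $\alpha$ row-by-row to the tuples of $\varrho$ yields $(\alpha(\mathbf{b}^1), \ldots, \alpha(\mathbf{b}^m))$, which equals the value the assignment gives to $(x_1, \ldots, x_m)$, and invariance of $\varrho$ under $\alpha$ forces this tuple into $\varrho$.

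The main obstacle --- and the conceptual heart of the argument --- is recognising that satisfying assignments of the indicator formula whose variables are indexed by $D^n$ coincide with $n$-ary polymorphisms of $\Gamma$; once this correspondence is set up, both containments reduce to bookkeeping. Note that $\Phi_\varrho$ may have up to $|D|^n$ variables and exponentially many atoms, but this is immaterial since the theorem asserts only definability, not any size bound --- the search for polynomial-size implicational systems will be precisely the subject of the rest of the paper.
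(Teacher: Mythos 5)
The paper does not prove this theorem; it cites it as well-known (with references to Bodnarchuk--Kaluznin--Kotov--Romov and Geiger), so there is no in-paper proof to compare against. Your proof is nonetheless correct and follows exactly the classical ``indicator'' (free-structure) argument used in those original sources: the easy inclusion by induction on pp-formula structure, and the hard inclusion via the $|D|^n$-variable indicator formula whose satisfying assignments in $\boldsymbol{\Gamma}$ are precisely the $n$-ary polymorphisms of $\Gamma$, combined with preservation of $\varrho$ to force the free-variable tuple back into $\varrho$. The bookkeeping in your construction (rows $\mathbf{b}^i$, columns $\mathbf{a}^j$, atom inclusion criterion, and the identification $x_i \leftrightarrow y_{\mathbf{b}^i}$ which correctly handles repeated rows by forcing the corresponding coordinate equalities) is accurate. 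The only unaddressed edge case is $\varrho = \emptyset$ (where $n = 0$), a standard degeneracy usually handled by convention and not material here.
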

It is well-known that the computational complexity of fixed-template CSPs, counting CSPs, VCSPs etc. is determined by the closure $\langle \Gamma\rangle$, and therefore, by the corresponding functional clone ${\rm Pol}\left( \Gamma \right)$. 

\section {The fixed template densification and sparsification problems}
%The notion of the maximal CSP instance appeared in~\cite{Takhanov2007}. Let us give a slightly more general definition of maximality and list some properties of maximal instances.
Let us give a general definition of maximality and list some properties of maximal instances.
\begin{definition}\label{maximality} An instance $(\mathbf{R}, \boldsymbol{\Gamma})$ of CSP, where $\mathbf{R} = (V,r_1, ..., r_s)$ and $\boldsymbol{\Gamma} = (D, \varrho_1, ..., \varrho_s)$, is said to be maximal if for any $\mathbf{R}' = (V,r'_1, ..., r'_s)$ such that $r'_i\supseteq r_i$, $i\in [s]$ we have ${\rm Hom}(\mathbf{R}, \boldsymbol{\Gamma}) \ne {\rm Hom}(\mathbf{R}', \boldsymbol{\Gamma})$, unless $\mathbf{R}' = \mathbf{R}$.
\end{definition}

The following characterization of maximal instances is evident from Definition~\ref{maximality} (also, see Theorem 1 in~\cite{Takhanov2007}).
\begin{theorem}\label{jvm}
An instance $(\mathbf{R} = (V,r_1, ..., r_s), \boldsymbol{\Gamma} = (D, \varrho_1, ..., \varrho_s))$ is maximal if and only if for any $i\in [s]$ and any $(v_1, ..., v_{\vectornorm{r_i}})\notin r_i$ there exists $h\in {\rm Hom}(\mathbf{R}, \boldsymbol{\Gamma})$ such that $(h(v_1), ..., h(v_{\vectornorm{r_i}}))\notin \varrho_i$.
\end{theorem}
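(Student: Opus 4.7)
The plan is a straightforward two-way argument based on the observation that enlarging the relations $r_i$ of $\mathbf{R}$ can only shrink the homomorphism set: if $r_i'\supseteq r_i$ for all $i$, then any $h\in\mathrm{Hom}(\mathbf{R}',\boldsymbol{\Gamma})$ must preserve every tuple of $r_i'$ and in particular every tuple of $r_i$, so $\mathrm{Hom}(\mathbf{R}',\boldsymbol{\Gamma})\subseteq\mathrm{Hom}(\mathbf{R},\boldsymbol{\Gamma})$. Maximality therefore reduces to the existence of homomorphisms of $\mathbf{R}$ that are destroyed by the addition of a single extra tuple.

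For the ``only if'' direction, I would fix $i\in[s]$ and a tuple $\mathbf{v}=(v_1,\ldots,v_{\vectornorm{r_i}})\notin r_i$, and define $\mathbf{R}'=(V,r_1,\ldots,r_{i-1},r_i\cup\{\mathbf{v}\},r_{i+1},\ldots,r_s)$. Since $\mathbf{R}'\ne\mathbf{R}$ and $r_j'\supseteq r_j$ for every $j$, maximality gives $\mathrm{Hom}(\mathbf{R},\boldsymbol{\Gamma})\ne\mathrm{Hom}(\mathbf{R}',\boldsymbol{\Gamma})$. Combined with the inclusion above, this forces the existence of some $h\in\mathrm{Hom}(\mathbf{R},\boldsymbol{\Gamma})\setminus\mathrm{Hom}(\mathbf{R}',\boldsymbol{\Gamma})$. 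Such an $h$ preserves every relation of $\mathbf{R}$, so the only way it can fail to be a homomorphism of $\mathbf{R}'$ is on the newly added tuple, i.e.\ $(h(v_1),\ldots,h(v_{\vectornorm{r_i}}))\notin\varrho_i$, which is the desired witness.

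For the ``if'' direction, I would take an arbitrary $\mathbf{R}'=(V,r_1',\ldots,r_s')$ with $r_i'\supseteq r_i$ for all $i$ and $\mathbf{R}'\ne\mathbf{R}$, and pick some $i$ and some $\mathbf{v}\in r_i'\setminus r_i$. The hypothesis supplies an $h\in\mathrm{Hom}(\mathbf{R},\boldsymbol{\Gamma})$ with $(h(v_1),\ldots,h(v_{\vectornorm{r_i}}))\notin\varrho_i$, so $h$ violates $r_i'$ and hence lies outside $\mathrm{Hom}(\mathbf{R}',\boldsymbol{\Gamma})$; this gives $\mathrm{Hom}(\mathbf{R},\boldsymbol{\Gamma})\ne\mathrm{Hom}(\mathbf{R}',\boldsymbol{\Gamma})$, establishing maximality.

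No step here looks hard; the only mild subtlety is making sure to invoke the inclusion $\mathrm{Hom}(\mathbf{R}',\boldsymbol{\Gamma})\subseteq\mathrm{Hom}(\mathbf{R},\boldsymbol{\Gamma})$ in the ``only if'' direction, since otherwise the non-equality of the two homomorphism sets would not, a priori, produce a homomorphism of $\mathbf{R}$ that fails on $\mathbf{v}$.
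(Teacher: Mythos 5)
Your proof is correct and is essentially the argument the paper has in mind: the paper states Theorem~\ref{jvm} as ``evident from Definition~\ref{maximality}'' (citing~\cite{Takhanov2007}) without writing out the steps, and the two-way unwinding you give, hinging on the monotonicity observation $r_i'\supseteq r_i \Rightarrow \mathrm{Hom}(\mathbf{R}',\boldsymbol{\Gamma})\subseteq\mathrm{Hom}(\mathbf{R},\boldsymbol{\Gamma})$, is exactly what is being left implicit. You were right to flag that inclusion as the one point needing explicit mention in the ``only if'' direction.
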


One can prove the following simple existence theorem (Statement 1 in~\cite{Takhanov2007}).
\begin{theorem}\label{old-mine}%[\cite{Takhanov2007}]
For any instance $(\mathbf{R} = (V,r_1, ..., r_s), \boldsymbol{\Gamma} = (D, \varrho_1, ..., \varrho_s))$ of CSP, there exists a unique maximal instance $(\mathbf{R}'= (V,r'_1, ..., r'_s), \boldsymbol{\Gamma})$ such that $r'_i\supseteq r_i, i\in [s]$
 and ${\rm Hom}(\mathbf{R}, \boldsymbol{\Gamma}) = {\rm Hom}(\mathbf{R}', \boldsymbol{\Gamma})$. Moreover, if ${\rm Hom}(\mathbf{R}, \boldsymbol{\Gamma})\ne \emptyset$, then
$$r'_i=\bigcap_{h\in {\rm Hom}(\mathbf{R}, \boldsymbol{\Gamma})} h^{-1}(\varrho_i), i\in [s]$$
\end{theorem}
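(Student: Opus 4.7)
The plan is to exhibit the maximal instance explicitly by the intersection formula in the nontrivial case, and then read off the three required properties (containment of relations, equality of homomorphism sets, maximality) one by one; uniqueness will follow from closure under taking unions of relations.

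First I would dispose of the degenerate case ${\rm Hom}(\mathbf{R}, \boldsymbol{\Gamma}) = \emptyset$. Here any $\mathbf{R}''$ with $r''_i\supseteq r_i$ still has empty homomorphism set (enlarging constraints can only shrink ${\rm Hom}$), so one may take $r'_i = D^{\|r_i\|}$ for every $i$; maximality and uniqueness are immediate since this is the largest possible choice. In the general case I would define, for every $i\in[s]$,
\[
r'_i \;=\; \bigcap_{h\in {\rm Hom}(\mathbf{R}, \boldsymbol{\Gamma})} h^{-1}(\varrho_i).
\]
Containment $r'_i\supseteq r_i$ is built in: if $(v_1,\dots,v_{\|r_i\|})\in r_i$, then by definition of a homomorphism $(h(v_1),\dots,h(v_{\|r_i\|}))\in \varrho_i$ for every $h\in {\rm Hom}(\mathbf{R}, \boldsymbol{\Gamma})$, hence $(v_1,\dots,v_{\|r_i\|})\in h^{-1}(\varrho_i)$ for all such $h$.

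Next I would verify ${\rm Hom}(\mathbf{R}, \boldsymbol{\Gamma}) = {\rm Hom}(\mathbf{R}', \boldsymbol{\Gamma})$. The inclusion ``$\supseteq$'' is trivial because $r'_i\supseteq r_i$ forces every homomorphism from $\mathbf{R}'$ to be one from $\mathbf{R}$. For ``$\subseteq$'', pick $h\in {\rm Hom}(\mathbf{R}, \boldsymbol{\Gamma})$; for any $(v_1,\dots,v_{\|r_i\|})\in r'_i$, the defining intersection gives $(v_1,\dots,v_{\|r_i\|})\in h^{-1}(\varrho_i)$, i.e. $(h(v_1),\dots,h(v_{\|r_i\|}))\in \varrho_i$, so $h\in {\rm Hom}(\mathbf{R}', \boldsymbol{\Gamma})$. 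Maximality of $\mathbf{R}'$ then follows immediately from Theorem~\ref{jvm}: if $(v_1,\dots,v_{\|r_i\|})\notin r'_i$, then by construction there exists $h\in {\rm Hom}(\mathbf{R}, \boldsymbol{\Gamma}) = {\rm Hom}(\mathbf{R}', \boldsymbol{\Gamma})$ with $(v_1,\dots,v_{\|r_i\|})\notin h^{-1}(\varrho_i)$, which is exactly the witness condition of Theorem~\ref{jvm}.

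For uniqueness, suppose $\mathbf{R}'' = (V, r''_1,\dots,r''_s)$ is another instance satisfying the hypothesis, with $r''_i\supseteq r_i$ and ${\rm Hom}(\mathbf{R}'', \boldsymbol{\Gamma}) = {\rm Hom}(\mathbf{R}, \boldsymbol{\Gamma})$. Form $\widetilde{\mathbf{R}}$ with $\widetilde{r}_i = r'_i \cup r''_i$. Every $h\in {\rm Hom}(\mathbf{R}, \boldsymbol{\Gamma})$ simultaneously satisfies both $r'_i$ and $r''_i$, hence belongs to ${\rm Hom}(\widetilde{\mathbf{R}}, \boldsymbol{\Gamma})$; the reverse inclusion is clear from $\widetilde{r}_i\supseteq r'_i$. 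So ${\rm Hom}(\widetilde{\mathbf{R}}, \boldsymbol{\Gamma}) = {\rm Hom}(\mathbf{R}', \boldsymbol{\Gamma})$, and maximality of $\mathbf{R}'$ forces $\widetilde{\mathbf{R}} = \mathbf{R}'$, i.e. $r''_i\subseteq r'_i$; by symmetry $r'_i\subseteq r''_i$. I do not expect a serious obstacle here; the only delicate point is to remember the empty-${\rm Hom}$ case separately, since the intersection formula then vacuously equals $D^{\|r_i\|}$ and one should either adopt this convention explicitly or split into two cases as above.
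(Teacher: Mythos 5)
Your proof is correct. The paper does not reproduce a proof of this statement (it cites it to an external reference), so there is no in-paper argument to compare against; your route — defining $r'_i$ by the intersection formula, checking $r_i\subseteq r'_i$ directly from the definition of homomorphism, proving the two-sided equality ${\rm Hom}(\mathbf{R},\boldsymbol{\Gamma}) = {\rm Hom}(\mathbf{R}',\boldsymbol{\Gamma})$, invoking Theorem~\ref{jvm} for maximality, and deriving uniqueness from the union $\widetilde r_i = r'_i\cup r''_i$ — is exactly the natural argument one expects here. Two small points. First, a typo: in the degenerate case ${\rm Hom}(\mathbf{R}, \boldsymbol{\Gamma}) = \emptyset$ you write $r'_i = D^{\vectornorm{r_i}}$, but $r_i$ lives in $V^{\vectornorm{r_i}}$ (tuples of variables, not domain elements), so you want $r'_i = V^{\vectornorm{r_i}}$. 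Second, in the uniqueness step the final ``by symmetry'' implicitly assumes $\mathbf{R}''$ is also maximal — which is indeed part of what the theorem quantifies over — and it would be cleaner to say so explicitly; alternatively, once you have $r''_i\subseteq r'_i$ with ${\rm Hom}(\mathbf{R}'',\boldsymbol{\Gamma}) = {\rm Hom}(\mathbf{R}',\boldsymbol{\Gamma})$, maximality of $\mathbf{R}''$ (Definition~\ref{maximality}) gives $\mathbf{R}''=\mathbf{R}'$ directly, without rerunning the union argument.
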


Thus, the maximal instance $(\mathbf{R}', \boldsymbol{\Gamma})$ from Theorem~\ref{old-mine} can be called the densification of $(\mathbf{R}, \boldsymbol{\Gamma})$. Let us now formulate constructing $(\mathbf{R}', \boldsymbol{\Gamma})$ from $(\mathbf{R}, \boldsymbol{\Gamma})$ as an algorithmic problem.

\begin{definition}\label{dense-define} The {\bf densification problem}, denoted ${\rm Dense}$,  is a search task with:
\begin{itemize}
\item {\bf An instance:} two relational structures with a common signature, $\mathbf{R} = (V,r_1, ..., r_s)$ and $\boldsymbol{\Gamma} = (D, \varrho_1, ..., \varrho_s)$.
\item {\bf An output:} a maximal instance $(\mathbf{R}'= (V,r'_1, ..., r'_s), \boldsymbol{\Gamma})$ such that $r'_i\supseteq r_i, i\in [s]$ and ${\rm Hom}(\mathbf{R}, \boldsymbol{\Gamma}) = {\rm Hom}(\mathbf{R}', \boldsymbol{\Gamma})$.
\end{itemize}
Also, let $D$ be a finite set and $\boldsymbol{\Gamma}$ a relational structure with a domain $D$. Then, the {\bf fixed template densification problem} for the template $\boldsymbol{\Gamma}$, denoted ${\rm Dense}(\boldsymbol{\Gamma})$, is defined as follows:
given a relational structure $\mathbf{R} = (V,r_1, ..., r_s)$ of the same signature as $\boldsymbol{\Gamma}$, solve the densification problem for an instance $(\mathbf{R}, \boldsymbol{\Gamma})$.
\end{definition}

Let $\Gamma = \{ \varrho_1, \cdots, \varrho_s \}$. The language $\Gamma$ is called constant-preserving if there is $a\in D$ such that $(a, \cdots, a)\in \varrho_i$ for any $i\in [s]$.
For a pair $(\mathbf{R}, \boldsymbol{\Gamma})$, where $\Gamma$ is not a constant-preserving language, the corresponding densification is non-trivial, i.e. $\mathbf{R}'\ne (V, V^{\vectornorm{r_1}}, \cdots, V^{\vectornorm{r_s}})$, if and only if ${\rm Hom}(\mathbf{R}, \boldsymbol{\Gamma})\ne \emptyset$. Therefore, the densification problem for such templates $\boldsymbol{\Gamma}$ is at least as hard as the decision form of CSP. In other words, if the decision form of ${\rm CSP}(\boldsymbol{\Gamma})$ is NP-hard (which is known to be polynomially equivalent to the search form), then all the more ${\rm Dense}(\boldsymbol{\Gamma})$ is NP-hard.

For a Boolean constraint language $\Gamma$, we say that $\Gamma$ is Schaefer in one of the following cases: 1) $x\vee y\in {\rm Pol}(\Gamma)$, 2) $x\wedge y\in {\rm Pol}(\Gamma)$, 3)  $x\oplus y\oplus z\in {\rm Pol}(\Gamma)$, 4) ${\rm mjy}(x,y,z) = (x\wedge y)\vee (x\wedge y)\vee (x\wedge z)\in {\rm Pol}(\Gamma)$. 
The complexity of ${\rm Dense}(\boldsymbol{\Gamma})$ in the Boolean case can be simply described by the following theorem whose proof uses earlier results of~\cite{Schnoor2008} and~\cite{Hemaspaandra}.
For completeness, a detailed proof can be found in Section~\ref{Boolean-dense}.
\begin{theorem}\label{Boolean-dense-thm} For $D = \{0,1\}$, ${\rm Dense}(\boldsymbol{\Gamma})$ is polynomially solvable if and only if $\Gamma$ is Schaefer. %If $\Gamma$ is not Schaefer or is not of bounded width, then ${\rm Dense}(\boldsymbol{\Gamma})\notin {\rm mP/poly}$.
\end{theorem}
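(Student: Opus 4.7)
The plan is to prove the two directions separately, reducing densification queries to CSP decision queries via Theorem~\ref{old-mine} and Theorem~\ref{jvm}.

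For the \emph{if} direction, suppose $\Gamma$ is Schaefer. By Theorem~\ref{old-mine} we have $r'_i = \bigcap_{h \in {\rm Hom}(\mathbf{R}, \boldsymbol{\Gamma})} h^{-1}(\varrho_i)$, so it suffices, for each $i \in [s]$, each tuple $\mathbf{v} = (v_1, \ldots, v_{\vectornorm{r_i}}) \in V^{\vectornorm{r_i}}$, and each forbidden $\mathbf{a} = (a_1, \ldots, a_{\vectornorm{r_i}}) \notin \varrho_i$, to decide whether there is a homomorphism $h$ sending $\mathbf{v}$ pointwise to $\mathbf{a}$. This is a single instance of CSP$(\boldsymbol{\Gamma}')$, where $\boldsymbol{\Gamma}'$ is $\boldsymbol{\Gamma}$ augmented with the two unary singleton relations $\{0\}$ and $\{1\}$. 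Each of the four Schaefer polymorphisms $\vee, \wedge, \oplus, {\rm mjy}$ is idempotent and therefore preserves $\{0\}$ and $\{1\}$, so $\Gamma'$ remains Schaefer and CSP$(\boldsymbol{\Gamma}')$ is polynomial by Schaefer's dichotomy. Since every $\vectornorm{r_i}$ is bounded by a constant depending only on $\boldsymbol{\Gamma}$, the total number of queries is polynomial in $|V|$, and the densification $\mathbf{R}'$ is assembled from their outcomes.

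For the \emph{only if} direction, suppose $\Gamma$ is not Schaefer, and split according to whether $\Gamma$ is constant-preserving. If not, then by Schaefer's dichotomy CSP$(\boldsymbol{\Gamma})$ is NP-hard, and the observation following Definition~\ref{dense-define} immediately yields NP-hardness of Dense$(\boldsymbol{\Gamma})$. The delicate case is when $\Gamma$ is $0$-valid or $1$-valid but still outside the four Schaefer classes; here the decision form of CSP is trivial (the constant-$0$ or constant-$1$ map works), and the reduction must come from a finer source of hardness. We appeal to the dichotomy results of~\cite{Schnoor2008, Hemaspaandra}: for a $0$-valid (respectively $1$-valid) non-Schaefer $\Gamma$, deciding whether an instance admits a homomorphism distinct from the constant map is NP-hard. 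By Theorem~\ref{jvm} the constant map is the \emph{only} homomorphism exactly when the densification equals the trivial structure $(V, V^{\vectornorm{r_1}}, \ldots, V^{\vectornorm{r_s}})$, a property readable from the output of ${\rm Dense}(\boldsymbol{\Gamma})$ in polynomial time, so Dense$(\boldsymbol{\Gamma})$ must be NP-hard too.

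The main obstacle is this $0$-valid / $1$-valid subcase: the usual Schaefer-style reduction from CSP decision is unavailable, and one must faithfully translate the hardness of ``non-trivial satisfiability'' established in~\cite{Schnoor2008, Hemaspaandra} into the language of maximal instances. Once that translation is in hand, the rest is essentially a routine unpacking of Theorems~\ref{jvm} and~\ref{old-mine}.
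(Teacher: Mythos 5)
Your \emph{if} direction is sound and takes a genuinely different (and slightly more elementary) route than the paper. The paper reduces ${\rm Dense}(\boldsymbol{\Gamma})$ to the implication problem ${\rm Impl}(\boldsymbol{\Gamma})$ and then invokes Theorem~\ref{Schnoor} (Schnoor--Schnoor); you instead reduce each membership query $\mathbf{v}\in r'_i$ directly to a constant number of decision queries for ${\rm CSP}(\boldsymbol{\Gamma}\cup\{\{0\},\{1\}\})$, using idempotency of the Schaefer polymorphisms to keep the augmented language tractable. Both work; yours only needs Schaefer's dichotomy, while the paper's reuses the same ${\rm Impl}$ machinery in both directions.

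The \emph{only if} direction has a real gap, precisely at the spot you flag. For non-constant-preserving $\Gamma$ the NP-hardness is immediate, as you say. But for constant-preserving non-Schaefer $\Gamma$ you rest the whole argument on the claim that ``deciding whether an instance admits a homomorphism distinct from the constant map is NP-hard,'' attributed to~\cite{Schnoor2008,Hemaspaandra}. Neither of those papers states that result: Schnoor--Schnoor's Theorem 6.5 and B\"ohler et al.\ are about the \emph{implication} and \emph{equivalence} problems, which are not the ``another solution'' problem. The witness of $\neg{\rm Impl}(\mathbf{R},\mathbf{R}')$ is a homomorphism in ${\rm Hom}(\mathbf{R},\boldsymbol{\Gamma})\setminus{\rm Hom}(\mathbf{R}',\boldsymbol{\Gamma})$, not merely a non-constant one, and there is no obvious reduction from $\neg{\rm Impl}$ to ``has a non-constant solution.'' The hardness you need is closer to Juban's dichotomy for the generalized unique/another-solution problem, which is a separate theorem you would have to cite and match carefully (e.g.\ for $\Gamma$ both $0$- and $1$-valid, ``only constant homomorphisms'' means ruling out \emph{both} constants, which is a different question from ``another solution besides a \emph{given} one'').

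The paper avoids this entirely with a uniform, case-free reduction: given an instance $(\mathbf{R},\mathbf{R}')$ of $\neg{\rm Impl}(\boldsymbol{\Gamma})$, one computes ${\rm Dense}(\mathcal{C}_{\mathbf{R}})$ and checks whether $\mathcal{C}_{\mathbf{R}'}\subseteq{\rm Dense}(\mathcal{C}_{\mathbf{R}})$; the inclusion holds iff ${\rm Hom}(\mathbf{R},\boldsymbol{\Gamma})\subseteq{\rm Hom}(\mathbf{R}',\boldsymbol{\Gamma})$, so $\neg{\rm Impl}(\boldsymbol{\Gamma})$ Turing-reduces to ${\rm Dense}(\boldsymbol{\Gamma})$. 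Together with Theorem~\ref{Schnoor} (coNP-completeness of ${\rm Impl}$ for all non-Schaefer Boolean $\Gamma$, constant-preserving or not), this gives NP-hardness without any case split and without needing the ``non-trivial solution'' hardness. You should replace the constant-preserving subcase by this reduction; your observation that ``densification trivial $\Leftrightarrow$ only constant homomorphisms'' is a nice structural remark, but it is not needed and it is not how the hardness is established in the paper.
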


Let us introduce the sparsification problem.

\begin{definition}\label{minimality} An instance $(\mathbf{R}, \boldsymbol{\Gamma})$ of CSP, where $\mathbf{R} = (V,r_1, ..., r_s)$ and $\boldsymbol{\Gamma} = (D, \varrho_1, ..., \varrho_s)$, is said to be minimal if for any $\mathbf{T} = (V,t_1, ..., t_s)$ such that $t_i\subseteq r_i, i\in [s]$  we have ${\rm Hom} (\mathbf{R}, \boldsymbol{\Gamma})\ne {\rm Hom} (\mathbf{T}, \boldsymbol{\Gamma})$, unless $\mathbf{T} = \mathbf{R}$.
\end{definition}
Let us define:
\begin{equation}
\begin{split}
{\rm Min} (\mathbf{R}, \boldsymbol{\Gamma}) = \big\{\mathbf{R}'=(V,r'_1, ..., r'_s) \mid
{\rm Hom} (\mathbf{R}, \boldsymbol{\Gamma})={\rm Hom} (\mathbf{R}', \boldsymbol{\Gamma}), (\mathbf{R}', \boldsymbol{\Gamma})\rm{\,\,is\,\,minimal}\big\}
\end{split}
\end{equation}

\begin{definition} The {\bf sparsification problem}, denoted ${\rm Sparse}$,  is a search task with:
\begin{itemize}
\item {\bf An instance:} two relational structures with a common signature, $\mathbf{R} = (V,r_1, ..., r_s)$ and $\boldsymbol{\Gamma} = (D, \varrho_1, ..., \varrho_s)$.
\item {\bf An output:} List of all elements of ${\rm Min} (\mathbf{R}, \boldsymbol{\Gamma})$.
\end{itemize}
Also, let $D$ be a finite set and $\boldsymbol{\Gamma}$ a relational structure with a domain $D$. Then, the {\bf fixed template sparsification problem} for the template $\boldsymbol{\Gamma}$, denoted ${\rm Sparse}(\boldsymbol{\Gamma})$, is defined as follows:
given a relational structure $\mathbf{R} = (V,r_1, ..., r_s)$ of the same signature as $\boldsymbol{\Gamma}$, solve the sparsification problem for an instance $(\mathbf{R}, \boldsymbol{\Gamma})$.
\end{definition}

\begin{remark} In many aplications  ${\rm Min} (\mathbf{R}, \boldsymbol{\Gamma})$ is of moderate size, though potentially it can depend on $|V|$ exponentially. Also, $\mathbf{R}'=(V,r'_1, ..., r'_s) \in {\rm Min} (\mathbf{R}, \boldsymbol{\Gamma})$ is not necessarily a substructure of $\mathbf{R}$, i.e. it is possible that $r'_i\not\subseteq r_i$. Enforcing $r'_i\subseteq r_i, i\in [s]$ in the definition of ${\rm Min} (\mathbf{R}, \boldsymbol{\Gamma})$ is discussed in Remark~\ref{Gottlob}.
\end{remark}

\section {Densification as the closure operator}

Let us introduce a set of all possible constraints over $\Gamma$ on a set of variables $V$:
$$
\mathcal{C}^{\boldsymbol{\Gamma}}_V = \{\langle (v_1, ..., v_{\vectornorm{\varrho_i}}), \varrho_i\rangle | i\in [s], v_1, ..., v_{\vectornorm{\varrho_i}}\in V\}
$$
Any instance of ${\rm CSP}(\boldsymbol{\Gamma})$, a relational structure $\mathbf{R}=(V,r_1, ..., r_s)$, induces the following subset of $\mathcal{C}^{\boldsymbol{\Gamma}}_V$: 
$$
\mathcal{C}_\mathbf{R} = \{\langle (v_1, ..., v_{\vectornorm{\varrho_i}}), \varrho_i\rangle | i\in [s], (v_1, ..., v_{\vectornorm{\varrho_i}})\in r_i\} 
$$
Using that notation, the densification can be understood as an operator ${\rm Dense}: 2^{\mathcal{C}^{\boldsymbol{\Gamma}}_V}\rightarrow 2^{\mathcal{C}^{\boldsymbol{\Gamma}}_V}$ such that:
\begin{equation*}
\begin{split}
{\rm Dense} (\mathcal{C}_\mathbf{R}) = \big\{\langle (v_1, ..., v_{\vectornorm{\varrho_i}}), \varrho_i\rangle |  i\in [s],  (v_1, ..., v_{\vectornorm{\varrho_i}})\in \bigcap_{h\in {\rm Hom}(\mathbf{R}, \boldsymbol{\Gamma})} h^{-1}(\varrho_i) \big\}
\end{split}
\end{equation*}
Thus, in the densification process we start from a set of constraints $\mathcal{C}_\mathbf{R}$ and simply add possible constraints to ${\rm Dense} (\mathcal{C}_\mathbf{R})$ while the set of solutions is preserved. Let us also define ${\rm Dense} (\mathcal{C}_\mathbf{R}) = \mathcal{C}^{\boldsymbol{\Gamma}}_V$ if 
${\rm Hom}(\mathbf{R}, \boldsymbol{\Gamma}) = \emptyset$. The densification operator satisfies the following conditions:
\begin{itemize}
\item  ${\rm Dense} (\mathcal{C}_\mathbf{R})\supseteq \mathcal{C}_\mathbf{R}$ (extensive)
\item ${\rm Dense} ({\rm Dense} (\mathcal{C}_\mathbf{R}))={\rm Dense} (\mathcal{C}_\mathbf{R})$ (idempotent)
\item $\mathcal{C}_{\mathbf{R}'}\subseteq \mathcal{C}_\mathbf{R} \Rightarrow {\rm Dense} (\mathcal{C}_{\mathbf{R}'})\subseteq {\rm Dense} (\mathcal{C}_\mathbf{R})$ (isotone)
\end{itemize}
Operators that satisfy these three conditions play the central role in universal algebra and are called the closure operators. There exists a duality between closure operators $o:2^S\rightarrow 2^S$ on a finite set $S$ and the so-called implicational systems (or functional dependencies) on $S$. Let us briefly describe this duality (details can be found in~\cite{CASPARD2003241}).

\begin{definition}\label{armstrong}
Let $S$ be a finite set. An implicational system $\Sigma$ on $S$ is a binary
relation $\Sigma\subseteq 2^S \times 2^S$. If $(A, B)\in \Sigma$, we write $A\rightarrow B$. A full implicational system on $S$ is an implicational system satisfying the three
following properties:
\begin{itemize}
\item  $A\rightarrow B, B\rightarrow C$  imply  $A\rightarrow C$ 
\item $A\subseteq B$  imply  $B\rightarrow A$
\item $A\rightarrow B$ and $C\rightarrow D$ imply $A\cup C \rightarrow B \cup D$.
\end{itemize}
\end{definition}
Any implicational system $\Sigma\subseteq 2^S \times 2^S$ has a minimal superset $\Sigma'\supseteq \Sigma$ that itself is a full implicational system on $S$. This system is called the closure of $\Sigma$ and is denoted by $\Sigma^\triangleright$. Let us call $\Sigma_1$ a  cover of $\Sigma_2$ if $\Sigma^\triangleright_1= \Sigma^\triangleright_2$.
\begin{theorem}[p. 264 \cite{CASPARD2003241}]\label{implicational}
Any implicational system $\Sigma\subseteq 2^S \times 2^S$ defines the closure operator $o:2^S\rightarrow 2^S$ by $o(A) = \{x\in S| A\rightarrow \{x\} \in \Sigma^\triangleright\}$.
Any closure operator $o:2^S\rightarrow 2^S$ on a finite set $S$ defines the full implicational system by $\{A\rightarrow B| B\subseteq o(A)\}$.
\end{theorem}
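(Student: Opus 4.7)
The plan is to verify the two halves of the correspondence separately, each being a routine check of the closure-operator axioms on one side against the three Armstrong-style rules on the other side, with the closure $\Sigma^{\triangleright}$ playing the role of ``consequences derivable from $\Sigma$''.

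For the first direction, given $\Sigma$ I would define $o(A) = \{x\in S \mid A\to\{x\}\in \Sigma^{\triangleright}\}$ and check each axiom of a closure operator using the three rules that $\Sigma^{\triangleright}$ satisfies. Extensivity $A\subseteq o(A)$ follows from the second rule applied with $B=A$, which yields $A\to A$ and hence $A\to\{x\}$ for every $x\in A$ (after intersecting with singletons via the second rule again). Isotonicity $A\subseteq B\Rightarrow o(A)\subseteq o(B)$ is obtained by combining $B\to A$ (rule 2) with $A\to\{x\}$ (given $x\in o(A)$) via transitivity (rule 1). For idempotence $o(o(A))=o(A)$, I would first observe that iterating the union rule gives $A\to o(A)$ (because $A\to\{y\}$ lies in $\Sigma^{\triangleright}$ for every $y\in o(A)$), and then transitivity with $o(A)\to\{x\}$ produces $A\to\{x\}$, so $o(o(A))\subseteq o(A)$; the reverse inclusion is immediate from extensivity.

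For the second direction, given a closure operator $o$ I would define $\Sigma_o=\{A\to B\mid B\subseteq o(A)\}$ and verify the three fullness conditions. Rule 2 ($A\subseteq B\Rightarrow B\to A$) follows because $A\subseteq B\subseteq o(B)$ by extensivity. Rule 1 (transitivity): from $B\subseteq o(A)$ and $C\subseteq o(B)$, isotonicity gives $o(B)\subseteq o(o(A))=o(A)$ by idempotence, hence $C\subseteq o(A)$. Rule 3 (union): from $B\subseteq o(A)$ and $D\subseteq o(C)$, isotonicity yields $o(A)\cup o(C)\subseteq o(A\cup C)$, so $B\cup D\subseteq o(A\cup C)$.

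There is really no hard step here; the only thing worth being careful about is not confusing $\Sigma$ with $\Sigma^{\triangleright}$ in the first half (one must work with the closure, since the three rules are only guaranteed for $\Sigma^{\triangleright}$), and conversely observing that the system $\Sigma_o$ constructed in the second half is automatically full, so no closure is needed on that side. I would conclude by noting that the two constructions are mutually inverse on closure operators and on full implicational systems, though the statement as phrased asks only that each construction produces an object of the desired type, which is what the verification above delivers.
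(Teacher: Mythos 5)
The paper does not prove this statement itself; it is cited from Caspard and Monjardet's survey, so there is no in-paper proof to compare against. Your proposal is a correct and standard verification: each half reduces to checking three axioms against three rules, and your deductions are sound — in particular the use of rule~2 with a singleton for extensivity, the iterated application of the union rule (valid here since $S$, and hence $o(A)$, is finite) to obtain $A\to o(A)$ for idempotence, and the monotonicity-plus-idempotence argument for transitivity of $\Sigma_o$. Your closing remark is also apt: the theorem as phrased asks only that each construction yields an object of the stated type, and the fact that the two maps are mutually inverse on full systems and closure operators (the actual Galois correspondence) is a separate statement that you correctly flag as beyond what was asked.

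One small streamlining worth noting: for extensivity you do not need to first derive $A\to A$ and then ``intersect''; rule~2 applied directly to $\{x\}\subseteq A$ gives $A\to\{x\}$ in one step. This does not affect correctness.
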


From Theorem~\ref{implicational} we obtain that the densification operator ${\rm Dense}: 2^{\mathcal{C}^{\boldsymbol{\Gamma}}_V}\rightarrow 2^{\mathcal{C}^{\boldsymbol{\Gamma}}_V}$ also corresponds to some full implicational system $\Sigma_V^{\boldsymbol{\Gamma}}\subseteq 2^{\mathcal{C}^{\boldsymbol{\Gamma}}_V} \times 2^{\mathcal{C}^{\boldsymbol{\Gamma}}_V}$. 
Note that the system $\Sigma_V^{\boldsymbol{\Gamma}}$ depends only on the set $V$ and the template $\boldsymbol{\Gamma}$, but does not depend on relations $r_i, i\in [s]$ of the relational structure $\mathbf{R}$. 

Note the densification problem described in Definition~\ref{dense-define}  can be understood as a computation of the monotone function  ${\rm Dense}:\cup_{n=1}^\infty 2^{\mathcal{C}_{[n]}^{\boldsymbol{\Gamma}} } \to \cup_{n=1}^\infty  2^{\mathcal{C}_{[n]}^{\boldsymbol{\Gamma}} }$. With a little abuse of terminology, let us define the class mP/poly as a class of monotone functions $M:\bigcup_{i=0}^\infty\{0,1\}^{n_i}\to \bigcup_{i=0}^\infty\{0,1\}^{m_i}$ for which $M(\{0,1\}^{n_i})\subseteq \{0,1\}^{m_i}$ and $M|_{\{0,1\}^{n_i}}$ can be computed by a circuit of size ${\rm poly}(n_i)$ that uses only $\vee$ and $\wedge$ in gates. Thus, ${\rm Dense}(\boldsymbol{\Gamma})\in {\rm mP/poly}$ denotes the fact that the corresponding densification operator is in mP/poly.

\section{The polynomial densification operator}
Let denote $\Sigma_n^{\boldsymbol{\Gamma}} = \Sigma_{[n]}^{\boldsymbol{\Gamma}}$. 
The most general languages with a kind of polynomial densification operator can be described as follows.
\begin{definition}\label{dense-poly2} The template $\boldsymbol{\Gamma}$ is said to have a weak polynomial densification operator, if  for any $n\in {\mathbb N}$ there exists an implicational system $\Sigma$ on $S\supseteq {\mathcal C}^{\boldsymbol{\Gamma}}_{n}$ of size $|\Sigma| = {\mathcal O}({\rm poly}(n))$ that acts on ${\mathcal C}^{\boldsymbol{\Gamma}}_{n}$ as the densification operator, i.e.  $\Sigma_n^{\boldsymbol{\Gamma}} = \{(A\rightarrow B)\in \Sigma^\triangleright| A,B\subseteq {\mathcal C}^{\boldsymbol{\Gamma}}_{n}\}$.
\end{definition}
Using database theory language~\cite{Levene}, the last definition describes such languages $\Gamma$ for which there exists an implicational system of polynomial size whose projection on ${\mathcal C}^{\boldsymbol{\Gamma}}_{n}$ coincides with $\Sigma_n^{\boldsymbol{\Gamma}}$.
Note that in Definition~\ref{dense-poly2}, a weak densification operator acts on a wider set than ${\mathcal C}^{\boldsymbol{\Gamma}}_{n}$: an addition of new literals to ${\mathcal C}^{\boldsymbol{\Gamma}}_{n}$, sometimes, allows to substantially simplify a set of implications~\cite{FISCHER1983323}. Though we are not aware of an example of a language $\Gamma$ for which any cover $\Sigma\subseteq \Sigma_n^{\boldsymbol{\Gamma}}$ of $\Sigma_n^{\boldsymbol{\Gamma}}$ is exponential in size, but still $\Gamma$ has a weak polynomial densification operator.

\section{Main results}\label{main-res}
Recall that bounded width languages are languages for which $\neg {\rm CSP}(\boldsymbol{\Gamma})$ can be recognized by a Datalog program~\cite{Feder}. Concerning the weak polynomial densification, we obtain the following result.
\begin{theorem}\label{main-bounded} For the general domain $D$, if $\boldsymbol{\Gamma}$ has a weak polynomial densification operator, then $\boldsymbol{\Gamma}$ is of bounded width. For the Boolean case, $D = \{0,1\}$, $\boldsymbol{\Gamma}$ has a weak polynomial densification operator if and only if ${\rm Pol}(\Gamma)$ contains either $\vee$, or $\wedge$, or ${\rm mjy}(x,y,z)$.
\end{theorem}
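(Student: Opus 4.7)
The theorem splits into three claims: (a) over a general finite $D$, weak polynomial densification implies bounded width; (b) over the Boolean domain, each of $\wedge$, $\vee$, or ${\rm mjy} \in {\rm Pol}(\Gamma)$ yields a weak polynomial densification operator (the ``if'' direction); (c) over the Boolean domain, weak polynomial densification implies that ${\rm Pol}(\Gamma)$ contains one of $\wedge$, $\vee$, ${\rm mjy}$ (the ``only if'' direction). Claim (c) follows from (a) combined with the lower bound used inside (a): if ${\rm Pol}(\Gamma)$ contains none of $\wedge$, $\vee$, ${\rm mjy}$, then by the paper's Schaefer dichotomy, either $\boldsymbol{\Gamma}$ is not Schaefer (so ${\rm Dense}(\boldsymbol{\Gamma})$ is already NP-hard by Theorem~\ref{Boolean-dense-thm} and no polynomial $\Sigma$ exists), or ${\rm Pol}(\Gamma)$ contains $x \oplus y \oplus z$ and $\boldsymbol{\Gamma}$ falls into the affine Schaefer case to which the lower bound of (a) applies directly.

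For (a) my plan is via pp-interpretability. First I would verify that ``$\boldsymbol{\Gamma}$ has a weak polynomial densification operator'' is preserved under pp-interpretations: a pp-interpreted $\boldsymbol{\Gamma}'$-instance translates to a $\boldsymbol{\Gamma}$-instance with only polynomial blow-up in variables and constraints, and a polynomial $\Sigma$ for $\boldsymbol{\Gamma}$ pulls back to a polynomial $\Sigma'$ for $\boldsymbol{\Gamma}'$ by absorbing the pp-auxiliary variables and the pp-definitions into the literal set $S$. Second, by the Barto--Kozik characterization, any non-bounded-width $\boldsymbol{\Gamma}$ pp-interprets the affine template of linear equations over some prime field $\mathbb{F}_p$. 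It thus suffices to rule out a weak polynomial densification operator for that affine template. For this, a hypothetical polynomial $\Sigma$ compiles into a polynomial-size monotone Boolean circuit computing ${\rm Dense}$ on $2^{{\mathcal C}^{\boldsymbol{\Gamma}}_n}$ (each implication becoming a conjunctive fan-in, with an overall disjunctive structure iterated polynomially many rounds, as already hinted by the paper's discussion of mP/poly), while deciding whether an affine system implies a designated unary equation is a parity-like monotone function which by classical monotone circuit lower bounds cannot be realized in polynomial monotone size; contradiction.

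For (b) I would construct $\Sigma$ explicitly in each case. If $\wedge \in {\rm Pol}(\Gamma)$, every $\varrho_i$ is Horn and the densification coincides with iterated unit propagation. I introduce auxiliary literals $[v = a]$ for each $v \in V$ and $a \in \{0,1\}$, and for every possible constraint $c = \langle (v_1, \ldots, v_m), \varrho_i\rangle \in {\mathcal C}^{\boldsymbol{\Gamma}}_n$ I write the constantly many (depending only on $\Gamma$) Horn rules of the form $\{c\} \cup \{[v_{j_1} = a_1], \ldots, [v_{j_k} = a_k]\} \to [v_\ell = b]$ expressing the Horn consequences of $\varrho_i$, together with rules that re-encode a consistent pattern of forced literals as a fresh ${\mathcal C}^{\boldsymbol{\Gamma}}_n$-constraint. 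Since $|{\mathcal C}^{\boldsymbol{\Gamma}}_n| = O({\rm poly}(n))$, one obtains $|\Sigma| = O({\rm poly}(n))$, and Horn-propagation completeness yields $\Sigma^\triangleright|_{{\mathcal C}^{\boldsymbol{\Gamma}}_n} = \Sigma_n^{\boldsymbol{\Gamma}}$. The case $\vee \in {\rm Pol}(\Gamma)$ is dual. When ${\rm mjy} \in {\rm Pol}(\Gamma)$, every $\varrho_i$ is bijunctive and is determined by its binary projections onto pairs of coordinates; with the same auxiliary literals I write polynomially many rules implementing the 2SAT implication graph together with its transitive closure, and correctness follows from 2SAT propagation completeness.

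The main obstacle is the lower bound for the affine template in (a): the intuition that ``parity is not monotone'' is clear, but translating it into a rigorous super-polynomial implicational-system lower bound requires isolating a concrete slice of the densification function to which a classical monotone circuit lower bound (Razborov--Andreev style) applies, and verifying that the auxiliary literals in $S \setminus {\mathcal C}^{\boldsymbol{\Gamma}}_n$ cannot conspire with the enlarged signature to subvert the monotone-circuit compilation. A secondary subtlety lies in the ${\rm mjy}$ case of (b), where one must confirm that 2SAT propagation restricted to auxiliary unit literals fully recovers the densification on ${\mathcal C}^{\boldsymbol{\Gamma}}_V$, which relies on bijunctive relations being reconstructible from their binary projections.
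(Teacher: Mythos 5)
Your high-level decomposition into (a), (b), (c) matches the paper's, and the key idea for (a) --- compile a polynomial-size $\Sigma$ into a polynomial-size monotone circuit and then invoke a monotone lower bound --- is precisely what the paper does in Theorem~\ref{no-linear}. However, there are three substantive gaps.

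First, for (a) you propose to pp-interpret the affine template and then appeal to ``classical monotone circuit lower bounds'' (Razborov--Andreev) for a parity-like function. This is the gap you yourself flag, and it is genuine: the Razborov and Andreev bounds are for specific combinatorial functions (clique, perfect matching), and there is no off-the-shelf classical monotone lower bound for ``affine system implies a designated equation.'' The paper sidesteps the need to reprove anything here: after compiling $\Sigma$ into a monotone circuit for $\neg{\rm CSP}(\boldsymbol{\Gamma})$, Corollary~\ref{corol-mon} directly cites Proposition~5.1 of~\cite{ability_to_count}, which says precisely that a polynomial-size monotone circuit for $\neg{\rm CSP}(\Gamma')$ forces the associated variety to omit the unary and affine types, i.e.\ bounded width. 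That citation is the load-bearing step, and your proposal leaves it unfilled.

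Second, for (b) the Horn case is wrong as stated. You write that ``the densification coincides with iterated unit propagation'' and propose rules of the form $\{c\}\cup\{[v_{j_1}=a_1],\dots\}\to[v_\ell=b]$. But the densification is not a unit-propagation fixpoint: from $\mathcal{C}_{\mathbf R}=\{x\wedge y\to a,\ a\wedge y\to z\}$ no literal $[v=b]$ is forced, yet $x\wedge y\to z$ must be added to the densification. What is required is \emph{constraint-level} inference that derives new triples in $\varrho_{x\wedge y\to z}$ from old ones without blowing up the left-hand arity (naive resolution of $x\wedge y\to a$ with $a\wedge z\to w$ gives an arity-$4$ antecedent, which cannot be re-encoded as a ternary constraint). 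The paper's proof of Theorem~\ref{Horn-case} handles exactly this: the Datalog rules (7)--(16) are constraint-level Horn clauses, and the derivation-tree surgery in Section~\ref{proof-of-horn} is there to show that arity-bounded rules suffice. To make your auxiliary-literal route work you would have to index literals by hypothesis sets $H$ (``$[v=a]$ under the assumption $H$'') and add contradiction-detection and re-encoding rules, none of which appears in your sketch. Also note that the paper does not directly write down implication systems for arbitrary $\Gamma$; it only does so for the canonical Horn, dual-Horn and 2-SAT templates and then pulls the property back to arbitrary $\Gamma\subseteq\langle\cdot\rangle$ via A-reducibility (Theorems~\ref{SAT-case-reduce}, \ref{Horn-case-reduce} and Theorem~\ref{reduction}(b)). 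Your ``constantly many rules depending only on $\Gamma$'' elides this transfer.

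Third, for (c) your non-Schaefer branch argues ``${\rm Dense}(\boldsymbol{\Gamma})$ is NP-hard, so no polynomial $\Sigma$ exists.'' But the weak polynomial densification operator is a \emph{non-uniform} object: a family $\Sigma_n$ of polynomial size that need not be uniformly computable. Such a family places ${\rm Dense}(\boldsymbol{\Gamma})$ in ${\rm P/poly}$, not in ${\rm P}$, so NP-hardness alone would give a contradiction only under the additional assumption ${\rm NP}\not\subseteq{\rm P/poly}$, which is stronger than the paper's standing hypothesis ${\rm P}\ne{\rm NP}$. The paper is careful about this: Theorem~\ref{non-schaefer} proves ${\rm Dense}(\boldsymbol{\Gamma})\notin{\rm mP/poly}$ \emph{unconditionally} for non-Schaefer $\Gamma$, exactly so that the ``only if'' direction of Theorem~\ref{main-bounded} carries no hidden complexity assumption. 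Your route omits this distinction.
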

The first part of the latter theorem is proved in Section~\ref{strong-dense} and the Boolean case is considered in Section~\ref{bounded-width-main}. 
We also prove the following statement for the sparsification problem (Section~\ref{DS-section}).
\begin{theorem}\label{main-sparse} If $\Sigma\subseteq \Sigma^{\boldsymbol{\Gamma}}_V$ is a cover of $\Sigma^{\boldsymbol{\Gamma}}_V$ that can be computed in time ${\rm poly}(|V|)$, then given an instance ${\mathbf R} = (V, r_1, ..., r_s)$ of ${\rm Sparse}(\boldsymbol{\Gamma})$, all elements of ${\rm Min} (\mathbf{R}, \boldsymbol{\Gamma})$ can be listed in time ${\mathcal O} ({\rm poly}(|V|)\cdot |{\rm Min} (\mathbf{R}, \boldsymbol{\Gamma})|^2)$.
\end{theorem}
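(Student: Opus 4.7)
The plan is to reduce the sparsification task to the problem of enumerating all minimal keys of a closure operator on a finite set, and then to invoke the classical algorithm of Luchessi and Osborn~\cite{LUCCHESI1978270}.

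For the reduction, by Theorem~\ref{old-mine} two instances $\mathbf{R}$ and $\mathbf{R}'$ over $V$ share the same homomorphism set to $\boldsymbol{\Gamma}$ precisely when ${\rm Dense}(\mathcal{C}_{\mathbf{R}}) = {\rm Dense}(\mathcal{C}_{\mathbf{R}'})$. Hence $\mathbf{R}' \in {\rm Min}(\mathbf{R},\boldsymbol{\Gamma})$ if and only if $\mathcal{C}_{\mathbf{R}'}$ is an inclusion-minimal subset of $\mathcal{C}^{\boldsymbol{\Gamma}}_V$ whose densification equals $F := {\rm Dense}(\mathcal{C}_{\mathbf{R}})$. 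In database-theoretic terms, such sets are exactly the minimal keys of the closed set $F$ with respect to the full implicational system $\Sigma^{\boldsymbol{\Gamma}}_V$. Since $\Sigma$ is a cover of $\Sigma^{\boldsymbol{\Gamma}}_V$, the closure operators induced by $\Sigma$ and by $\Sigma^{\boldsymbol{\Gamma}}_V$ agree on subsets of $\mathcal{C}^{\boldsymbol{\Gamma}}_V$, so we may work with $\Sigma$ throughout.

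As a preliminary computation, observe that $|\mathcal{C}^{\boldsymbol{\Gamma}}_V| \le \sum_{i=1}^s |V|^{\vectornorm{\varrho_i}} = {\rm poly}(|V|)$ for fixed $\boldsymbol{\Gamma}$, and $|\Sigma| = {\rm poly}(|V|)$ by hypothesis. A standard linear-time attribute-closure algorithm applied to $\Sigma$ then yields $F$ in time ${\rm poly}(|V|)$. With $F$ in hand, we run the Luchessi-Osborn procedure on the pair $(\Sigma, F)$; this algorithm emits new minimal keys one by one, in each step performing ${\rm poly}(|V|)$ closure computations to locate a fresh candidate and deduplicating against the list of previously emitted keys, which contributes the second factor of $|{\rm Min}(\mathbf{R},\boldsymbol{\Gamma})|$. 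Summing over all keys gives the claimed total bound ${\mathcal O}({\rm poly}(|V|)\cdot |{\rm Min}(\mathbf{R},\boldsymbol{\Gamma})|^2)$.

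The main obstacle is the first step: one must carefully translate between the structure-theoretic notion of minimality in Definition~\ref{minimality} and the key-theoretic notion used by Luchessi-Osborn. The identification relies on Theorems~\ref{old-mine} and~\ref{implicational}, together with the observation that sparsifications need not be substructures of $\mathbf{R}$, which is exactly what lets candidates $\mathcal{C}_{\mathbf{R}'}$ range over all of $F$ rather than only over $\mathcal{C}_{\mathbf{R}}$. Once this reduction is pinned down, the remaining analysis is precisely the standard correctness and complexity proof of the Luchessi-Osborn algorithm, imported verbatim from~\cite{LUCCHESI1978270}.
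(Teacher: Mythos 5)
Your proposal is correct and takes essentially the same route as the paper: identify $\mathrm{Min}(\mathbf{R},\boldsymbol{\Gamma})$ with the set of minimal keys for $\mathrm{Dense}(\mathcal{C}_{\mathbf{R}})$ in the implicational system $\Sigma$, then invoke Luchessi--Osborn and absorb the size factors into the stated bound. The paper spells out both directions of the minimality-vs-minimal-key equivalence in more detail, but there is no substantive difference in approach.
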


\section{Weak polynomial densification implies bounded width}\label{strong-dense}
A set of languages with a weak polynomial densification operator turns out to be a subset  of a set of languages of bounded width. Below we demonstrate this fact in two steps. First, we prove that from a weak polynomial densification operator one can construct a polynomial-size monotone circuit that computes $\neg{\rm CSP}(\boldsymbol{\Gamma})$. Further, we exploit a well-known result from a theory of fixed-template CSPs connecting the bounded width with such circuits.
\begin{theorem}\label{no-linear} If
$\boldsymbol{\Gamma}$ has a weak polynomial densification operator, then the decision version of $\neg{\rm CSP}(\boldsymbol{\Gamma})$ can be computed by a polynomial-size monotone circuit.
\end{theorem}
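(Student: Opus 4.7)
The plan is to reduce the decision of $\neg{\rm CSP}(\boldsymbol{\Gamma})$ to testing whether the densification of $\mathcal{C}_{\mathbf{R}}$ exhausts $\mathcal{C}^{\boldsymbol{\Gamma}}_{[n]}$, and then to realise that densification by unrolling the standard closure procedure for the polynomial-size implicational system $\Sigma$ supplied by Definition~\ref{dense-poly2}. Since adding a constraint to $\mathbf{R}$ only shrinks ${\rm Hom}(\mathbf{R},\boldsymbol{\Gamma})$, the predicate $\neg{\rm CSP}(\boldsymbol{\Gamma})$ is monotone in the natural bitwise encoding $(x_c)_{c \in \mathcal{C}^{\boldsymbol{\Gamma}}_{[n]}}$ of $\mathcal{C}_{\mathbf{R}}$, so aiming for a monotone circuit is at least a priori reasonable.

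First I would separate off the constant-preserving case. If some $a \in D$ satisfies $(a,\dots,a) \in \varrho_i$ for every $i$, then the constant map $V \to \{a\}$ is always a homomorphism, so $\neg{\rm CSP}(\boldsymbol{\Gamma}) \equiv 0$ and a trivial constant circuit suffices. Otherwise, the paper's remark on non-constant-preserving languages (immediately after Definition~\ref{dense-define}) gives the equivalence ${\rm Dense}(\mathcal{C}_{\mathbf{R}}) = \mathcal{C}^{\boldsymbol{\Gamma}}_{[n]}$ iff ${\rm Hom}(\mathbf{R},\boldsymbol{\Gamma}) = \emptyset$, so $\neg{\rm CSP}(\mathbf{R},\boldsymbol{\Gamma}) = \bigwedge_{c \in \mathcal{C}^{\boldsymbol{\Gamma}}_{[n]}} [\,c \in {\rm Dense}(\mathcal{C}_{\mathbf{R}})\,]$.

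To compute the predicate $c \in {\rm Dense}(\mathcal{C}_{\mathbf{R}})$ I use $\Sigma$. By Theorem~\ref{implicational} together with Definition~\ref{dense-poly2}, for any $A \subseteq \mathcal{C}^{\boldsymbol{\Gamma}}_{[n]}$ and $c \in \mathcal{C}^{\boldsymbol{\Gamma}}_{[n]}$ one has $c \in {\rm Dense}(A)$ iff $c$ lies in the $\Sigma$-closure $\text{cl}_\Sigma(A) \subseteq S$. Only literals of $S$ that actually occur in $\Sigma$ can ever enter $\text{cl}_\Sigma(A)$, so we may assume $|S| = {\rm poly}(n)$, and $\text{cl}_\Sigma(A)$ is reached after at most $|S|$ iterations of the monotone step $X \mapsto X \cup \bigcup_{(A' \rightarrow B') \in \Sigma,\; A' \subseteq X} B'$. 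Introducing an input wire $x_c$ for each $c \in \mathcal{C}^{\boldsymbol{\Gamma}}_{[n]}$ and hard-wiring every $s \in S \setminus \mathcal{C}^{\boldsymbol{\Gamma}}_{[n]}$ to $0$ at stage $0$, I implement each iteration by
\begin{equation*}
y^{t+1}_s \;=\; y^t_s \;\vee\; \bigvee_{(A' \rightarrow B') \in \Sigma,\; s \in B'} \;\bigwedge_{a \in A'} y^t_a,
\end{equation*}
which uses only $\wedge$ and $\vee$. Each stage has size ${\rm poly}(n)$, there are at most ${\rm poly}(n)$ stages, and the output wire $\bigwedge_{c \in \mathcal{C}^{\boldsymbol{\Gamma}}_{[n]}} y^{|S|}_c$ computes $\neg{\rm CSP}(\mathbf{R},\boldsymbol{\Gamma})$ from the bits $(x_c)_c$, as required.

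The only genuine conceptual obstacle is the reduction in the first step: the equivalence ${\rm Dense}(\mathcal{C}_{\mathbf{R}}) = \mathcal{C}^{\boldsymbol{\Gamma}}_{[n]} \Leftrightarrow {\rm Hom}(\mathbf{R},\boldsymbol{\Gamma}) = \emptyset$ fails precisely for constant-preserving $\Gamma$ (where constant maps keep ${\rm Hom}$ non-empty without forcing every tuple into every relation), which is exactly what forces the case split. Beyond that, the construction is a routine unrolling of the naïve closure algorithm for functional dependencies, and the polynomial bound on $|\Sigma|$ delivers the polynomial bound on the circuit size.
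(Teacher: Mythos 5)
Your proof is correct and follows essentially the same approach as the paper: handle the constant-preserving case trivially, reduce $\neg{\rm CSP}(\boldsymbol{\Gamma})$ to the test ${\rm Dense}(\mathcal{C}_{\mathbf{R}})=\mathcal{C}^{\boldsymbol{\Gamma}}_{[n]}$, and unroll the naive closure procedure for the polynomial-size implicational system $\Sigma$ into $|S|$ layers of monotone $\vee/\wedge$ gates. The paper's proof differs only in presentational detail (it verifies the unrolling's correctness by an explicit DAG/derivation-tree argument), while your hard-wiring of the auxiliary literals $S\setminus\mathcal{C}^{\boldsymbol{\Gamma}}_{[n]}$ to $0$ at stage $0$ coincides with the paper's choice of input encoding ${\mathbf v}_{\mathbf R}$.
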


\begin{proof}If $\boldsymbol{\Gamma}$ is constant-preserving, then $\neg{\rm CSP}(\boldsymbol{\Gamma})$ is trivial, i.e. we can assume that $\boldsymbol{\Gamma}$ is not constant-preserving.
Let $\Sigma_n$ be an implicational system on $S_n\supseteq {\mathcal C}^{\boldsymbol{\Gamma}}_{n}$ such that $\Sigma_n^\triangleright \cap (2^{{\mathcal C}^{\boldsymbol{\Gamma}}_{n}})^2 = \Sigma_n^{\boldsymbol{\Gamma}}$ and $|\Sigma_n| = \mathcal{O}({\rm poly}(n))$. We can assume that $S_n =  \mathcal{O}({\rm poly}(n))$ and every rule in $\Sigma_n$ has a form $A\to x$, $x\in S_n$. Let ${\mathbf R}$ be an instance of ${\rm CSP}(\boldsymbol{\Gamma})$ and $x\in {\mathcal C}^{\boldsymbol{\Gamma}}_{n}$. The rule ${\mathcal C}_{\mathbf R}\to x$ is in $\Sigma_n^\triangleright$ if and only if $x$ is derivable from ${\mathcal C}_{\mathbf R}$ using implications from $\Sigma_n$.
Formally, the latter means that there is a directed acyclic graph $T = (U,E)$ with a labeling function $l: U\to S_n$ such that: (a) there is only one element with no outcoming edges, the root $r\in U$, and it is labeled by $x$, i.e. $l(r)=x$, (b) every node with no incoming edges is labeled by an element of ${\mathcal C}_{\mathbf R}$, (c) if a node $v\in U$ has incoming edges $(c_1, v), ..., (c_{d(v)}, v)$, then $(\{l(c_1), ..., l(c_{d(v)})\}\to l(v))\in \Sigma_n$. Moreover, the depth of $T$ is bounded by $|S_n|$, because $x$ can be derived from ${\mathcal C}_{\mathbf R}$ in no more than $|S_n|$ steps if no attribute is derived twice. 

Consider a monotone circuit $M$ whose set of variables, denoted by $W$, consists of $|S_n|$ layers $U_1, ..., U_{|S_n|}$ such that $i$-th layer is a set of variables $v_{i,a}, a\in S_{n}$. For any rule $b\in S_n$ and every $i\in [|S_n|-1]$ there is a monotone logic gate $$v_{i+1, b} = v_{i, b} \vee \bigvee_{(\{a_1, ..., a_l\}\to b)\in \Sigma_n} (v_{i, a_1}\wedge v_{i, a_2}\wedge ... \wedge v_{i, a_l})$$ that computes the value of $v_{i+1, b}$ from inputs of the previous layer.

Any instance ${\mathbf R}$ can be encoded as a Boolean vector ${\mathbf v}_{\mathbf R}\in \{0,1\}^{S_{n}}$ such that ${\mathbf v}_{\mathbf R}(x)=1$ if and only if $x\in {\mathcal C}_{\mathbf R}$. If we set input variables of $M$ to ${\mathbf v}_{\mathbf R}$, i.e. $v_{1,a} := {\mathbf v}_{\mathbf R}(a), a\in S_{n}$, then output variables of $M$, i.e.  $v_{|S_n|,a}, a\in S_{n}$, will satisfy: for any $x\in {\mathcal C}^{\boldsymbol{\Gamma}}_{n}$, $v_{|S_n|,x}=1$ if and only if $({\mathcal C}_{\mathbf R}\to x)\in \Sigma_n^\triangleright$. Let us briefly outline the proof of the last statement.

Indeed, let $v_{|S_n|,x}=1$, $x\in {\mathcal C}_{\mathbf R}$.  For any variable $v_{i,b}\in W$ such that $v_{i,b}=1$ let us define ${\rm early}(v_{i,b}) = v_{i',b}$ where $v_{i',b}=1$ and $v_{i'-1,b}=0$ and ${\rm source}(v_{i,b}) = \{v_{i'-1, a_1}, v_{i'-1, a_2}, ... , v_{i'-1, a_l}\}$ if $(\{a_1, ..., a_l\}\to b)\in \Sigma_n$ and $v_{i'-1, a_1} = 1, v_{i'-1, a_2} = 1, ... , v_{i'-1, a_l}=1$.
Then, a rooted directed acyclic graph $T_x = (U,E)$ with a labeling $l: U\to S_n$ can be constructed  by defining $U = \{{\rm early}(v_{i,b})|v_{i,b}\in W, v_{i,b}=1\}$ and $l({\rm early}(v_{i,b}))=b$. Edges of $T_x$ are defined in the following way: if $v_{i',b} = {\rm early}(v_{i,b})$ and $v_{i',b}$ was assigned to 1 by the gate $v_{i', b} = v_{i'-1, b} \vee (v_{i'-1, a_1}\wedge v_{i'-1, a_2}\wedge ... \wedge v_{i'-1, a_l})\vee \cdots$ where ${\rm source}(v_{i,b}) = \{v_{i'-1, a_1}, v_{i'-1, a_2}, ... , v_{i'-1, a_l}\}$, then we connect nodes ${\rm early}(v_{i'-1, a_2}), ..., {\rm early}(v_{i'-1, a_l})$ to $v_{i',b}$ by incoming edges. It is easy to see that $T_x$ will satisfy properties (a), (b), (c) listed above. The opposite is also true, if there is a directed acyclic graph with a root $x$ that satisfies the properties (a), (b), (c), then $v_{|S_n|,x}=1$.

Thus, the expression $o = \bigwedge_{x\in {\mathcal C}^{\boldsymbol{\Gamma}}_n} v_{|S_n|,x}$ equals 1 if and only if $({\mathcal C}_{\mathbf R}\to {\mathcal C}_n^{\boldsymbol{\Gamma}}) \in \Sigma^{\boldsymbol{\Gamma}}_n$. Since $\boldsymbol{\Gamma}$ is not constant-preserving, the last means ${\rm Hom}({\mathbf R}, \boldsymbol{\Gamma}) = \emptyset$. Thus, ${\rm Hom}({\mathbf R}, \boldsymbol{\Gamma}) = \emptyset$ was computed by the polynomial-size monotone circuit $M$ (with an additional gate).
\end{proof}
The core of $\Gamma = \{\varrho_1, ..., \varrho_s\}$ is defined as ${\rm core}(\Gamma) = \{\varrho_1\cap g(D)^{n_1}, ..., \varrho_s\cap g(D)^{n_s}\}$, the constraint language over $g(D)$, where $g \in {\rm Hom}(\boldsymbol{\Gamma}, \boldsymbol{\Gamma})$ is such that $g(x) = g(g(x))$ and $$|g(D)| = \min\limits_{h\in {\rm Hom}(\boldsymbol{\Gamma}, \boldsymbol{\Gamma})} |h(D)|.$$
\begin{corollary}\label{corol-mon}
If $\boldsymbol{\Gamma}$ has a weak polynomial densification operator, then ${\rm core}(\Gamma)$ is of bounded width.
\end{corollary}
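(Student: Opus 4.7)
The plan is to combine Theorem~\ref{no-linear} with the standard algebraic characterization of bounded width via monotone circuit complexity. First, Theorem~\ref{no-linear} already yields a polynomial-size monotone circuit family $\{M_n\}$ computing $\neg{\rm CSP}(\boldsymbol{\Gamma})$.

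Next, I would transfer this family to one deciding $\neg{\rm CSP}({\rm core}(\boldsymbol{\Gamma}))$. Let $g\in{\rm Hom}(\boldsymbol{\Gamma},\boldsymbol{\Gamma})$ be the idempotent retraction with $g(g(x))=g(x)$ and $|g(D)|$ minimal, whose image realizes ${\rm core}(\Gamma)$. Since $\varrho_i\cap g(D)^{n_i}\subseteq \varrho_i$, every instance $\mathbf{R}$ of ${\rm CSP}({\rm core}(\boldsymbol{\Gamma}))$ is simultaneously a valid instance of ${\rm CSP}(\boldsymbol{\Gamma})$, and composing any homomorphism into $\boldsymbol{\Gamma}$ with $g$ gives one into ${\rm core}(\boldsymbol{\Gamma})$; hence ${\rm Hom}(\mathbf{R},\boldsymbol{\Gamma})=\emptyset$ if and only if ${\rm Hom}(\mathbf{R},{\rm core}(\boldsymbol{\Gamma}))=\emptyset$. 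Therefore feeding the Boolean encoding of $\mathbf{R}$ into $M_n$ correctly decides $\neg{\rm CSP}({\rm core}(\boldsymbol{\Gamma}))$ on the relevant subset of inputs, so $\neg{\rm CSP}({\rm core}(\boldsymbol{\Gamma}))\in{\rm mP/poly}$.

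Finally, I would invoke the ``well-known result'' alluded to just before Theorem~\ref{no-linear}: for a core template, membership of $\neg{\rm CSP}$ in ${\rm mP/poly}$ implies bounded width. The converse direction is immediate (each round of a Datalog program can be implemented by monotone $\vee/\wedge$ gates), but the direction needed here rests on the algebraic theory of polymorphism clones, where non-bounded-width cores are shown to carry an affine or otherwise linear obstruction that precludes monotone computability of $\neg{\rm CSP}$. Applying this to ${\rm core}(\boldsymbol{\Gamma})$ delivers the conclusion.

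The main obstacle is not internal to the argument itself but the appeal to this external characterization of bounded width, which is the deep part. The circuit-transfer step is routine bookkeeping verifying that the monotone circuit for the ambient template $\boldsymbol{\Gamma}$ restricts correctly through the retraction $g$ to the core; once that is in place, the corollary reduces immediately to the black-box algebraic theorem.
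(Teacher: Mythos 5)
Your proof follows the same route as the paper: Theorem~\ref{no-linear} produces a polynomial-size monotone circuit for $\neg{\rm CSP}(\boldsymbol{\Gamma})$, and the conclusion is then read off from the Larose--Valeriote--Z\'adori characterization of bounded width via monotone circuits. Your transfer step to the core (observing that ${\rm Hom}(\mathbf{R},\boldsymbol{\Gamma})=\emptyset$ iff ${\rm Hom}(\mathbf{R},{\rm core}(\boldsymbol{\Gamma}))=\emptyset$, so the same circuit family decides $\neg{\rm CSP}({\rm core}(\boldsymbol{\Gamma}))$) is correct and in fact slightly more direct than the paper's. The one imprecision is in the final black-box step: Proposition~5.1 of~\cite{ability_to_count}, which the paper actually invokes, is stated for the idempotent expansion $\Gamma'={\rm core}(\Gamma)\cup\{\{(a)\}\mid a\in g(D)\}$, not for the bare core. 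The paper therefore inserts an extra monotone reduction --- simulating the constant relations via the pp-definable diagonal relation $\rho$ --- to establish $\neg{\rm CSP}(\Gamma')\in{\rm mP/poly}$ before applying Proposition~5.1 and the omitting-types characterization of~\cite{Bulatov10boundedrelational,BartoKozik}. Your version of the black box (``for a core template, $\neg{\rm CSP}\in{\rm mP/poly}$ implies bounded width'') is true, but it silently absorbs exactly that reduction; to match the cited literature you would need to either carry it out or explicitly note that adjoining constants to a core preserves ${\rm mP/poly}$-membership of $\neg{\rm CSP}$.
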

\begin{proof}
If $\boldsymbol{\Gamma}$ has a weak polynomial densification operator, then by Theorem~\ref{no-linear} $\neg{\rm CSP}(\Gamma)$ can be solved by a polynomial-size monotone circuit. Therefore, $\neg{\rm CSP}(\Gamma')$ where $\Gamma' = {\rm core}(\Gamma)\cup \{\{(a)\}| a\in g(D)\}$ can also be solved by a polynomial-size monotone circuit. We can use the standard reduction of $\neg{\rm CSP}(\Gamma')$ to $\neg{\rm CSP}({\rm core}(\Gamma)\cup \{\rho\})$ where $\rho\in \langle {\rm core}(\Gamma) \rangle$ is defined as $\{\langle \pi(a) \rangle_{a\in g(D)}| \pi: g(D)\to g(D), \pi\in {\rm pol}({\rm core}(\Gamma))\}$. 

The algebra ${\mathbb A}_{\Gamma'} = (g(D), {\rm pol}(\Gamma'))$ generates the variety of algebras ${\textit var}({\mathbb A}_{\Gamma'})$ (in the sense of Birkhoff's HSP theorem). Proposition 5.1. from~\cite{ability_to_count} states that if $\neg{\rm CSP}(\Gamma')$ can be computed by a polynomial-size monotone circuit, then ${\textit var}({\mathbb A}_{\Gamma'})$ omits both the unary and the affine type. According to a well-known result~\cite{Bulatov10boundedrelational,BartoKozik} this is equivalent to stating that $\Gamma'$ is of bounded width.
\end{proof}
%For NP-hard templates $\boldsymbol{\Gamma}$, the hardness in computing the basis $\Sigma$ of the full implicational system $\Sigma_n^{\boldsymbol{\Gamma}}$ can be strengthened, under the condition ${\rm NP} \not\subseteq {\rm P/poly}$~\cite{Sanjeev}.Indeed, let us denote $S_{\boldsymbol{\Gamma}}(n) = \min\limits_{\Sigma\subseteq \Sigma_n^{\boldsymbol{\Gamma}}, \Sigma^\triangleright = \Sigma_n^{\boldsymbol{\Gamma}}} |\Sigma|$. It is easy to see that ${\rm CSP}(\boldsymbol{\Gamma})$ is in ${\rm P/poly}$ if $S(n)={\mathcal O}({\rm poly}(n))$. Indeed, there exists a basis $\Sigma_n \subseteq \Sigma_n^{\boldsymbol{\Gamma}}$ such that $\Sigma_n^\triangleright = \Sigma_n^{\boldsymbol{\Gamma}}$ and $|\Sigma_n| = S(n)$. Then, this basis $\Sigma_n$ can be used as ``an advice'' of size $S(n)$ using which ${\rm Hom}(\mathbf{R}, \boldsymbol{\Gamma})\ne \emptyset$ for $|V|=n$ can be checked in time ${\mathcal O}({\rm poly}(n,S(n)))$ using the algorithm from the previous paragraph. Thus, if $S_{\boldsymbol{\Gamma}}(n) = {\mathcal O}({\rm poly}(n))$, then ${\rm NP} \subseteq {\rm P/poly}$.Therefore, it is unlikely that $S_{\boldsymbol{\Gamma}}(n) = {\mathcal O}({\rm poly}(n))$ for NP-hard $\boldsymbol{\Gamma}$.

\section{Algebraic approach to the classification of languages with a polynomial densification operator}

%In the same way as it was done for the fixed-template CSP, the counting CSP, the VCSP, etc., 
Constraint languages for which the densification problem ${\rm Dense}(\boldsymbol{\Gamma})$ is tractable can be classified using tools of universal algebra. An analogous approach can be applied to classify languages with a weak polynomial densification operator. %cannot be applied directly, therefore, we will consider slightly more general languages with {\em a polynomial densification operator}.

\begin{definition} Let $\boldsymbol{\Gamma} = (D,\varrho_1, ..., \varrho_s)$ and $\tau =\{\pi_1, ..., \pi_s\}$. A $k$-ary relation $\rho\in \langle \Gamma\rangle$ is called strongly reducible to $\Gamma$ if there exists a quantifier-free primitive positive formula $\Xi(x_1,\cdots, x_n)$\footnote{A quantifier-free pp-formula is a pp-formula without existential quantification.} (over $\tau$) and $\delta\subseteq D^n$ for some $n\geq k$ 
such that ${\rm pr}_{1:k}\Xi^{\boldsymbol{\Gamma}}=\rho$, ${\rm pr}_{1:k}\delta = D^k\setminus \rho$ and $\Xi^{\boldsymbol{\Gamma}}\cup \delta\in \langle \Gamma\rangle$. A $k$-ary relation $\rho\in \langle \Gamma\rangle$ is called A-reducible to $\Gamma$ if $\rho = \rho_1\cap \cdots \cap \rho_l$, where $\rho_i\in \langle \Gamma\rangle$ is strongly reducible to $\Gamma$ for $i\in [l]$.
\end{definition}

\begin{definition} A constraint language $\Gamma$ is called an A-language if any $\rho\in \langle \Gamma\rangle$ is A-reducible to $\Gamma$. 
\end{definition}
Examples of A-languages are stated in the following theorems, whose proofs can be found in Section~\ref{Horn-case-reduce-proof}.
\begin{theorem}\label{SAT-case-reduce} Let $\boldsymbol{\Gamma} = (D=\{0,1\}, \varrho_1, \varrho_2, \varrho_3)$ where $ \varrho_1 = \big \{ ({x, y} ) | x \vee y \big \} $,
 $ \varrho_2 = \big \{ ({x, y} ) | \neg x \vee y \big \} $
 and $ \varrho_3 = \big \{{ ({x, y} ) | \neg x \vee \neg y} \big \} $. Then, $\Gamma$ is an  A-language.
\end{theorem}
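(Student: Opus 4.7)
The plan is to prove the stronger statement that every $\rho \in \langle \Gamma \rangle$ is already strongly reducible to $\Gamma$, which trivially gives A-reducibility (take $l=1$ in the definition). My strategy is to choose the simplest possible witness: set $n = k$, exhibit a quantifier-free pp-formula $\Xi(x_1, \ldots, x_k)$ with $\Xi^{\boldsymbol{\Gamma}} = \rho$, and then take $\delta = D^k \setminus \rho$. With these choices, the projection conditions ${\rm pr}_{1:k}\Xi^{\boldsymbol{\Gamma}} = \rho$ and ${\rm pr}_{1:k} \delta = D^k \setminus \rho$ hold automatically, and $\Xi^{\boldsymbol{\Gamma}} \cup \delta = \{0,1\}^k$ is trivially pp-definable (by the empty conjunction), hence lies in $\langle \Gamma \rangle$.

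The whole difficulty thus reduces to showing that every $\rho \in \langle \Gamma \rangle$ can be written as a conjunction of atomic formulas of the form $\varrho_j(x_a, x_b)$ with possibly $a = b$. I would argue this in two steps. First, a direct check shows that the majority operation $\text{mjy}(x,y,z) = (x\wedge y)\vee (y\wedge z)\vee (x\wedge z)$ preserves each of $\varrho_1, \varrho_2, \varrho_3$ (this is the classical fact that 2-SAT clauses are median-closed). By Geiger's theorem (Theorem~\ref{T:1}), every $\rho \in \langle \Gamma \rangle$ is then also preserved by $\text{mjy}$. Second, I would invoke the classical characterization (essentially Schaefer's description of the clone generated by majority on $\{0,1\}$) stating that a Boolean relation preserved by $\text{mjy}$ equals the intersection of all 2-clauses and unary literals it satisfies. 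Each such 2-clause is, up to polarity, of the form $\varrho_1(x_a, x_b)$, $\varrho_2(x_a, x_b)$, or $\varrho_3(x_a, x_b)$; the unary literals $x_i = 1$ and $x_i = 0$ are expressible as $\varrho_1(x_i, x_i)$ and $\varrho_3(x_i, x_i)$ respectively.

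Putting these steps together gives the required $\Xi$ as the conjunction of all such atoms witnessed by $\rho$. With $n = k$ and $\delta = D^k \setminus \rho$, the three conditions defining strong reducibility are verified, so every $\rho \in \langle \Gamma \rangle$ is strongly reducible, and $\Gamma$ is an A-language.

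The main (mild) obstacle is the classical median-closedness characterization; if I did not want to simply cite it, I would give a short direct argument using the fact that for any $\mathbf{a} \in \{0,1\}^k \setminus \rho$ there must exist indices $i,j$ and a 2-clause $c$ on $x_i, x_j$ satisfied by $\rho$ but not by $\mathbf{a}$ — otherwise one could recover $\mathbf{a}$ from tuples of $\rho$ by coordinatewise application of $\text{mjy}$ to three well-chosen tuples sharing $\mathbf{a}$'s 2-dimensional projections, contradicting $\mathbf{a} \notin \rho$.
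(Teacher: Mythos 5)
Your proof is correct and rests on the same two ingredients as the paper's (the 2-decomposability of mjy-closed relations, and the observation that with $\delta = D^k\setminus\rho$ the union $\Xi^{\boldsymbol{\Gamma}}\cup\delta$ is simply $D^k\in\langle\Gamma\rangle$), but it arranges them a bit differently. The paper proves strong reducibility only for binary $\rho\subseteq D^2$ by intersecting those members of $\{\varrho_1,\varrho_2,\varrho_3,\varrho_2^T\}$ containing $\rho$, lifts each projection $\rho_{ij}$ to the $n$-ary $r_{ij}$, and concludes that $\rho=\bigcap_{i,j}r_{ij}$ is A-reducible — so the intersection structure of A-reducibility does actual work in the argument. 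You instead bundle all defining $2$-clauses into a single quantifier-free $\Xi$ with $\Xi^{\boldsymbol{\Gamma}}=\rho$, set $n=k$ and $\delta=D^k\setminus\rho$, and conclude that $\rho$ itself is strongly reducible, so A-reducibility holds trivially with $l=1$. This is marginally cleaner and yields the slightly stronger conclusion that every relation in $\langle\Gamma\rangle$ is strongly (not merely A-) reducible to $\Gamma$, though it buys nothing extra for the theorem as stated. One small presentational note: treating unary literals as a separate atom type is unnecessary — as the paper's binary case shows, the intersection of the containing members of $\{\varrho_1,\varrho_2,\varrho_3,\varrho_2^T\}$ already recovers every binary $\rho_{ij}$, including singletons and the empty relation, so $\varrho_1(x,x)$ and $\varrho_3(x,x)$ never need to be invoked explicitly.
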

\begin{theorem}\label{Horn-case-reduce} Let $\boldsymbol{\Gamma} = (D=\{0,1\}, \{(0)\}, \{(1)\}, \varrho_{x \wedge y\rightarrow z})$ where $\varrho_{x \wedge y\rightarrow z} = \{(a_1, a_2, a_{3})\in D^{3}| a_1 a_{2}\leq a_{3}\}$. Then, $\Gamma$ is an  A-language.
\end{theorem}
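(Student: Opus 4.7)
Since the only idempotent Boolean polymorphisms preserving $\{(0)\}$, $\{(1)\}$ and $\varrho_{x\wedge y\to z}$ form the clone generated by $\wedge$, Theorem~\ref{T:1} identifies $\langle \Gamma\rangle$ with the class of Boolean relations closed under componentwise $\wedge$, i.e.\ the Horn relations. Every such relation is a conjunction of basic Horn clauses of three kinds: (i) unit clauses $x_i = 0$ or $x_i = 1$, (ii) definite clauses $x_{i_1}\wedge\cdots\wedge x_{i_m}\to x_{i_0}$, and (iii) pure negative clauses $\neg(x_{i_1}\wedge\cdots\wedge x_{i_m})$. Viewed as $k$-ary relations (with the remaining coordinates free) and intersected, these recover any Horn $\rho$, so A-reducibility of $\rho$ reduces to strong reducibility of each basic clause, which is what I plan to establish.

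For a unit clause or a definite clause with $m\le 2$, the atomic relation already equals $\rho$ on $n=k$ variables, so I take $\Xi$ to be the single constraint ($\{0\}(x_i)$, $\{1\}(x_i)$, $\varrho_{x\wedge y\to z}(x_{i_1},x_{i_1},x_{i_0})$ or $\varrho_{x\wedge y\to z}(x_{i_1},x_{i_2},x_{i_0})$) and $\delta = D^k\setminus\rho$; then $\Xi^{\boldsymbol{\Gamma}}\cup\delta = D^k \in\langle\Gamma\rangle$ trivially. For a definite clause with $m\ge 3$, I introduce auxiliary variables $y_1,\ldots,y_{m-2}$ and force each $y_j$ to be exactly the partial meet $x_{i_1}\wedge\cdots\wedge x_{i_{j+1}}$ by the three constraints
\[
\varrho_{x\wedge y\to z}(y_{j-1},x_{i_{j+1}},y_j),\ \varrho_{x\wedge y\to z}(y_j,y_j,y_{j-1}),\ \varrho_{x\wedge y\to z}(y_j,y_j,x_{i_{j+1}}),
\]
with $y_0:=x_{i_1}$, closing with the terminal constraint $\varrho_{x\wedge y\to z}(y_{m-2},x_{i_m},x_{i_0})$. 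The pure negative clause is handled identically, with $x_{i_0}$ replaced by a fresh variable $z$ pinned by $\{0\}(z)$. In both cases I let $\delta$ consist of all cylindrical extensions of the missing $D^k$-tuples in which every auxiliary variable is set to $1$ (and $z=0$ in the negative case).

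The core technical step is to verify that $\Xi^{\boldsymbol{\Gamma}}\cup\delta$ is $\wedge$-closed. The pinning makes each aux coordinate a function of the constrained $x$-coordinates, so $\Xi^{\boldsymbol{\Gamma}}$ is in bijection with $\rho$ on those coordinates; meets within $\Xi^{\boldsymbol{\Gamma}}$ stay inside because the Boolean product commutes with $\wedge$, and meets within $\delta$ are immediate. For a cross-meet $\bar a\wedge\bar b$ with $\bar a\in\Xi^{\boldsymbol{\Gamma}}$ and $\bar b\in\delta$, coordinate $x_{i_0}$ becomes $0$, and a case split on whether the resulting $x$-part lies in $\rho$ either keeps the meet in $\Xi^{\boldsymbol{\Gamma}}$ (because the aux coordinates of $\bar a$ are already the partial products the meet requires) or forces $x_{i_1}=\cdots=x_{i_m}=1$ in $\bar a$, so that the aux coordinates of $\bar a$ are all $1$ and the meet lies in $\delta$. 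The main hurdle I foresee is precisely this cross-meet: with the naive pp-formula $\varrho_{x\wedge y\to z}(x_{i_1},x_{i_2},y_1)\wedge\varrho_{x\wedge y\to z}(y_1,x_{i_3},x_{i_0})$ the aux $y_1$ is only bounded from below, and one can exhibit meets with $\bar b$ that land outside $\Xi^{\boldsymbol{\Gamma}}\cup\delta$; the extra $\le$-constraints $\varrho_{x\wedge y\to z}(y_j,y_j,\cdot)$ are precisely what restores closure.
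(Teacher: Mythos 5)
Your proposal is correct and follows essentially the same route as the paper: both reduce a general $\wedge$-closed relation to an intersection of Horn clauses (via Horn/McKinsey), show each clause strongly reducible by introducing auxiliary variables that are pinned \emph{exactly} to the running partial conjunctions $x_{i_1}\wedge\cdots\wedge x_{i_{j+1}}$, and take $\delta$ to be the all-ones completion of the excluded tuples; you correctly diagnose that the one-sided bound $y_j\ge y_{j-1}\wedge x_{i_{j+1}}$ alone fails $\wedge$-closure and that the extra upper bounds are what rescue it, which is precisely the paper's key observation. The only cosmetic differences are: (i) you enforce $y_j\le y_{j-1}$ and $y_j\le x_{i_{j+1}}$ by exploiting idempotency, $\varrho_{x\wedge y\to z}(y_j,y_j,\cdot)$, whereas the paper uses a designated variable $O$ pinned to $1$ via $\{(1)\}$ and writes $\varrho_{x\wedge y\to z}(y_j,O,\cdot)$ — your encoding avoids the extra coordinate; and (ii) the paper treats the pure negative clause separately with a simpler construction (one-sided lower bounds on the auxiliaries suffice because the complement is a single all-ones tuple), while you reuse the exact-pinning machinery with a fresh $z$ pinned to $0$, which is also correct but not needed there.
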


Reducibility of a relation to a language is an interesting notion because of its property stated in the following theorem.

\begin{theorem}\label{reduction} Let $\Gamma, \Gamma'$ be constraint languages such that $\Gamma'\subseteq \langle\Gamma\rangle$, and every relation in $\Gamma'$ is A-reducible to $\Gamma$. Then: 
\begin{itemize}
\item[{\em (a)}] ${\rm Dense}(\boldsymbol{\Gamma}')$ is polynomial-time Turing reducible to  ${\rm Dense}(\boldsymbol{\Gamma})$;
%\item[{\em (b)}] there exists a function $\phi: {\mathbb N}\to {\mathbb N}$, $\phi(n)\geq n$ such that $\phi(n) = {\mathcal O}({\rm poly}(n))$ and an implicational system $\Sigma$ on ${\mathcal C}^{\boldsymbol{\Gamma}'}_{\phi(n)}$ such that $|\Sigma| = S_{\boldsymbol{\Gamma}} (\phi(n))+{\rm poly}(n)$ and $\Sigma^n_{\boldsymbol{\Gamma}'} = \{(A\rightarrow B)\in \Sigma^\triangleright| A,B\subseteq {\mathcal C}^{\boldsymbol{\Gamma}'}_{n}\}$;
%\item[{\em (b)}] if $\langle\Gamma\rangle = \langle\Gamma'\rangle$ and $\boldsymbol{\Gamma}$ has a strong polynomial densification operator, then $\boldsymbol{\Gamma}'$ also has a strong polynomial densification operator;
%\item[{\em (b)}] if $\boldsymbol{\Gamma}$ has a polynomial densification operator, then $\boldsymbol{\Gamma}'$ also has a polynomial densification operator;
\item[{\em (b)}] if $\boldsymbol{\Gamma}$ has a weak polynomial densification operator, then $\boldsymbol{\Gamma}'$ also has a weak polynomial densification operator;
\item[{\em (c)}] if ${\rm Dense}(\boldsymbol{\Gamma})\in {\rm mP/poly}$, then ${\rm Dense}(\boldsymbol{\Gamma}')\in {\rm mP/poly}$.
\end{itemize}
\end{theorem}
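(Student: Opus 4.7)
The plan is to use A-reducibility via two standard gadgets built from the data of the definition. For each strongly reducible $\rho_t$ appearing in a decomposition $\rho' = \rho_1 \cap \cdots \cap \rho_l$ of some $\rho' \in \Gamma'$, fix the witnessing quantifier-free pp-formula $\Xi_t$, the set $\delta_t \subseteq D^{n_t}$, and a pp-formula $\Phi_t$ over $\Gamma$ defining $\sigma_t := \Xi_t^{\boldsymbol{\Gamma}} \cup \delta_t \in \langle\Gamma\rangle$. The key structural fact I shall use is that, since ${\rm pr}_{1:k}\Xi_t^{\boldsymbol{\Gamma}} = \rho_t$ and ${\rm pr}_{1:k}\delta_t = D^k \setminus \rho_t$ partition $D^k$, the sets $\Xi_t^{\boldsymbol{\Gamma}}$ and $\delta_t$ are themselves disjoint, and any $n_t$-tuple of $\sigma_t$ certifies membership of its first $k$ coordinates in $\rho_t$ precisely when it lies in the $\Xi_t$-part.

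For (a), given an instance $\mathbf{R}'$ of ${\rm CSP}(\boldsymbol{\Gamma}')$, I construct one $\boldsymbol{\Gamma}$-instance $\mathbf{R}$ on an extended variable set as follows. For every existing constraint $\rho'(v_{i_1},\ldots,v_{i_k})$ of $\mathbf{R}'$ and every $\rho_t$ in its decomposition, add the \emph{encoding gadget} $\Xi_t(v_{i_1},\ldots,v_{i_k},\bar{w}^{(i,t)})$ with fresh auxiliary variables $\bar{w}^{(i,t)}$. For every candidate $\rho'(v_{j_1},\ldots,v_{j_k}) \in \mathcal{C}^{\boldsymbol{\Gamma}'}_{V'}$ (polynomially many) and every $\rho_t$ in its decomposition, add the \emph{test gadget} $\Phi_t(v_{j_1},\ldots,v_{j_k},\bar{u}^{(j,t)})$ with fresh auxiliary variables $\bar{u}^{(j,t)}$. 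A single call to the ${\rm Dense}(\boldsymbol{\Gamma})$-oracle on $\mathbf{R}$ returns a densified $\Gamma$-instance, and I declare the candidate $\rho'(v_{j_1},\ldots,v_{j_k})$ to lie in the output of ${\rm Dense}(\boldsymbol{\Gamma}')$ on $\mathbf{R}'$ iff, for every $\rho_t$, all $\Gamma$-constraints comprising $\Xi_t(v_{j_1},\ldots,v_{j_k},\bar{u}^{(j,t)})$ appear in the oracle output. Correctness follows from the partition property: homomorphisms of $\mathbf{R}$ are homomorphisms of $\mathbf{R}'$ extended by aux-variable choices, and the $\bar{u}^{(j,t)}$ of a homomorphism $h$ are forced into the $\Xi_t$-part of $\sigma_t$ iff $(h(v_{j_1}),\ldots,h(v_{j_k})) \in \rho_t$; hence every $\Xi_t$-constraint of the test gadget lies in the densification iff every $h$ of $\mathbf{R}'$ satisfies $\rho_t(v_{j_1},\ldots,v_{j_k})$, and intersecting over $t$ recovers $\rho'$-implication.

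Parts (b) and (c) implement the same scheme at the structural level. For (b), pick $n' = {\rm poly}(n)$ accommodating one encoding-aux block per element of $\mathcal{C}_n^{\boldsymbol{\Gamma}'}$ and one test-aux block per candidate, and let $\Sigma_{n'}$ be the polynomial-size implicational system on $S_{n'} \supseteq \mathcal{C}_{n'}^{\boldsymbol{\Gamma}}$ witnessing the weak polynomial densification of $\boldsymbol{\Gamma}$. On the augmented literal set $\mathcal{C}_n^{\boldsymbol{\Gamma}'} \cup S_{n'}$, the lifted system retains every rule of $\Sigma_{n'}$ and adds three families: \emph{activation} rules $\rho'(v_{i_1},\ldots) \to c$ for each $\Gamma$-constraint $c$ of the corresponding encoding gadget; \emph{axiom} rules $\emptyset \to c$ for each $\Gamma$-constraint $c$ of each test gadget (keeping them always on); and \emph{decoding} rules $\bigcup_t \{c : c \text{ is a conjunct of } \Xi_t(v_{j_1},\ldots,v_{j_k},\bar{u}^{(j,t)})\} \to \{\rho'(v_{j_1},\ldots,v_{j_k})\}$, one per candidate. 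The argument of (a) shows that the restriction of the closure of this system to $\mathcal{C}_n^{\boldsymbol{\Gamma}'}$ equals $\Sigma_n^{\boldsymbol{\Gamma}'}$, and its size remains polynomial in $n$. For (c), the same wiring becomes monotone gates: ORs compute $\mathcal{C}_{n'}^{\boldsymbol{\Gamma}}$-indicators from $\mathcal{C}_n^{\boldsymbol{\Gamma}'}$-inputs (test-gadget indicators hardwired to $1$), the polynomial-size monotone circuit for ${\rm Dense}(\boldsymbol{\Gamma})$ runs on these, and each output candidate $\rho'(v_{j_1},\ldots,v_{j_k})$ is an AND of the relevant densified $\Xi_t$-indicators.

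The step I expect to require the most care is the correctness of the decoding: one must verify that the $\Gamma$-densification output contains \emph{all} --- not merely some --- $\Xi_t$-constraints of a given test gadget exactly when $\rho_t(v_{j_1},\ldots,v_{j_k})$ is $\mathbf{R}'$-implied. This is precisely where the disjointness ${\rm pr}_{1:k}\Xi_t^{\boldsymbol{\Gamma}} \cap {\rm pr}_{1:k}\delta_t = \emptyset$ built into strong reducibility is indispensable; were the two parts of $\sigma_t$ to share any first-$k$ projection, test-aux values falling into the non-$\Xi_t$ part of $\sigma_t$ could incidentally satisfy a partial set of $\Gamma$-constraints of $\Xi_t$ and corrupt the readout. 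The remaining work is routine bookkeeping on fresh auxiliary variables to prevent encoding and test gadgets from interfering across distinct constraints.
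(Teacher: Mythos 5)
Your proposal is correct and follows essentially the same two-gadget construction as the paper: encode the $\Gamma'$-constraints with pp-formulas over $\Gamma$ on the original occurrences, attach a test gadget implementing $\Xi_t^{\boldsymbol{\Gamma}}\cup\delta_t$ at every candidate tuple, and then read off the densification of $\boldsymbol{\Gamma}'$ by asking whether all $\Xi_t$-conjuncts on the test-gadget variables land in the $\Gamma$-densification, exactly as in the lemma defining $S_i(v_1,\ldots,v_k)$; parts (b) and (c) lift this via the same activation/axiom/decoding rule families (the paper's $\mathfrak{R}_1,\mathfrak{R}_2,\mathfrak{R}_3$) respectively into an implicational system and a monotone circuit. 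The only cosmetic difference is that you encode $\rho'$ as $\bigwedge_t\Xi_t$ with fresh auxiliaries per factor, whereas the paper keeps a generic pp-definition $\Phi_i$ for the encoding step, and your phrasing of why the partition ${\rm pr}_{1:k}\delta_t = D^k\setminus\rho_t$ is indispensable emphasizes disjointness rather than the two-sided requirement (no over-coverage for the readout, no under-coverage so that the test gadget imposes no restriction on $V$), but the mathematical content is the same.
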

\begin{proof} 
Since $\Gamma'\subseteq \langle\Gamma\rangle$, then there is $L = \{\Phi_i | \ i\in [c]\}$ where $\Phi_i$ is a primitive positive formula over the vocabulary $\tau=\{\pi_1, ..., \pi_s\}$, such that $\boldsymbol{\Gamma} = (D,\varrho_1, ..., \varrho_s)$, $\boldsymbol{\Gamma}' = (D,\Phi^{\boldsymbol{\Gamma}}_1, ..., \Phi^{\boldsymbol{\Gamma}}_c)$.

Let $\mathbf{R}' = (V,r'_1, ..., r'_c)$ be an instance of ${\rm Dense}(\boldsymbol{\Gamma}')$. Our goal is to compute a maximal instance $(\mathbf{R}''= (V,r''_1, ..., r''_c), \boldsymbol{\Gamma}')$ such that $r''_i\supseteq r'_i, i\in [c]$ and ${\rm Hom}(\mathbf{R}'', \boldsymbol{\Gamma}') = {\rm Hom}(\mathbf{R}', \boldsymbol{\Gamma}')$, or in other words, to compute ${\rm Dense}(\mathcal{C}_{\mathbf{R}' })$.  

First, let us introduce some notations. Let $\Psi$ be any primitive positive formula over $\tau$, i.e. $\Psi = \exists x_{k+1}... x_{l} \bigwedge_{t\in [N]}\pi_{j_{t}} (x_{o_{t1}}, x_{o_{t2}}, ...)$ where $j_{t}\in [s]$ and $o_{tx}\in [l]$ and $\mathbf{a} = (a_1, ..., a_{k})$ be a tuple of objects.
Let us introduce a set of new distinct objects ${\rm NEW}(\mathbf{a},\Psi)=\{a_{k+1}, ..., a_{l}\}$. Note that the sets ${\rm NEW}(\mathbf{a}, \Psi)$ are disjoint for different $(\mathbf{a}, \Psi)$ (also, ${\rm NEW}(\mathbf{a}, \Psi)\cap V = \emptyset$). 
For a tuple $\mathbf{a} = (a_1, ..., a_{k})$, the constraint that an assignment to $(a_1, ..., a_{k})$ is in $\Psi^{\boldsymbol{\Gamma}}$ can be expressed by a collection of constraints $\mathfrak{C}({\mathbf a}, \Psi) = \{\langle (a_{o_{t1}}, a_{o_{t2}}, ...),  \varrho_{j_{t}}\rangle \mid t\in [N]\}$. In other words, we require that an image of $(a_{o_{t1}}, a_{o_{t2}}, ...)$ is in $\varrho_{j_{t}}$ for any $t\in [N]$.  Note that $\mathfrak{C}({\mathbf a}, \Psi) $ is a set of constraints over a set of variables $\{a_1, ..., a_{k}\}\cup {\rm NEW}(\Psi,\mathbf{a})$ where only relations from $\Gamma$ are allowed. 

Let us start with a proof of statement (a).  We will describe a reduction to ${\rm Dense}(\boldsymbol{\Gamma})$ that consists of two steps: first we add new variables and construct an instance of ${\rm CSP}(\boldsymbol{\Gamma})$ in the same way as it is done in the standard reduction of ${\rm CSP}(\boldsymbol{\Gamma}')$ to ${\rm CSP}(\boldsymbol{\Gamma})$; afterwards, we add new variables and constraints and form an instance of ${\rm Dense}(\boldsymbol{\Gamma})$.

First, for any $i\in [c],\mathbf{a}\in r'_i$, we add objects ${\rm NEW}(\mathbf{a}, \Phi_i)$ to the set of variables $V$ and define an extended set $M^0 = V\cup \bigcup_{i\in [c],\mathbf{a}\in r'_i}{\rm NEW}(\mathbf{a}, \Phi_i)$. Afterwards, we define a relational structure $(\mathbf{R}^0= (M^0,r^0_1, ..., r^0_s), \boldsymbol{\Gamma})$ where $\mathcal{C}_{\mathbf{R}^0} = \bigcup_{i\in [c],\mathbf{a}\in r'_i}\mathfrak{C}({\mathbf a}, \Phi_i)$. By construction, ${\rm pr}_{V}{\rm Hom}(\mathbf{R}^0, \boldsymbol{\Gamma}) = {\rm Hom}(\mathbf{R}', \boldsymbol{\Gamma}')$. Note that this reduction is standard in the algebraic approach to fixed-template CSPs. This is the first step of the construction.

Let us now consider a relation $\Phi_i^{\boldsymbol{\Gamma}}$ and assume that its arity is $k$.
According to the assumption, $\Phi_i^{\boldsymbol{\Gamma}}$ is A-reducible to $\Gamma$. Therefore, $\Phi_i^{\boldsymbol{\Gamma}} = \varrho_{i1}\cap \cdots \cap \varrho_{il}$, where $\varrho_{ij}$ is strongly reducible to $\Gamma$ for $j\in [l]$. Thus, there exists a quantifier-free primitive positive formula over $\tau$, $\Xi_j$, involving $r_j$ variables, and $\delta_j\subseteq D^{r_j}$, such that $\varrho_{ij} = {\rm pr}_{1:k}\Xi_j^{\boldsymbol{\Gamma}}$ and ${\rm pr}_{1:k}\delta_j = D^{k}\setminus \varrho_{ij}$ and $\delta_j\cup \Xi_j^{\boldsymbol{\Gamma}}\in \langle \Gamma\rangle$. Since $\gamma_j = \delta_j\cup \Xi_j^{\boldsymbol{\Gamma}}$ is pp-definable over $\Gamma$, there exists a primitive positive formula over $\tau$, $\exists x_{r_j+1}\cdots x_{p_j}\Theta_j (x_1, \cdots, x_{p_j})$ where $\Theta_j$ is quantifier-free, such that $(\exists x_{r_j+1}\cdots x_{p_j}\Theta_j (x_1, \cdots, x_{p_j}))^{\boldsymbol{\Gamma}} = \delta_j\cup \Xi_j^{\boldsymbol{\Gamma}}$. Let us introduce a set of constraints:
$$
\mathfrak{C}(V, \Phi_i) = \bigcup_{(a_1, ..., a_{k})\in V^k} \bigcup_{j\in [l]} \mathfrak{C}\big((a_1, ..., a_{k}), \exists x_{k+1}, \cdots,x_{p_j}\Theta_j (x_1, \cdots, x_{p_j})\big).
$$
over a set of variables $$M_i = V\cup \bigcup_{(a_1, ..., a_{k})\in V^k} \bigcup_{j\in [l]} {\rm NEW}\big((a_1, ..., a_{k}), \exists x_{k+1}, \cdots,x_{p_j}\Theta_j (x_1, \cdots, x_{p_j})).$$
Due to ${\rm pr}_{1:k}\delta_j = D^{k}\setminus \varrho_{ij}$, we have ${\rm pr}_{1:k} (\delta_j\cup \Xi_j^{\boldsymbol{\Gamma}}) = D^k$. Therefore, $$(\exists x_{k+1}\cdots x_{p_j}\, \Theta_j (x_1, \cdots, x_{p_j}))^{\boldsymbol{\Gamma}} = {\rm pr}_{1:k}(\delta_j\cup \Xi_j^{\boldsymbol{\Gamma}}) = D^k.$$
Thus, the set of constraints $\mathfrak{C}(V, \Phi_i)$ does not add any restrictions on assignments of $V$ (though it adds restrictions on additional variables).

Let $\mathbf{R}= (M,r_1, ..., r_s)$ be such that $M = V \cup \bigcup_{i\in [c],\mathbf{a}\in r'_i}{\rm NEW}({\mathbf a}, \Phi_i) \bigcup_{i\in [c]} M_i$ and $\mathcal{C}_{\mathbf{R}} = \bigcup_{i\in [c],\mathbf{a}\in r'_i}\mathfrak{C}({\mathbf a}, \Phi_i) \bigcup_{i\in [c]} \mathfrak{C}(V, \Phi_i)$. By construction, ${\rm pr}_{V}{\rm Hom}(\mathbf{R}, \boldsymbol{\Gamma}) = {\rm Hom}(\mathbf{R}', \boldsymbol{\Gamma}')$. Let us treat $\mathbf{R}$ as an instance of ${\rm Dense}(\boldsymbol{\Gamma})$.

The computation of ${\rm Dense}(\mathcal{C}_{\mathbf{R}'})$ can be made by checking whether $\langle (v_1,\cdots, v_k), \Phi^{\boldsymbol{\Gamma}}_i \rangle\in {\rm Dense}(\mathcal{C}_{\mathbf{R}'})$ for any $v_1,\cdots, v_k\in V$ and a $k$-ary $\Phi^{\boldsymbol{\Gamma}}_i\in \Gamma$. 
From the following lemma it follows that such a checking can be reduced to a checking of certain conditions  of the form $\langle (u_{1}, u_{2}, ...), \varrho_{j}\rangle \in {\rm Dense}(\mathcal{C}_{\mathbf{R}})$, i.e. to the computation of $ {\rm Dense}(\mathcal{C}_{\mathbf{R}})$. 
\begin{lemma} For a $k$-ary $\Phi^{\boldsymbol{\Gamma}}_i$ and $v_1,\cdots, v_k\in V$ there is a subset $S_i(v_1,\cdots, v_k) \subseteq \mathcal{C}^{\boldsymbol{\Gamma}}_M$ (that can be computed in time ${\rm poly}(|V|)$) such that the condition $\langle (v_1,\cdots, v_k), \Phi^{\boldsymbol{\Gamma}}_i \rangle\in {\rm Dense}(\mathcal{C}_{\mathbf{R}'}) \,(\subseteq \mathcal{C}^{\boldsymbol{\Gamma}'}_V)$  is equivalent to a list of conditions $\langle (u_{1}, u_{2}, ...), \varrho_{j}\rangle \in {\rm Dense}(\mathcal{C}_{\mathbf{R}})\, (\subseteq \mathcal{C}^{\boldsymbol{\Gamma}}_M)$ for $\langle (u_{1}, u_{2}, ...), \varrho_{j}\rangle  \in S_i(v_1,\cdots, v_k)$.
\end{lemma}
\begin{proof}
Note that $\langle (v_1,\cdots, v_k), \Phi^{\boldsymbol{\Gamma}}_i \rangle\in {\rm Dense}(\mathcal{C}_{\mathbf{R}'})\subseteq \mathcal{C}^{\boldsymbol{\Gamma}'}_V$ for $v_1,\cdots, v_k\in V$ if and only if ${\rm pr}_{v_1,\cdots, v_k} {\rm Hom}(\mathbf{R}, \boldsymbol{\Gamma})\subseteq \Phi^{\boldsymbol{\Gamma}}_i$. Let us assume that we have ${\rm pr}_{v_1,\cdots, v_k} {\rm Hom}(\mathbf{R}, \boldsymbol{\Gamma})\subseteq \Phi^{\boldsymbol{\Gamma}}_i$. The definition of $\mathbf{R}$ implies that we have a set of constraints $$\mathfrak{C} ((v_1, ..., v_{k}), \exists x_{k+1}, \cdots,x_{p_j}\Theta_j (x_1, \cdots, x_{p_j}))$$ imposed on $v_1,\cdots, v_k$ and $${\rm NEW}\big((v_1, ..., v_{k}), \exists x_{k+1}, \cdots,x_{p_j}\Theta_j (x_1, \cdots, x_{p_j}))  = \{v_{k+1},\cdots, v_{p_j}\}$$ (how $\Phi_i$ and $\Theta_{j}, j\in [l]$ are related is described above). Since $\Phi_i^{\boldsymbol{\Gamma}} = \varrho_{i1}\cap \cdots \cap \varrho_{il}$, we conclude ${\rm pr}_{v_1,\cdots, v_k} {\rm Hom}(\mathbf{R}, \boldsymbol{\Gamma})\subseteq \varrho_{ij}, j\in [l]$. Therefore, ${\rm pr}_{v_1,\cdots, v_{p_j}} {\rm Hom}(\mathbf{R}, \boldsymbol{\Gamma})\subseteq \{{\mathbf x}\in \Theta_j^{ \boldsymbol{\Gamma}} \mid {\mathbf x}_{1:k}\in \varrho_{ij}\}$, that is ${\rm pr}_{v_1,\cdots, v_{r_j}} {\rm Hom}(\mathbf{R}, \boldsymbol{\Gamma})\subseteq \{{\mathbf x}_{1:r_j}\mid {\mathbf x}\in \Theta_j^{ \boldsymbol{\Gamma}}, {\mathbf x}_{1:k}\in \varrho_{ij}\} = \Xi_j^{ \boldsymbol{\Gamma}}$. Since $\Xi_j$ is a quantifier-free primitive positive formula over $\tau$, then the fact ${\rm pr}_{v_1,\cdots, v_{r_j}} {\rm Hom}(\mathbf{R}, \boldsymbol{\Gamma})\subseteq \Xi_j^{ \boldsymbol{\Gamma}}$ can be expressed as $(h(v_1),\cdots, h(v_{r_j}))\in \Xi^{\boldsymbol{\Gamma}}_j$ for any $h\in {\rm Hom}(\mathbf{R}, \boldsymbol{\Gamma})$. In other words, if  $\Xi_j = \exists x_{k+1}... x_{l} \bigwedge_{t\in [N]}\pi_{w_{t}} (x_{o_{t1}}, x_{o_{t2}}, ...)$, then $\langle (v_{o_{t1}}, v_{o_{t2}}, ...), \varrho_{w_t}\rangle \in {\rm Dense}(\mathcal{C}_{\mathbf{R}})\subseteq \mathcal{C}^{\boldsymbol{\Gamma}}_V$ for any $t\in [N]$. Let us set
$$
S_i(v_1,\cdots, v_k) = \{\langle (v_{o_{t1}}, v_{o_{t2}}, ...), \varrho_{w_t}\rangle \mid \Xi_j = \exists x_{k+1}... x_{l} \bigwedge_{t\in [N]}\pi_{w_{t}} (x_{o_{t1}}, x_{o_{t2}}, ...), j\in [l]\}
$$
In fact, we proved $$
\langle (v_1,\cdots, v_k), \Phi^{\boldsymbol{\Gamma}}_i \rangle\in {\rm Dense}(\mathcal{C}_{\mathbf{R}'}) \Rightarrow S_i(v_1,\cdots, v_k)\subseteq {\rm Dense}(\mathcal{C}_{\mathbf{R}}).$$
It can be easily checked that the last chain of arguments can be reversed, and   
$$S_i(v_1,\cdots, v_k)\subseteq {\rm Dense}(\mathcal{C}_{\mathbf{R}})\Rightarrow \langle (v_1,\cdots, v_k), \Phi^{\boldsymbol{\Gamma}}_i \rangle\in {\rm Dense}(\mathcal{C}_{\mathbf{R}'}).
$$
\end{proof}

Thus, statement (a) is proved.

Statement (b) directly follows from the previous reduction.
Suppose $\boldsymbol{\Gamma}$ has a weak polynomial densification operator, i.e. there is a finite $S_n\supseteq \mathcal{C}_n^{\boldsymbol{\Gamma}}$ and an implicational system $\Delta_n \subseteq 2^{S_n}\times 2^{S_n}$ of size $|\Delta_n| = {\mathcal O}({\rm poly}(n))$ that acts on ${\mathcal C}^{\boldsymbol{\Gamma}}_{n}$ as the densification operator, i.e.  $\Sigma_n^{\boldsymbol{\Gamma}} = \{(A\rightarrow B)\in \Delta_n^\triangleright| A,B\subseteq {\mathcal C}^{\boldsymbol{\Gamma}}_{n}\}$.

If $V = [n]$, then 
$X = V \cup \bigcup_{i\in [c],\mathbf{a}=(a_1,a_2, \cdots,), a_i\in V}{\rm NEW}({\mathbf a}, \Phi_i) \bigcup_{i\in [c]} M_i$ ($M_i$ are defined above) 
is a superset of $V$ whose size is bounded by a polynomial of $n$. Therefore, w.l.o.g. we can assume $X = [m]$ where $m=|X|={\mathcal O}({\rm poly} (n))$. 
Let $\Delta_m$ be an implicational system on $S_m\supseteq {\mathcal C}^{\boldsymbol{\Gamma}}_{m}$ such that $|\Delta_m| = {\mathcal O}({\rm poly} (m))$ and $o_{\Delta_m}(S) = \{x\in {\mathcal C}^{\boldsymbol{\Gamma}}_{m} | (S\rightarrow x)\in  \Delta_m^\triangleright\}$ acts as the densification operator on subsets of ${\mathcal C}^{\boldsymbol{\Gamma}}_{m}$. 
Since $\Delta_m\subseteq 2^{S_m} \times 2^{S_m}$,  we can interpret $\Delta_m$ as an implicational system on $S'_m = S_m\cup {\mathcal C}^{\boldsymbol{\Gamma'}}_{n}$, i.e. we include ${\mathcal C}^{\boldsymbol{\Gamma'}}_{n}$ into a set of literals of $\Delta_m$.
Let us now add to $\Delta_m$ new implications by the following rule: for $\Phi_i = \exists x_{k+1}... x_{l} \bigwedge_{t\in [N]}\pi_{j_{t}} (x_{o_{t1}}, x_{o_{t2}}, ...)$, $\mathbf{a}\in [n]^{k}$ and the corresponding new $l-k$ variables ${\rm NEW}(\mathbf{a}, \Phi_i) = \{a_{k+1}, ..., a_{l}\}$ we add $R(\mathbf{a}, \Phi_i): \langle \mathbf{a}, \Phi_i^{\boldsymbol{\Gamma}} \rangle\rightarrow \{\langle(a_{o_{t1}}, a_{o_{t2}}, ...), \varrho_{j_{t}} \rangle | t\in [N]\}$. Let us denote 
$$
\mathfrak{R}_1 = \bigcup_{i\in [c], \mathbf{a} = (a_1, a_2,...): a_i\in V} \{R(\mathbf{a}, \Phi_i)\}.
$$
The second kind of implications that we need to add to $\Delta_m$ is
$$
\mathfrak{R}_2 = \bigcup_{i\in [c]}\{\emptyset \to \mathfrak{C}(V, \Phi_i) \}.
$$
The last set of implications, $\mathfrak{R}_3$, is defined by
$$\mathfrak{R}_3 = \{(S_i(v_1,\cdots, v_k) \to \langle (v_1,\cdots, v_k), \Phi^{\boldsymbol{\Gamma}}_i\rangle )\mid \langle (v_1,\cdots, v_k), \Phi^{\boldsymbol{\Gamma}}_i\rangle \in \mathcal{C}^{\boldsymbol{\Gamma}'}_{n}\},$$ where $S_i(v_1,\cdots, v_k)$ is described in the previous Lemma, i.e. it equals a set of constraints for which $S_i(v_1,\cdots, v_k)\subseteq {\rm Dense}(\mathcal{C}_{\mathbf {R}})$ is equivalent to $\langle (v_1,\cdots, v_k), \Phi^{\boldsymbol{\Gamma}}_i\rangle\in {\rm Dense}(\mathcal{C}_{\mathbf {R}'})$. Thus, we defined a set of implications $\Delta_m\cup \mathfrak{R}_1\cup \mathfrak{R}_2 \cup \mathfrak{R}_3$.
Let us denote a new system by $\Sigma_n$. By the construction of $\Sigma_n$, we have $|\Sigma_n| = {\mathcal O}({\rm poly} (n))$. 

Given $\mathcal{C}_{\mathbf {R}'}$, using implications from $\mathfrak{R}_1$, one can derive the set of constraints $\mathcal{C}_{\mathbf {R}^0}$ ($\mathbf {R}^0$ is defined above), and using  implications from $\mathfrak{R}_2$ one completes the set of derivable literals to $\mathcal{C}_{\mathbf {R}}$. Then, using initial rules of $\Delta_m$, one can derive from $\mathcal{C}_{\mathbf {R}}$ its closure ${\rm Dense}(\mathcal{C}_{\mathbf {R}})$. Finally, using implications from $\mathfrak{R}_3$ one can derive all constraints from ${\rm Dense}(\mathcal{C}_{\mathbf {R}'})$. It is not hard to prove that  $x\in \mathcal{C}_n^{\boldsymbol{\Gamma}'}$ is derivable from $\mathcal{C}_{\mathbf {R}'}$ if and only if $x\in{\rm Dense}(\mathcal{C}_{\mathbf {R}'})$.

Thus, $\boldsymbol{\Gamma}'$ also has a weak polynomial densification operator. 
Note that implications $\mathfrak{R}_2 \cup \mathfrak{R}_3$ are all from $\Sigma^{\boldsymbol{\Gamma\cup \Gamma'}}_{m}$, but an implication $R(\mathbf{a}, \Phi_i)\in \mathfrak{R}_1$ is not, in general, from $\Sigma^{\boldsymbol{\Gamma\cup \Gamma'}}_{m}$. 

Statement (c) directly follows from the fact that the function $Q: 2^{\mathcal{C}^{{\mathbf \Gamma}'}_V}\to 2^{\mathcal{C}^{\mathbf \Gamma}_M}$ such that $Q(C_{{\mathbf R}'})=C_{\mathbf R}$ is monotone and can be computed by a polynomial-size monotone circuit.
\end{proof}

\section{DS-basis and algorithms for ${\rm Dense}(\boldsymbol{\Gamma})$ and ${\rm Sparse}(\boldsymbol{\Gamma})$}\label{DS-section}
The notion of DS-basis is a formalization of templates for which a small cover of $\Sigma_n^{\boldsymbol{\Gamma}}$ not only exists but can also  be computed efficiently.  
%This motivates the following definition.
\begin{definition} A fixed template $\boldsymbol{\Gamma}$ is called a DS-basis, if there exists an algorithm $\mathcal{A}$ that solves in time ${\mathcal O}({\rm poly}(n))$ the task with:
\begin{itemize}
\item An instance: a natural number $n\in {\mathbb N}$;
\item An output: an implicational system $\Sigma\subseteq \Sigma_{n}^{\boldsymbol{\Gamma}}$ such that $\Sigma^\triangleright= \Sigma_n^{\boldsymbol{\Gamma}}$.
\end{itemize}
\end{definition}

\begin{theorem}\label{dense-algebra}
For any DS-basis $\boldsymbol{\Gamma}$ there is an algorithm $\mathcal{A}_1$ that, given an instance $\mathbf{R}$ of ${\rm Dense}(\boldsymbol{\Gamma})$, solves the densification problem for $(\mathbf{R}, \boldsymbol{\Gamma})$ in time ${\mathcal O}({\rm poly}(|V|))$.
\end{theorem}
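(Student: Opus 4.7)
The plan is to reduce the computation of ${\rm Dense}(\mathcal{C}_{\mathbf{R}})$ to the classical problem of computing the closure of a set under a small implicational system, which is solvable by forward chaining in polynomial time. The DS-basis hypothesis guarantees we can produce such a small cover of $\Sigma_n^{\boldsymbol{\Gamma}}$ in polynomial time, after which forward chaining handles the rest.

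Concretely, given $\mathbf{R} = (V, r_1, \ldots, r_s)$, algorithm $\mathcal{A}_1$ first sets $n = |V|$ and invokes the DS-basis algorithm $\mathcal{A}$ on $n$ to obtain an implicational system $\Sigma \subseteq \Sigma_n^{\boldsymbol{\Gamma}}$ with $|\Sigma| = {\mathcal O}({\rm poly}(n))$ and $\Sigma^\triangleright = \Sigma_n^{\boldsymbol{\Gamma}}$. It then extracts $\mathcal{C}_{\mathbf{R}}$ from $\mathbf{R}$ and runs the standard forward-chaining loop: initialise $C := \mathcal{C}_{\mathbf{R}}$, and while there exists $(A \to B) \in \Sigma$ with $A \subseteq C$ and $B \not\subseteq C$, update $C := C \cup B$. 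Upon termination, $\mathcal{A}_1$ outputs the relational structure $\mathbf{R}' = (V, r'_1, \ldots, r'_s)$ defined by $r'_i := \{(v_1, \ldots, v_{\vectornorm{\varrho_i}}) \mid \langle (v_1, \ldots, v_{\vectornorm{\varrho_i}}), \varrho_i\rangle \in C\}$.

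Correctness rests on Theorem~\ref{implicational} together with the standard completeness of Armstrong-style forward chaining. By construction the set produced by the loop equals $\{x \in \mathcal{C}_n^{\boldsymbol{\Gamma}} \mid (\mathcal{C}_{\mathbf{R}} \to \{x\}) \in \Sigma^\triangleright\}$, and since $\Sigma^\triangleright = \Sigma_n^{\boldsymbol{\Gamma}}$ this set is precisely ${\rm Dense}(\mathcal{C}_{\mathbf{R}})$; Theorem~\ref{old-mine} then identifies the resulting $\mathbf{R}'$ with the densification of $\mathbf{R}$. The unsatisfiable case ${\rm Hom}(\mathbf{R}, \boldsymbol{\Gamma}) = \emptyset$ is handled automatically: by the convention ${\rm Dense}(\mathcal{C}_{\mathbf{R}}) = \mathcal{C}_V^{\boldsymbol{\Gamma}}$, the rule $\mathcal{C}_{\mathbf{R}} \to \mathcal{C}_n^{\boldsymbol{\Gamma}}$ belongs to $\Sigma_n^{\boldsymbol{\Gamma}} = \Sigma^\triangleright$, so forward chaining saturates $C$ to $\mathcal{C}_n^{\boldsymbol{\Gamma}}$.

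For the running-time bound, note that $|\mathcal{C}_n^{\boldsymbol{\Gamma}}| = \sum_{i=1}^s n^{\vectornorm{\varrho_i}} = {\mathcal O}({\rm poly}(n))$ since the arities are bounded by constants determined by the fixed template. Each iteration either halts or strictly enlarges $C$, so the loop executes ${\mathcal O}({\rm poly}(n))$ times; each iteration scans the ${\mathcal O}({\rm poly}(n))$ rules of $\Sigma$, and each rule involves polynomially many literals. Producing $\Sigma$ and reading off $\mathbf{R}'$ are both polynomial by assumption, so the total running time is ${\mathcal O}({\rm poly}(n))$. The only conceptually non-trivial point is the soundness and completeness of forward chaining with respect to the closure $\Sigma^\triangleright$, but this is the classical database-theoretic fact recalled in~\cite{CASPARD2003241} and used implicitly throughout Section~\ref{strong-dense}; no new obstacle arises here.
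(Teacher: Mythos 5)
Your proposal is correct and follows essentially the same route as the paper: obtain a polynomial-size cover $\Sigma$ from the DS-basis algorithm, compute the closure of $\mathcal{C}_{\mathbf{R}}$ under $\Sigma$, and read off $\mathbf{R}'$. The only cosmetic difference is that the paper invokes Beeri and Bernstein's algorithm as a black box (once per attribute $x \in \mathcal{C}_V^{\boldsymbol{\Gamma}}$ to test $\mathcal{C}_{\mathbf{R}} \to x \in \Sigma^\triangleright$), whereas you inline the underlying forward-chaining closure computation directly and compute the whole closure in a single pass — both yield the same ${\mathcal O}({\rm poly}(|V|))$ bound.
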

\begin{proof}
For any implicational system $\Sigma\subseteq 2^S \times 2^S$, and any $A,B\subseteq S$, the membership $A\rightarrow B \mathop\in\limits^{?} \Sigma^\triangleright$ can be checked in time ${\mathcal O}(|\Sigma|)$ by Beeri and Bernstein's algorithm for functional dependencies~\cite{Beeri}.

Since $\boldsymbol{\Gamma}$ is the DS-basis, then there exists an algorithm $\mathcal{A}$ using which one can compute in time ${\mathcal O}({\rm poly}(|V|))$ an implicational system $\Sigma\subseteq \Sigma_V^{\boldsymbol{\Gamma}}$ such that $\Sigma^\triangleright = \Sigma_V^{\boldsymbol{\Gamma}}$. Afterwards, we check whether $\mathcal{C}_\mathbf{R}\rightarrow x\mathop\in\limits^{?} \Sigma_V^{\boldsymbol{\Gamma}}$ using Beeri and Bernstein's algorithm for any $x\in \mathcal{C}^{\boldsymbol{\Gamma}}_V$ and compute ${\rm Dense} (\mathcal{C}_\mathbf{R}) = \{x\in \mathcal{C}^{\boldsymbol{\Gamma}}_V| \mathcal{C}_\mathbf{R}\rightarrow x\in \Sigma^\triangleright\}$ in time ${\mathcal O}(|\mathcal{C}^{\boldsymbol{\Gamma}}_V|\cdot |\Sigma|) = {\mathcal O}({\rm poly}(|V|))$. Finally we set $r'_i = \{(v_1, ..., v_{\vectornorm{\varrho_i}}) | \langle (v_1, ..., v_{\vectornorm{\varrho_i}}), \varrho_i\rangle \in {\rm Dense} (\mathcal{C}_\mathbf{R})\}$ for $i\in [s]$. The instance $(\mathbf{R}'= (V,r'_1, ..., r'_s), \boldsymbol{\Gamma})$ is maximal.
\end{proof}
The following theorem is equivalent to Theorem~\ref{main-sparse} announced in Section~\ref{main-res}. 
\begin{theorem}\label{sparse-algebra}
For any DS-basis $\boldsymbol{\Gamma}$ there is an algorithm $\mathcal{A}_2$ that, given an instance $\mathbf{R}$ of ${\rm Sparse}(\boldsymbol{\Gamma})$, solves the sparsification problem for $(\mathbf{R}, \boldsymbol{\Gamma})$ in time ${\mathcal O} ({\rm poly}(|V|)\cdot |{\rm Min} (\mathbf{R}, \boldsymbol{\Gamma})|^2)$.
\end{theorem}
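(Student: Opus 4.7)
The plan is to reduce ${\rm Sparse}(\boldsymbol{\Gamma})$ to the classical problem of enumerating minimal keys of a closure system and then invoke Lucchesi and Osborn's algorithm~\cite{LUCCHESI1978270}. First, since $\boldsymbol{\Gamma}$ is a DS-basis, we may compute in time ${\mathcal O}({\rm poly}(|V|))$ an implicational system $\Sigma \subseteq \Sigma_V^{\boldsymbol{\Gamma}}$ of polynomial size with $\Sigma^\triangleright = \Sigma_V^{\boldsymbol{\Gamma}}$. Applying Theorem~\ref{dense-algebra}, we also compute $T := {\rm Dense}(\mathcal{C}_\mathbf{R})$ in polynomial time. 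By Beeri and Bernstein's algorithm~\cite{Beeri}, given $\Sigma$, the operator ${\rm Dense}(\cdot)$ can be evaluated on any $K \subseteq \mathcal{C}^{\boldsymbol{\Gamma}}_V$ in time ${\mathcal O}({\rm poly}(|V|))$.

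Next, I would establish the correspondence: $\mathbf{R}' \in {\rm Min}(\mathbf{R}, \boldsymbol{\Gamma})$ if and only if $\mathcal{C}_{\mathbf{R}'}$ is a minimal subset of $\mathcal{C}^{\boldsymbol{\Gamma}}_V$ with ${\rm Dense}(\mathcal{C}_{\mathbf{R}'}) = T$. Indeed, by Theorem~\ref{old-mine}, the equality ${\rm Hom}(\mathbf{R}, \boldsymbol{\Gamma}) = {\rm Hom}(\mathbf{R}', \boldsymbol{\Gamma})$ is equivalent to ${\rm Dense}(\mathcal{C}_{\mathbf{R}'}) = {\rm Dense}(\mathcal{C}_\mathbf{R}) = T$; and by extensiveness of the closure operator, any such $\mathcal{C}_{\mathbf{R}'}$ automatically satisfies $\mathcal{C}_{\mathbf{R}'} \subseteq T$. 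Minimality of $\mathbf{R}'$ in Definition~\ref{minimality} then translates literally into minimality of $\mathcal{C}_{\mathbf{R}'}$ as a subset generating $T$. Thus, elements of ${\rm Min}(\mathbf{R}, \boldsymbol{\Gamma})$ are in bijection with the minimal keys of $T$ with respect to the closure operator defined by $\Sigma$, restricted to the universe $T$.

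Finally, I would invoke Lucchesi and Osborn's algorithm, which enumerates all minimal keys of a closure system on a ground set of size $N_{\rm att}$ with functional dependencies given by $\Sigma$ in time ${\mathcal O}(N_{\rm att} \cdot (N+1) \cdot N \cdot F)$, where $N$ is the number of minimal keys produced and $F$ is the cost of a single closure evaluation. In our setting $N_{\rm att} = |T| \leq |\mathcal{C}^{\boldsymbol{\Gamma}}_V| = {\mathcal O}({\rm poly}(|V|))$, $F = {\mathcal O}({\rm poly}(|V|))$, and $N = |{\rm Min}(\mathbf{R}, \boldsymbol{\Gamma})|$, which yields the claimed bound ${\mathcal O}({\rm poly}(|V|) \cdot |{\rm Min}(\mathbf{R}, \boldsymbol{\Gamma})|^2)$. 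The bijection from the previous paragraph then allows us to convert the enumerated minimal keys back into minimal relational structures $\mathbf{R}'$ at no asymptotic cost.

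The main technical obstacle I anticipate is the second step: carefully verifying that the minimality of $\mathbf{R}'$ as a relational structure agrees exactly with the minimality of the corresponding constraint set as a key, including the subtlety (noted in the remark after the definition of ${\rm Sparse}$) that minimal $\mathbf{R}'$ need not be a substructure of $\mathbf{R}$. This is resolved by observing that the correspondence $\mathbf{R}' \mapsto \mathcal{C}_{\mathbf{R}'}$ is a bijection between relational structures on $V$ with the same signature as $\boldsymbol{\Gamma}$ and subsets of $\mathcal{C}^{\boldsymbol{\Gamma}}_V$, under which the subset order on constraints matches componentwise inclusion of relations; hence no loss occurs when passing to the key-enumeration formulation, regardless of whether $\mathbf{R}' \subseteq \mathbf{R}$ holds.
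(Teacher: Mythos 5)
Your proposal is correct and follows essentially the same route as the paper: compute a polynomial-size cover $\Sigma$ of $\Sigma_V^{\boldsymbol{\Gamma}}$ from the DS-basis, establish the bijection between ${\rm Min}(\mathbf{R},\boldsymbol{\Gamma})$ and the minimal keys for ${\rm Dense}(\mathcal{C}_\mathbf{R})$, and invoke Lucchesi--Osborn. The only cosmetic difference is that you explicitly restrict the ground set to $T = {\rm Dense}(\mathcal{C}_\mathbf{R})$ using extensiveness of the closure operator, which is a harmless refinement that the paper leaves implicit.
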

\begin{proof}
It is easy to see that a set of all possible instances of  ${\rm Sparse}(\boldsymbol{\Gamma})$, $\{\mathbf{R} = (V, \cdots)\}$, is in one-to-one correspondence with a set $2^{\mathcal{C}^{\boldsymbol{\Gamma}}_V}$.
For any implicational system $F$ on $S$, 
let us call $A\subseteq S$ a minimal key of $F$ for $B$ if $(A\rightarrow B)\in F^{\triangleright}$, but for any proper subset $C\subset A$, $(C\rightarrow B)\notin F^{\triangleright}$. Let us prove first that $\mathbf{R}'\in {\rm Min} (\mathbf{R}, \boldsymbol{\Gamma})$ is and only if $\mathcal{C}_{\mathbf{R}'}$ is a minimal key of $\Sigma^{\boldsymbol{\Gamma}}_V$ for ${\rm Dense} (\mathcal{C}_\mathbf{R})$.

Indeed, if $\mathbf{R}'\in {\rm Min} (\mathbf{R}, \boldsymbol{\Gamma})$, then ${\rm Hom}(\mathbf{R}, \boldsymbol{\Gamma})= {\rm Hom}(\mathbf{R}', \boldsymbol{\Gamma})$. Since ${\rm Hom}(\mathbf{R}, \boldsymbol{\Gamma})= {\rm Hom}(\mathbf{R}', \boldsymbol{\Gamma})$, then ${\rm Dense} (\mathcal{C}_{\mathbf{R}}) = {\rm Dense} (\mathcal{C}_{\mathbf{R}'})$ (by the definition of the densification operator). Therefore, from the duality between the closure operator ${\rm Dense}$ and the implication system $\Sigma_V^{\boldsymbol{\Gamma}}$ we obtain $(\mathcal{C}_{\mathbf{R}'} \rightarrow {\rm Dense} (\mathcal{C}_\mathbf{R}))\in \Sigma_V^{\boldsymbol{\Gamma}}$. Since the pair $(\mathbf{R}', \boldsymbol{\Gamma})$ is minimal, we obtain that $\mathcal{C}_{\mathbf{R}'}$ is a minimal key for ${\rm Dense} (\mathcal{C}_\mathbf{R})$.

On the contrary, let $\mathcal{C}_{\mathbf{R}'}$ be a minimal key for ${\rm Dense} (\mathcal{C}_\mathbf{R})$. Therefore, ${\rm Dense} (\mathcal{C}_\mathbf{R}) = {\rm Dense} (\mathcal{C}_{\mathbf{R}'})$,  from which we obtain ${\rm Hom}(\mathbf{R}, \boldsymbol{\Gamma})= {\rm Hom}(\mathbf{R}', \boldsymbol{\Gamma})$. Any proper subset $\mathcal{C}_{\mathbf{R}''}\subset \mathcal{C}_{\mathbf{R}'}$ has a closure ${\rm Dense} (\mathcal{C}_{\mathbf{R}''})\subset {\rm Dense} (\mathcal{C}_{\mathbf{R}'})$. Thus, we obtain that ${\rm Hom}(\mathbf{R}', \boldsymbol{\Gamma})\ne {\rm Hom}(\mathbf{R}'', \boldsymbol{\Gamma})$ (otherwise, we have ${\rm Dense} (\mathcal{C}_{\mathbf{R}''}) = {\rm Dense} (\mathcal{C}_{\mathbf{R}'})$).
We conclude that the pair $(\mathbf{R}', \boldsymbol{\Gamma})$ is minimal.

Since $\boldsymbol{\Gamma}$ is a DS-basis, we construct in advance an implicational system $\Sigma\subseteq\Sigma_V^{\boldsymbol{\Gamma}}$ such that $\Sigma^\triangleright = \Sigma_V^{\boldsymbol{\Gamma}}$.
We proved that the problem of listing of ${\rm Min} (\mathbf{R}, \boldsymbol{\Gamma})$ is equivalent to a listing of all minimal keys for ${\rm Dense} (\mathcal{C}_\mathbf{R})$ in the implicational system $\Sigma$. In database theory, this task is called the optimal cover problem and was studied in the 70s~\cite{CODD}. The algorithm of Luchessi and Osborn lists all minimal keys for ${\rm Dense} (\mathcal{C}_\mathbf{R})$ in time $\mathcal{O}(|\Sigma|\cdot |{\rm Min} (\mathbf{R}, \boldsymbol{\Gamma})| \cdot |{\rm Dense} (\mathcal{C}_\mathbf{R})| \cdot (|{\rm Min} (\mathbf{R}, \boldsymbol{\Gamma})|+|{\rm Dense} (\mathcal{C}_\mathbf{R})|))$ (see p. 274 of~\cite{LUCCHESI1978270}). It is easy to see that the last expression is bounded by ${\mathcal O} ({\rm poly}(|V|)\cdot |{\rm Min} (\mathbf{R}, \boldsymbol{\Gamma})|^2)$.

Note that main approaches to listing minimal keys in a functional dependency table refer to the method of Luchessi and Osborn. Nowadays, several alternative methods are designed for this and adjacent tasks~\cite{BenitoPicazo2017}, including efficient parallelization techniques~\cite{Sridhar}.
\end{proof}
\begin{remark}\label{Gottlob} Sometimes we are interested not in ${\rm Min}(\mathbf{R}, \boldsymbol{\Gamma})$, but in its subset ${\rm Min}(\mathbf{R}, \boldsymbol{\Gamma}, S) = \{\mathbf{R}'\in {\rm Min}(\mathbf{R}, \boldsymbol{\Gamma}) \mid \mathcal{C}_{\mathbf{R}'}\subseteq S\}$ where $S\subseteq \mathcal{C}_{V}^{\boldsymbol{\Gamma}}$. For example, if $S=\mathcal{C}_{\mathbf{R}}$, then listing ${\rm Min}(\mathbf{R}, \boldsymbol{\Gamma}, S)$ is equivalent to a listing of all non-redundant sparsifications that are subsets of the set of initial constraints. The latter set could have a substantially smaller cardinality than ${\rm Min}(\mathbf{R}, \boldsymbol{\Gamma})$. A natural approach to list  ${\rm Min}(\mathbf{R}, \boldsymbol{\Gamma}, S)$ is to compute a cover $\Sigma'$ of $\Sigma_V^{\boldsymbol{\Gamma}}\cap (2^{S})^2= \Sigma^\triangleright\cap (2^{S})^2$ and then  list minimal keys of $\Sigma'$ for $S$ (sometimes called candidate keys) by the method of Luchessi and Osborn in time $\mathcal{O}(|\Sigma'|\cdot |{\rm Min} (\mathbf{R}, \boldsymbol{\Gamma}, S)| \cdot |S| \cdot (|{\rm Min} (\mathbf{R}, \boldsymbol{\Gamma}, S)|+|S|))$. For  the computation of $\Sigma'$, it is natural to exploit the Reduction by Resolution algorithm (RBR)  suggested in~\cite{Gottlob}. The bottleneck of that strategy is that a small cover of $\Sigma^\triangleright\cap(2^{\mathcal{C}_{\mathbf{R}}})^2$ may not exist. In such cases RBR's computation takes a long time that can be  potentially exponential.
\end{remark}

Next, we will show that DS-bases include such templates for which ${\rm Dense} (\boldsymbol{\Gamma})$ can be solved by a Datalog program.

\section{Densification by Datalog program}
The idea of using Datalog programs for CSP is classical~\cite{Feder,BODIRSKY201379,Larose}.

\begin{definition} 
If $\Phi(x_1, ..., x_{n_u})$ is a primitive positive formula over $\tau$, then the first-order formula $$\Psi = \forall x_1, ..., x_{n_u} \big(\Phi (x_1, ..., x_{n_u})\rightarrow \pi_u (x_1, ..., x_{n_u})\big)$$ is called a Horn formula\footnote{We slightly abuse the standard terminology, according to which Horn formulas are defined more generally.} over $\tau$. If a primitive positive definition of $\Phi$ involves $n$ variables, then $\Psi$ is said to be of width $(n_u, n)$ (or, simply, of width $n$).
Any Horn formula of width $(n_u, n)$ is equivalent to the universal formula $$\forall x_1, ..., x_{n} \big( \bigwedge_{t=1}^N \pi_{j_t} (x_{o_{t1}}, x_{o_{t2}}, ..., x_{o_{tn_{j_t}}})\rightarrow \pi_u (x_1, ..., x_{n_u})\big),$$ so we will refer to both of them as Horn formulas.  For a relational structure $\mathbf{R} = (V,r_1, ..., r_s)$, $\vectornorm{r_i} = n_i$, $\mathbf{R}\vDash\Psi $ denotes $\Phi^{\mathbf{R}}\subseteq r_{u}$.
\end{definition}

For the densification task, the use of Datalog is motivated by the following theorem.

\begin{theorem}\label{closure} Let $(\mathbf{R}, \boldsymbol{\Gamma})$ be a maximal instance of CSP. For any Horn formula $\Psi$, if $\boldsymbol{\Gamma}\vDash\Psi$, then $\mathbf{R}\vDash\Psi $.
\end{theorem}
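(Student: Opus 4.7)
The plan is to argue by contradiction using the characterization of maximal instances from Theorem~\ref{jvm}, together with the standard fact that primitive positive formulas are preserved by homomorphisms. Write the Horn formula as
\[
\Psi = \forall x_1, \dots, x_{n_u}\, \bigl(\Phi(x_1,\dots,x_{n_u}) \to \pi_u(x_1,\dots,x_{n_u})\bigr),
\]
so that $\boldsymbol{\Gamma}\vDash\Psi$ exactly means $\Phi^{\boldsymbol{\Gamma}}\subseteq \varrho_u$, and $\mathbf{R}\vDash\Psi$ means $\Phi^{\mathbf{R}}\subseteq r_u$.

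First, I would pick an arbitrary tuple $(v_1,\dots,v_{n_u})\in \Phi^{\mathbf{R}}$ and assume for contradiction that $(v_1,\dots,v_{n_u})\notin r_u$. Since $(\mathbf{R},\boldsymbol{\Gamma})$ is maximal, Theorem~\ref{jvm} yields a homomorphism $h\in {\rm Hom}(\mathbf{R},\boldsymbol{\Gamma})$ such that $(h(v_1),\dots,h(v_{n_u}))\notin \varrho_u$.

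Next, I would invoke the preservation of primitive positive formulas under homomorphisms: if $\Phi = \exists x_{n_u+1}\cdots x_n\, \bigwedge_t \pi_{j_t}(x_{o_{t1}},\dots)$, then from $(v_1,\dots,v_{n_u})\in \Phi^{\mathbf{R}}$ we obtain witnesses $v_{n_u+1},\dots,v_n\in V$ for the existential quantifiers with $(v_{o_{t1}},\dots)\in r_{j_t}$ for every~$t$. Applying $h$ and using that $h$ is a homomorphism gives $(h(v_{o_{t1}}),\dots)\in \varrho_{j_t}$ for every~$t$, so $(h(v_1),\dots,h(v_{n_u}))\in \Phi^{\boldsymbol{\Gamma}}$ with witnesses $h(v_{n_u+1}),\dots,h(v_n)$.

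Finally, the hypothesis $\boldsymbol{\Gamma}\vDash\Psi$ forces $(h(v_1),\dots,h(v_{n_u}))\in \varrho_u$, contradicting the choice of~$h$. Hence every tuple in $\Phi^{\mathbf{R}}$ lies in $r_u$, i.e.\ $\mathbf{R}\vDash\Psi$. There is no real obstacle here; the only thing to be careful about is bookkeeping of the existentially quantified variables of $\Phi$ when transferring witnesses from $\mathbf{R}$ to $\boldsymbol{\Gamma}$ through~$h$, but this is the standard ``homomorphisms preserve pp-formulas'' argument.
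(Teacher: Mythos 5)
Your proof is correct, and it takes a genuinely different route from the paper's. The paper's argument is structural: it first shows that for every mapping $h\colon V\to D$ the pullback structure $(V,h^{-1}(\varrho_1),\dots,h^{-1}(\varrho_s))$ satisfies $\Psi$, then invokes the formula $r_i=\bigcap_{h\in{\rm Hom}(\mathbf{R},\boldsymbol{\Gamma})}h^{-1}(\varrho_i)$ from Theorem~\ref{old-mine} together with the closure of Horn models under componentwise intersection to conclude $\mathbf{R}\vDash\Psi$. You instead argue tuple-by-tuple by contradiction, using the separating-homomorphism characterization of maximality from Theorem~\ref{jvm} plus the fact that homomorphisms preserve primitive positive formulas. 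The two are dual instantiations of the same underlying facts (the two maximality characterizations are equivalent, and ``pp-formulas are preserved by $h$'' is precisely what makes each pullback a model of $\Psi$), but your decomposition is more local and arguably cleaner: it needs only one witness $h$ rather than reasoning about an intersection over all of ${\rm Hom}(\mathbf{R},\boldsymbol{\Gamma})$, and it automatically handles the degenerate case ${\rm Hom}(\mathbf{R},\boldsymbol{\Gamma})=\emptyset$ (where every $r_i=V^{\vectornorm{r_i}}$ and the assumption $(v_1,\dots,v_{n_u})\notin r_u$ is vacuous), whereas the paper's intersection formula is stated under the hypothesis that ${\rm Hom}(\mathbf{R},\boldsymbol{\Gamma})\ne\emptyset$ and that corner case is glossed over. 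The paper's route, on the other hand, makes explicit the $\wedge$-closure property of Horn definable classes, which is itself conceptually useful elsewhere in the development.
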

\begin{proof}
Let $\boldsymbol{\Gamma} = (D, \varrho_1, ..., \varrho_s)$ and $$\Psi = \forall x_1, ..., x_{n_u} \exists x_{n_u+1}... x_{n} \Xi(x_1, ..., x_n)\rightarrow \pi_u(x_1, ..., x_{n_u})$$ where $$\Xi(x_1, ..., x_n) = \bigwedge_{t=1}^N \pi_{j_t} (x_{o_{t1}}, x_{o_{t2}}, ..., x_{o_{tn_{j_t}}})$$ such that $\boldsymbol{\Gamma}\vDash\Psi$. Let $h: V\rightarrow D$ be any mapping and $r_i = h^{-1}(\varrho_i)$. Let us prove that $\mathbf{R}\vDash\Psi$ where $\mathbf{R} = (V,r_1, ..., r_s)$. 

Indeed, for any ${\mathbf a}\in r_i$ we have $h({\mathbf a})\in \varrho_i$, $i\in [s]$. From $\boldsymbol{\Gamma}\vDash\Psi$ we obtain that the following statement is true: if there exist $a_{1},..., a_{n}\in D$ such that $ (a_{o_{t1}}, a_{o_{t2}}, ..., a_{o_{tn_{j_t}}})\in \varrho_{j_t}$, $t\in [N]$, then $(a_1, ..., a_{n_u})\in \varrho_u$.

Suppose now that we are given $b_{1}, ..., b_{n}\in V$ such that for any $t\in [N]$ we have $ (b_{o_{t1}}, b_{o_{t2}}, ..., b_{o_{tn_{j_t}}})\in r_{j_t}$. Therefore, for any $t\in [N]$ we  have $$ (h(b_{o_{t1}}), h(b_{o_{t2}}), ..., h(b_{o_{tn_{j_t}}}))\in \varrho_{j_t}.$$ From $\boldsymbol{\Gamma}\vDash\Psi$ we obtain that $(h(b_1), ..., h(b_{n_u}))\in \varrho_u$. Therefore, $(b_1, ..., b_{n_u})\in r_u$. Thus, we proved  $\mathbf{R}\vDash\Psi $.

Finally, let $(\mathbf{R}, \boldsymbol{\Gamma})$ be a maximal instance of CSP and $\mathbf{R} = (V,r_1, ..., r_s)$.
By the definition of the maximal instance, we have $r_i=\bigcap_{h\in {\rm Hom}(\mathbf{R}, \boldsymbol{\Gamma})} h^{-1}(\varrho_i)$. Horn formulas have the following simple property: if $(V,r^1_1, ..., r^1_s)\vDash\Psi $ and $(V,r^2_1, ..., r^2_s)\vDash\Psi $, then $(V,r^1_1\cap r^2_1, ..., r^1_s\cap r^2_s)\vDash\Psi $.
Since $(V, h^{-1}(\varrho_1), ..., h^{-1}(\varrho_s))\vDash\Psi $
for any $h\in {\rm Hom}(\mathbf{R}, \boldsymbol{\Gamma})$, we conclude $\mathbf{R}\vDash\Psi $.
\end{proof}

Theorem~\ref{closure} motivates the following approach to the problem ${\rm Dense}(\boldsymbol{\Gamma})$. Let $L=\{\Psi_1, ..., \Psi_c\}$ be a finite set of Horn formulas such that $\boldsymbol{\Gamma}\vDash\Psi_i$, $i\in [c]$. Given an instance $\mathbf{R} = (V,r_1, ..., r_s)$ of ${\rm Dense}(\boldsymbol{\Gamma})$, let us define an operator 
$$
q_i(r_1, ..., r_s) = r_i \cup\hspace{-20pt}\bigcup_{\Psi\in L: \Psi = \forall x_{1:n_i}(\Phi(x_1, ..., x_{n_i})\rightarrow \pi_i(x_1, ..., x_{n_i}))}\Phi^{\mathbf{R}},
$$
called the immediate consequence operator, i.e. it outputs a single application of the rules that contain $\pi_i$ as the head. This induces an operator on relational structures:
$$
Q(\mathbf{R}) = (V, q_1(r_1, ..., r_s), ..., q_s(r_1, ..., r_s))
$$
Since $q_i(r_1, ..., r_s)\supseteq r_i$, the Algorithm~\ref{naive} eventually stops at the fixed point of the operator $Q(\mathbf{R})$, i.e. at $Q^{K-1}(\mathbf{R})$ where: 
\begin{equation}\label{naive}
\mathbf{R}^0 = \mathbf{R}, \mathbf{R}^k =  Q (\mathbf{R}^{k-1}), k\in [K], \mathbf{R}^K = \mathbf{R}^{K-1}.
\end{equation}
In that algorithm we iteratively add new tuples to predicates $r_i, i\in [s]$ until all Horn formulas in $L$ are satisfied.

Let us denote the output $Q^{K-1}(\mathbf{R})$ of the Algorithm~\ref{naive} by $\mathbf{R}^L = (V,r^L_1, ..., r^L_s)$. In fact, the Algorithm~\ref{naive} calculates the fixed point of the operator $Q(\mathbf{R})$ in $O(|\mathbf{R}^L|)$ iterations, where $|\mathbf{R}^L| = \sum_{i=1}^s |r^L_i|$.
It is easy to see that $\mathbf{R}^L = (V,r^L_1, ..., r^L_s)$ is a smallest (w.r.t. inclusion) relational structure $\mathbf{T} = (V,t_1, ..., t_s)$ such that $t_i\supseteq r_i, i\in [s]$ and $\mathbf{T} \vDash\Psi_i$, $i\in [c]$. Therefore, $\mathbf{R}^L$ is a good candidate for a maximal instance $(\mathbf{R}'= (V,r'_1, ..., r'_s), \boldsymbol{\Gamma})$, $r'_i\supseteq r_i, i\in [s]$. 

\begin{definition} Let $\tau$ be a vocabulary and ${\rm F}\notin \tau$ be a stop symbol with an arity 0 assigned to it. Let $L$ be a finite set of Horn formulas over $\tau$ such that $\boldsymbol{\Gamma}\models \Psi, \Psi\in L$ and $L^{\rm stop}$ be a finite set of formulas of the form $\Phi\to {\rm F}$ where $\Phi$ is a quantifier-free primitive positive formula over $\tau$. It is said that ${\rm Dense}(\boldsymbol{\Gamma})$ can be solved by the Datalog program $L\cup L^{\rm stop}$, if for any instance $\mathbf{R}$ of ${\rm Dense}(\boldsymbol{\Gamma})$, we have: (a) if ${\rm Hom} (\mathbf{R}, \boldsymbol{\Gamma})\ne \emptyset$, then $(\mathbf{R}^L, \boldsymbol{\Gamma})$ is maximal and $\Phi^{\mathbf{R}^L}= \emptyset$ for any $(\Phi\to {\rm F})\in L^{\rm stop}$, and (b) if ${\rm Hom} (\mathbf{R}, \boldsymbol{\Gamma}) = \emptyset$, then there is $(\Phi\to {\rm F})\in L^{\rm stop}$ such that $\Phi^{\mathbf{R}^L}\ne \emptyset$.
\end{definition}

\begin{theorem} \label{DS-basis}
If ${\rm Dense}(\boldsymbol{\Gamma})$ can be solved by the Datalog program $L\cup L^{\rm stop}$, then $\boldsymbol{\Gamma}$ is a DS-basis.
\end{theorem}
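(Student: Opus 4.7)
The plan is to show that grounding the Datalog program over the variable set $[n]$ yields a polynomial-size cover of $\Sigma_n^{\boldsymbol{\Gamma}}$.

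Given $n\in\mathbb{N}$, I would construct $\Sigma$ by enumerating ground instances of every rule in $L\cup L^{\rm stop}$. For each Horn formula $\Psi\in L$, written after universalization as
$$\Psi = \forall x_1,\ldots,x_{\ell_\Psi}\Big(\bigwedge_{t\in [N_\Psi]}\pi_{j_t}(x_{o_{t1}},\ldots,x_{o_{tn_{j_t}}}) \to \pi_u(x_1,\ldots,x_{n_u})\Big),$$
and every map $\sigma:[\ell_\Psi]\to[n]$, I add to $\Sigma$ the implication
$$\{\langle(\sigma(o_{t1}),\ldots,\sigma(o_{tn_{j_t}})),\varrho_{j_t}\rangle \mid t\in[N_\Psi]\}\;\to\;\{\langle(\sigma(1),\ldots,\sigma(n_u)),\varrho_u\rangle\}.$$
For each stop rule $(\Phi\to{\rm F})\in L^{\rm stop}$ and every substitution of the variables of $\Phi$ by elements of $[n]$, I add the analogous implication whose right-hand side is all of $\mathcal{C}^{\boldsymbol{\Gamma}}_n$, reflecting the convention ${\rm Dense}(\mathcal{C}_\mathbf{R})=\mathcal{C}^{\boldsymbol{\Gamma}}_n$ when ${\rm Hom}(\mathbf{R},\boldsymbol{\Gamma})=\emptyset$. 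Since the widths of formulas in $L\cup L^{\rm stop}$ are absolute constants depending only on $\boldsymbol{\Gamma}$, the enumeration terminates in polynomial time and yields $|\Sigma|=\mathcal{O}({\rm poly}(n))$.

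Next, I would verify $\Sigma\subseteq\Sigma_n^{\boldsymbol{\Gamma}}$. For a ground Horn rule this is immediate from Theorem~\ref{closure}: the densification of any instance is maximal, hence satisfies every Horn formula in $L$. For a ground stop rule, if its left-hand side is contained in $\mathcal{C}_\mathbf{R}$, then $\Phi^{\mathbf{R}^L}\neq\emptyset$ (since $\mathbf{R}^L\supseteq\mathbf{R}$), so the contrapositive of condition (a) in the Datalog-solvability definition forces ${\rm Hom}(\mathbf{R},\boldsymbol{\Gamma})=\emptyset$, and hence ${\rm Dense}(\mathcal{C}_\mathbf{R})=\mathcal{C}^{\boldsymbol{\Gamma}}_n$ by convention.

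The main obstacle is the converse $\Sigma_n^{\boldsymbol{\Gamma}}\subseteq\Sigma^\triangleright$: every valid densification step must be reachable by a chain of ground implications from $\Sigma$. The idea is to simulate each iteration of the immediate consequence operator $Q$ inside the implicational system. Fix $A\subseteq\mathcal{C}^{\boldsymbol{\Gamma}}_n$ and let $\mathbf{R}_A$ be the instance with $\mathcal{C}_{\mathbf{R}_A}=A$. If ${\rm Hom}(\mathbf{R}_A,\boldsymbol{\Gamma})\neq\emptyset$, then by condition (a) the structure $\mathbf{R}_A^L$ is maximal, so $\mathcal{C}_{\mathbf{R}_A^L}={\rm Dense}(A)$; an induction on the first iteration $k$ at which a given constraint appears in $\mathcal{C}_{\mathbf{R}_A^k}$ shows that every element of ${\rm Dense}(A)$ is derivable from $A$ in $\Sigma^\triangleright$, where each inductive step uses exactly one ground Horn implication. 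If instead ${\rm Hom}(\mathbf{R}_A,\boldsymbol{\Gamma})=\emptyset$, condition (b) furnishes a ground stop rule whose body is non-empty in $\mathbf{R}_A^L$; the same inductive argument derives that body from $A$ in $\Sigma^\triangleright$, and the corresponding stop implication then yields $A\to\mathcal{C}^{\boldsymbol{\Gamma}}_n={\rm Dense}(A)$. Combined with soundness, this gives $\Sigma^\triangleright=\Sigma_n^{\boldsymbol{\Gamma}}$, so $\boldsymbol{\Gamma}$ is a DS-basis.
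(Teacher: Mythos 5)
Your proposal is correct and follows essentially the same route as the paper's proof: ground the rules of $L\cup L^{\rm stop}$ over $[n]$ to get a polynomial-size $\Sigma$, verify soundness (each ground rule lies in $\Sigma_n^{\boldsymbol{\Gamma}}$) and completeness (every densification step is derivable by simulating the iterations of the immediate consequence operator $Q$, with the stop rules handling the unsatisfiable case). The only cosmetic difference is that you invoke Theorem~\ref{closure} directly for soundness, whereas the paper verifies it by inspecting the first iteration of the naive evaluation on the instance whose constraint set equals the rule's body and then packages both directions through an intermediate full implicational system $\Delta_1\cup\Delta_2$; both are the same argument in slightly different wrapping.
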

\begin{proof}
Any $\Psi\in L$ can be represented as $$\Psi = \forall x_1, ..., x_{n} \big( \bigwedge_{t=1}^N \pi_{j_t} (x_{o_{t1}}, x_{o_{t2}}, ..., x_{o_{tn_{j_t}}})\rightarrow \pi_u (x_1, ..., x_{n_u})\big). $$ For any sequence $v_1, ..., v_n\in V$ let us introduce an implication
\begin{equation}\label{d-rules}
R_\Psi(v_1, ..., v_n) \rightarrow \langle (v_1, ..., v_{n_u}), \varrho_{u}\rangle
\end{equation}
where $R_\Psi(v_1, ..., v_n) = \big\{ 
\langle (v_{o_{t1}}, v_{o_{t2}}, ..., v_{o_{tn_{j_t}}}), \varrho_{j_t}\rangle | t\in [N]
\big\} \subseteq \mathcal{C}^{\boldsymbol{\Gamma}}_V$. Analogously, 
any $\Psi\in L^{\rm stop}$ can be represented as $\Psi = \big( \bigwedge_{t=1}^N \pi_{j_t} (x_{o_{t1}}, x_{o_{t2}}, ..., x_{o_{tn_{j_t}}})\rightarrow {\rm F}\big)$ and we define an implication
\begin{equation}\label{d-rules2}
R_\Psi(v_1, ..., v_n) \rightarrow {\mathcal C}^{\boldsymbol{\Gamma}}_V
\end{equation}
where $R_\Psi(v_1, ..., v_n) = \big\{ 
\langle (v_{o_{t1}}, v_{o_{t2}}, ..., v_{o_{tn_{j_t}}}), \varrho_{j_t}\rangle | t\in [N]
\big\} \subseteq \mathcal{C}^{\boldsymbol{\Gamma}}_V$.

 Let us denote
\begin{equation}\label{omega}
\begin{split}
\Omega^V_\Psi = \bigcup_{v_1, ..., v_n\in V}\{R_\Psi(v_1, ..., v_n) \rightarrow \langle (v_1, ..., v_{n_u}), \varrho_{u}\rangle\}
\end{split}
\end{equation}
if $\Psi\in L$ and 
\begin{equation*}
\begin{split}
\Omega^V_\Psi = \bigcup_{v_1, ..., v_n\in V}\{R_\Psi(v_1, ..., v_n) \rightarrow \mathcal{C}^{\boldsymbol{\Gamma}}_V\}
\end{split}
\end{equation*}
if $\Psi\in L^{\rm stop}$
and set
$$
\Sigma = \bigcup_{\Psi\in L\cup L^{\rm stop}} \Omega^V_\Psi
$$
%Let us now prove $\Sigma^\triangleright = \Sigma_V^{\boldsymbol{\Gamma}}$.

Let us first prove the inclusion $\Sigma^\triangleright  \subseteq \Delta_1 \cup \Delta_2$ where $$\Delta_1 = \{\mathcal{C}_{\mathbf{R}}\rightarrow B | B\subseteq \mathcal{C}_{\mathbf{R}^L}, {\rm Hom}(\mathbf{R}, \boldsymbol{\Gamma}) \ne \emptyset\}$$ and $$ \Delta_2 = \{\mathcal{C}_{\mathbf{R}}\rightarrow B | B\subseteq \mathcal{C}^{\boldsymbol{\Gamma}}_V, {\rm Hom}(\mathbf{R}, \boldsymbol{\Gamma})= \emptyset\}. $$ For this, it is enough to show that $\Delta_1 \cup \Delta_2$ is a full implicational system and $\Sigma\subseteq \Delta_1 \cup \Delta_2$. The mapping ${\textsc O}: 2^{\mathcal{C}^{\boldsymbol{\Gamma}}_V}\rightarrow 2^{\mathcal{C}^{\boldsymbol{\Gamma}}_V}$, defined by ${\textsc O}(\mathcal{C}_{\mathbf{R}}) = \mathcal{C}_{\mathbf{R}^L}$ if ${\rm Hom}(\mathbf{R}, \boldsymbol{\Gamma})\ne \emptyset$ and ${\textsc O}(\mathcal{C}_{\mathbf{R}}) = \mathcal{C}^{\boldsymbol{\Gamma}}_V$ if ${\rm Hom}(\mathbf{R}, \boldsymbol{\Gamma}) = \emptyset$, is the closure operator by its construction. Therefore, Theorem~\ref{implicational} implies that the set $\Delta_1 \cup \Delta_2$ is a full implicational system.
The fact $\Sigma\subseteq \Delta_1 \cup \Delta_2$ is obvious, because for any rule of the form~\eqref{d-rules}, there exists an instance $\mathbf{R}$ such that $\mathcal{C}_{\mathbf{R}} = \{ \langle (v_{o_{t1}}, v_{o_{t2}}, ..., v_{o_{tn_{j_t}}}), \varrho_{j_t}\rangle | t\in [N] \}$. The naive evaluation algorithm~\ref{naive} will put the tuple $(v_1, ..., v_{n_u})$ into $r_u$ at the first iteration, because $(v_1, ..., v_{n_u})\in q_u(\mathbf{R})$. 
Thus, the head of that rule $\langle (v_1, ..., v_{n_u}), \varrho_{u}\rangle$ will be in $\mathcal{C}_{\mathbf{R}^L}$. Analogously, any rule of the form~\eqref{d-rules2} is also in $\Delta_1 \cup \Delta_2$.
Thus, we proved $\Sigma^\triangleright \subseteq \Delta_1 \cup \Delta_2$, and next we need to prove $\Delta_1 \cup \Delta_2\subseteq \Sigma^\triangleright$. 

Note that 
the operator $Q(\mathbf{R})$ operates on $\mathbf{R} = (V, r_1, ..., r_s)$ by computing tuples from $q_i(r_1, ..., r_s), i\in [s]$ in the following way: computing $(v_1, ...,  v_{n_i})\in q_i(r_1, ..., r_s)$ can be modeled as a result of applying one of the rules~\eqref{d-rules} to attributes from $\mathcal{C}_{\mathbf{R}}$ to obtain the attribute $\langle (v_1, ...,  v_{n_i}), \varrho_i\rangle$. Thus, $\mathcal{C}_{\mathbf{R}}\rightarrow \mathcal{C}_{Q(\mathbf{R})}\in \Sigma^\triangleright$. Therefore, $\mathcal{C}_{\mathbf{R}}\rightarrow \mathcal{C}_{Q^l(\mathbf{R})}\in \Sigma^\triangleright$ for any $l\in {\mathbb N}$, and we obtain $\mathcal{C}_{\mathbf{R}}\rightarrow \mathcal{C}_{\mathbf{R}^L}\in \Sigma^\triangleright$. Since $\Sigma^\triangleright$ is full, we conclude  $\{\mathcal{C}_{\mathbf{R}}\rightarrow B | B\subseteq \mathcal{C}_{\mathbf{R}^L}\}\subseteq \Sigma^\triangleright$. Moreover, if ${\rm Hom}(\mathbf{R}, \boldsymbol{\Gamma})= \emptyset$, we can prove that any rule $\mathcal{C}_{\mathbf{R}}\rightarrow B, B\subseteq \mathcal{C}^{\boldsymbol{\Gamma}}_V$ is in $\Sigma^\triangleright$. This implies $\Delta_1 \cup \Delta_2\subseteq \Sigma^\triangleright$.

In fact, we proved that the implicational system $\Sigma$ corresponds to the closure operator ${\textsc O}:2^{\mathcal{C}^{\boldsymbol{\Gamma}}_V}\rightarrow 2^{\mathcal{C}^{\boldsymbol{\Gamma}}_V}$ (defined before) with respect to the canonical correspondence of Theorem~\ref{implicational}.
The closure operator ${\textsc O}$ coincides with the densification operator ${\rm Dense}$. %If ${\rm Hom}(\mathbf{R}, \boldsymbol{\Gamma}) = \emptyset$, then we will deduce $\mathcal{C}^{\boldsymbol{\Gamma}}_V$ using one of the rules~\ref{d-rules2}.

Thus, if ${\rm Dense}(\boldsymbol{\Gamma})$ can be solved by Datalog program $L$, then the implicational system $\Sigma$ satisfies $\Sigma^\triangleright = \Sigma_V^{\boldsymbol{\Gamma}}$ and $\boldsymbol{\Gamma}$ is a DS-basis.
\end{proof}

Obviously, if ${\rm Dense}(\boldsymbol{\Gamma})$ can be solved by some Datalog program $L \cup L^{\rm stop}$, then all the more $\neg {\rm CSP}(\boldsymbol{\Gamma})$ can be expressed by Datalog.
%Therefore, linear equations over ${\rm GF}(2)$ cannot be densified by any Datalog program.
The following theorems give examples of constraint languages for which ${\rm Dense}(\boldsymbol{\Gamma})$ can be solved by Datalog.
\begin{theorem}\label{Horn-case} Let $\boldsymbol{\Gamma} = (D=\{0,1\}, \{(0)\}, \{(1)\}, \varrho_{x \wedge y\rightarrow z})$ where $\varrho_{x \wedge y\rightarrow z} = \{(a_1, a_2, a_{3})\in D^{3}| a_1 a_{2}\leq a_{3}\}$.
%The problem ${\rm CSP}(\boldsymbol{\Gamma})$ is log-space reducible to Horn-3SAT. Let us consider the Horn formula $\Psi = \forall x_1, ..., x_{2n} \big(\bigwedge_{i=1}^{n-1} (x_1 \wedge x_{2i}\rightarrow x_{2i+1})\bigwedge_{i=1}^{n-1} (x_2 \wedge x_{2i+1}\rightarrow x_{2i+2})\rightarrow  (x_1 \wedge x_{2}\rightarrow  x_{2n})\big)$. It is easy to see that $\Psi$ is $\boldsymbol{\Gamma}$-irreducible and ${\rm RW}(\Psi)=2n$. Thus, $\sup\limits_{\Psi\,\,\rm{is}\,\,\boldsymbol{\Gamma}\rm{-irreducible}}{\rm RW}(\Psi) = \infty$. 
Then, there is a finite set of Horn formulas $L$ over $\tau = \{\pi_1, \pi_2, \pi_3\}\cup \{{\rm F}\}$ such that ${\rm Dense}(\boldsymbol{\Gamma})$ can be solved by the Datalog program $L$.
\end{theorem}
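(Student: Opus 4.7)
The plan is to exhibit a finite Horn program $L$ over $\tau=\{\pi_1,\pi_2,\pi_3,{\rm F}\}$ and verify (a) each rule is valid in $\boldsymbol{\Gamma}$, (b) running the naive fixed-point evaluation on an input $\mathbf{R}$ yields the densification when a homomorphism exists, and (c) the stop symbol ${\rm F}$ fires when none does. Concretely, I would take $L$ to consist of the axioms $\pi_3(x,y,x)$ and $\pi_3(x,y,y)$; the three triviality rules $\pi_2(c)\to \pi_3(x,y,c)$, $\pi_1(x)\to \pi_3(x,y,c)$ and $\pi_1(y)\to \pi_3(x,y,c)$; the closure rule $\pi_3(x,y,a)\wedge \pi_3(x,y,b)\wedge \pi_3(a,b,c)\to \pi_3(x,y,c)$; the unit-propagation rule $\pi_2(x)\wedge \pi_2(y)\wedge \pi_3(x,y,z)\to \pi_2(z)$; a ``downward'' rule $\pi_3(x,x,z)\wedge \pi_1(z)\to \pi_1(x)$; and the stop rule $\pi_1(x)\wedge \pi_2(x)\to {\rm F}$. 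That each formula holds in $\boldsymbol{\Gamma}$ is a direct check in $\{0,1\}$; for instance, the closure rule holds because $xy\le a$ and $xy\le b$ give $xy=(xy)(xy)\le ab\le c$.

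Soundness of the naive evaluation, namely $\mathbf{R}^L\subseteq \mathbf{R}'$ where $\mathbf{R}'$ denotes the densification, is immediate from Theorem~\ref{closure}: the maximal instance $(\mathbf{R}',\boldsymbol{\Gamma})$ satisfies every Horn formula valid in $\boldsymbol{\Gamma}$, so the immediate-consequence operator $Q$ maps every substructure of $\mathbf{R}'$ into $\mathbf{R}'$, and iteration from $\mathbf{R}\subseteq \mathbf{R}'$ stays within $\mathbf{R}'$; in particular the stop rule cannot fire when a solution exists.

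For completeness I would use that the solution set of $\mathbf{R}$ is closed under componentwise $\wedge$, so (when nonempty) admits a unique minimum model $h_0$, and that for every pair $(v_1,v_2)$ the ``enhanced'' theory obtained by adjoining $\pi_2(v_1)$ and $\pi_2(v_2)$ to $r_2$ is either inconsistent or has its own minimum $h^\ast_{v_1,v_2}$. A short semantic argument then characterizes the densification as: $v\in r_2'$ iff $h_0(v)=1$; $(v_1,v_2,v_3)\in r_3'$ iff the enhanced theory at $(v_1,v_2)$ is inconsistent or $h^\ast_{v_1,v_2}(v_3)=1$; and $v\in r_1'$ iff the enhanced theory at $(v,v)$ is inconsistent. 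Unit propagation via the corresponding rule derives all of $r_2'$; the closure rule together with the axioms and triviality rules derives $\pi_3(v_1,v_2,v_3)$ for every triple in $r_3'$; and the downward rule converts every inconsistency of an enhanced theory at $(v,v)$ into $\pi_1(v)$. If no homomorphism exists at all, then forward chaining of $\pi_2$ reaches some $u\in r_1$, and the stop rule fires.

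The main technical obstacle is the induction that matches a Datalog derivation of $\pi_3(v_1,v_2,v_3)$ with a unit-propagation proof of $v_3$ in the enhanced theory at $(v_1,v_2)$: the induction is on the length of the Horn proof, and at each step the intermediate atoms $a,b$ of the closure rule must be aligned with the two antecedents of the clause applied at that step, recursively invoking the induction hypothesis on the shorter sub-derivations yielding $a$ and $b$. The inconsistent case is handled by observing that once some $u\in r_1$ is derived inside the enhanced theory, the triviality rules supply $\pi_3(u,y,c)$ and $\pi_3(x,u,c)$ for every $x,y,c$, and combining this with the closure rule produces $\pi_3(v_1,v_2,c)$ for every $c\in V$, as required.
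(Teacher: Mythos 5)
Your proposal is correct and takes a noticeably different route from the paper's. The paper also characterizes the maximal instance via the Horn closure operator $o_\Sigma$ on $V$ (in your language, via the minimum model of the enhanced theory), but it then exhibits a family of sixteen implication rules with no unit axioms, and its completeness argument is a tree-surgery argument: it takes a derivation tree of $k$ from $\{i,j\}\cup O$, looks at a deepest branching, and shows that for each of a handful of branching/leaf patterns one of rules (7)--(16) lets you contract that branching, iterating until only a root with two leaves remains. Your proposal replaces that casework by two unit axioms $\pi_3(x,y,x)$, $\pi_3(x,y,y)$, the three triviality rules, and a single uniform closure rule $\pi_3(x,y,a)\wedge\pi_3(x,y,b)\wedge\pi_3(a,b,c)\to\pi_3(x,y,c)$ (which is the paper's rule (9)); with the axioms and triviality rules handling the base cases (a leaf labeled by $v_1$, $v_2$, or an element of $O$), the induction on the length of the unit-propagation derivation goes through uniformly, without needing to distinguish which kinds of leaves hang off the current node. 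What each approach buys: the paper's rule set is quantifier-free composition rules chosen so that a single local rewrite of the derivation tree stays valid, at the cost of a longer case analysis and a larger (admittedly unoptimized, as the paper notes) program; your approach buys a shorter program and a one-line inductive step, at the cost of including ground axioms in $L$ (which the paper's definition allows, cf.\ rule 1 in the 2-SAT case). Your handling of the inconsistent/enhanced-theory case, by deriving $\pi_3(v_1,v_2,u)$ for $u\in Z$ and then feeding $\pi_3(u,u,c)$ (via triviality) through the closure rule, is also cleaner than the paper's, and in fact your explicit axioms close a small base-case gap that the paper's argument glosses over when $k\in\{i,j\}$, i.e.\ the derivation tree is a single node. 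All of your seven rule schemas are easily checked to hold in $\boldsymbol{\Gamma}$, soundness is exactly Theorem~\ref{closure} as you invoke it, and your completeness induction and the stop-rule argument are correct.
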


\begin{theorem}\label{2-SAT} Let $\boldsymbol{\Gamma} = (D=\{0,1\}, \varrho_1, \varrho_2, \varrho_3)$ where $ \varrho_1 = \big \{ ({x, y} ) | x \vee y \big \} $,
 $ \varrho_2 = \big \{ ({x, y} ) | \neg x \vee y \big \} $
 and $ \varrho_3 = \big \{{ ({x, y} ) | \neg x \vee \neg y} \big \} $. Then, there is a finite set of Horn formulas $L$ over $\tau = \{\pi_1, \pi_2, \pi_3\}\cup \{{\rm F}\}$ such that ${\rm Dense}(\boldsymbol{\Gamma})$ can be solved by the Datalog program $L$.% and $L$ consists of the following Horn formulas  $\tau = \{\pi_1, \pi_2, \pi_3\}$:
\end{theorem}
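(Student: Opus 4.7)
The plan is to exhibit an explicit Datalog program implementing propositional resolution restricted to $2$-SAT clauses. I take $L$ to contain a finite list of Horn rules, one for each way two clauses from $\{\pi_1, \pi_2, \pi_3\}$ can be resolved on a shared literal, together with the single stop rule $\forall x\, \pi_1(x,x) \wedge \pi_3(x,x) \to \mathrm{F}$ placed in $L^{\mathrm{stop}}$. Representative entries of $L$ include $\pi_2(x,y) \wedge \pi_2(y,z) \to \pi_2(x,z)$ (transitivity of the implication $x \to y$), $\pi_1(x,y) \wedge \pi_2(y,z) \to \pi_1(x,z)$ (resolving $x \vee y$ with $\neg y \vee z$), and $\pi_1(x,y) \wedge \pi_3(y,z) \to \pi_2(z,x)$ (resolving $x \vee y$ with $\neg y \vee \neg z$ to obtain $x \vee \neg z$); the symmetries of $\pi_1$ and $\pi_3$ in their arguments and the reorientation of $\pi_2$ are handled by adding the mirror rules. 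The unit-clause encoding is that $\pi_1(x,x)$ witnesses $x = 1$ and $\pi_3(x,x)$ witnesses $x = 0$, so the degenerate diagonals $x = y$ of the resolution templates automatically yield unit information when needed.

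Soundness is then a direct check: each rule in $L$ is a sound instance of propositional resolution, hence $\boldsymbol{\Gamma} \models \Psi$ for every $\Psi \in L$. Applying Theorem~\ref{closure}, the fixed point $\mathbf{R}^L$ satisfies $\mathcal{C}_{\mathbf{R}^L} \subseteq {\rm Dense}(\mathcal{C}_{\mathbf{R}})$, and whenever the stop rule fires we conclude ${\rm Hom}(\mathbf{R},\boldsymbol{\Gamma}) = \emptyset$ because no Boolean assignment can set a variable to both values.

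For completeness I will argue via the $2$-SAT implication digraph $G$ on vertex set $V \cup \overline{V}$, where each clause in $\mathcal{C}_{\mathbf{R}}$ contributes its two implication edges $\overline{\ell_1} \to \ell_2$ and $\overline{\ell_2} \to \ell_1$. The classical characterization of $2$-SAT entailment asserts: $\ell_1 \vee \ell_2$ is entailed by $\mathbf{R}$ iff $G$ contains a directed path $\overline{\ell_1} \to \cdots \to \ell_2$, and $\mathbf{R}$ is unsatisfiable iff some variable $x$ lies on a cycle through its negation. I will prove by induction on path length that whenever $G$ contains such a path, the corresponding derived clause, re-expressed in one of the three $\pi$-forms, belongs to $\mathcal{C}_{\mathbf{R}^L}$: the inductive step consists of prepending one edge and applying exactly one of the enumerated resolution rules, chosen by the $\pi$-types at the two ends. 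Consequently every binary clause in ${\rm Dense}(\mathcal{C}_{\mathbf{R}})$ lies in $\mathcal{C}_{\mathbf{R}^L}$, so $(\mathbf{R}^L, \boldsymbol{\Gamma})$ is maximal by Theorem~\ref{jvm}; and if $\mathbf{R}$ is unsatisfiable, the same path construction produces both $\pi_1(x,x)$ and $\pi_3(x,x)$ for some $x$, triggering the stop rule as required.

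The main obstacle is the bookkeeping in the completeness step: certifying that the finite family of rule templates listed in $L$ genuinely covers every one-step resolution that can occur along a path in $G$, with correct type conversion between the $\pi_i$-encodings as the running clause is extended, and that degenerate coincidences of variables map cleanly onto the unit encodings $\pi_1(x,x)$ and $\pi_3(x,x)$ so that the contradiction rule triggers exactly when $G$ witnesses unsatisfiability. Once the rule table is fixed and verified against the finite list of edge-type combinations, the induction is routine.
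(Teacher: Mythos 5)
The approach you take—resolution on $2$-clauses plus the implication-digraph characterization—is genuinely different from the paper's, which establishes path-consistency of the fixed point $C(\mathbf{R}^L)$ via seven bespoke Horn rules and then invokes the Baker--Pixley / majority-polymorphism fact that $3$-consistency implies global consistency. Unfortunately your version has a real gap in the completeness step, and the gap is not merely bookkeeping.

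The claimed characterization ``$\ell_1 \vee \ell_2$ is entailed by $\mathbf{R}$ iff $G$ contains a directed path $\overline{\ell_1}\to\cdots\to\ell_2$'' is false. For a satisfiable instance, $\ell_1\vee\ell_2$ is entailed iff $\overline{\ell_1}$ reaches $\ell_1$, or $\overline{\ell_2}$ reaches $\ell_2$, or $\overline{\ell_1}$ reaches $\ell_2$; your statement drops the first two cases. Concretely, take $\mathcal{C}_{\mathbf{R}} = \{\langle (u,u), \varrho_1\rangle\}$, i.e.\ the single constraint $u\vee u$ forcing $h(u)=1$ for every solution, and let $v$ be a variable that appears in no constraint. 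Then $\langle (u,v),\varrho_1\rangle$ and $\langle (v,u),\varrho_2\rangle$ both lie in ${\rm Dense}(\mathcal{C}_{\mathbf{R}})$, because every homomorphism satisfies $h(u)\vee h(v)=1$ and $\neg h(v)\vee h(u)=1$. But there is no edge incident to $v$ in the implication digraph, hence no path $\overline{u}\to v$, and pure resolution cannot derive these clauses from the unit $u$ because resolution never produces a \emph{weaker} clause. So your induction on path length proves the easy direction (path exists $\Rightarrow$ clause derivable) but cannot establish the needed direction (entailed $\Rightarrow$ path exists $\Rightarrow$ derivable). A second, smaller omission: the tautology $\pi_2(x,x)$ is always in ${\rm Dense}$ of any satisfiable instance and is not a resolvent of anything, so a seed rule ${\rm True}\to\pi_2(x,x)$ is required. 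To close the gap your $L$ would need non-resolution ``weakening'' rules such as $\pi_1(x,x)\to\pi_2(y,x)$, $\pi_1(x,x)\to\pi_1(x,y)$, $\pi_3(x,x)\to\pi_2(x,y)$, $\pi_3(x,x)\to\pi_3(x,y)$ (all sound over $\boldsymbol{\Gamma}$), and the completeness argument would have to be amended to treat forced literals explicitly rather than appealing to a single-path criterion.
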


Proof of Theorem~\ref{Horn-case} is given in Section~\ref{proof-of-horn} and proof of Theorem~\ref{2-SAT} is given in Section~\ref{2-SAT-proof}.

\section{Classification of ${\rm Dense}(\boldsymbol{\Gamma})$ for the Boolean case}\label{Boolean-dense}
%Using Theorems~\ref{reduction2} and~\ref{T:1} we conclude that the complexity of ${\rm Dense}(\boldsymbol{\Gamma})$ is defined by $Inv\left( Pol\left( \Gamma \right) \right)$, and therefore, by $Pol\left( \Gamma \right)$. It is well-known that $Pol\left( \Gamma \right)$ is closed with respect to superpositions, identifications of variables, and additions of fictitious variables (i.e. is a functional clone). Using this fact, let us classify the complexity of  ${\rm Dense}(\boldsymbol{\Gamma})$ for $D=\{0,1\}$.

The problem ${\rm Dense}(\boldsymbol{\Gamma})$ is tightly connected with the so-called implication and equivalence problems, parameterized by $\boldsymbol{\Gamma}$.
\begin{definition}\label{impl-define} Let $\boldsymbol{\Gamma} = (D, \varrho_1, ..., \varrho_s)$. The {\bf implication problem}, denoted ${\rm Impl}(\boldsymbol{\Gamma})$,  is a decision task with:
\begin{itemize}
\item {\bf An instance:} two relational structures $\mathbf{R} = (V,r_1, ..., r_s)$ and $\mathbf{R}' = (V,r'_1, ..., r'_s)$.
\item {\bf An output:} yes, if ${\rm Hom}(\mathbf{R}, \boldsymbol{\Gamma}) \subseteq {\rm Hom}(\mathbf{R}', \boldsymbol{\Gamma})$, and no, if otherwise.
\end{itemize}
\end{definition}
Theorem 6.5 from~\cite{Schnoor2008} (which is based on the earlier result~\cite{Hemaspaandra}) gives a complete classification of the computational complexity of ${\rm Impl}(\boldsymbol{\Gamma})$ for Boolean languages.

\begin{theorem}[Schnoor, Schnoor, 2008]\label{Schnoor} If $\Gamma$ is Schaefer, then ${\rm Impl}(\boldsymbol{\Gamma})$ can be solved in polynomial time. Otherwise, it is coNP-complete under logspace reductions.
\end{theorem}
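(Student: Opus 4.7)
The plan splits along the dichotomy. For the Schaefer case I will show that ${\rm Impl}(\boldsymbol{\Gamma})$ polynomially reduces to ${\rm Dense}(\boldsymbol{\Gamma})$, which is polynomial by Theorem~\ref{Boolean-dense-thm}. For the non-Schaefer case I will establish coNP-completeness by showing coNP membership (a routine guess-and-check) and coNP-hardness via a log-space reduction from an NP-hard CSP to the complement of ${\rm Impl}(\boldsymbol{\Gamma})$.

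\emph{Tractable direction.} Given an instance $(\mathbf{R}, \mathbf{R}')$ of ${\rm Impl}(\boldsymbol{\Gamma})$, the algorithm is to compute the densification $\mathbf{R}^* = (V, r_1^*, \ldots, r_s^*)$ of $\mathbf{R}$ using the polynomial algorithm for ${\rm Dense}(\boldsymbol{\Gamma})$ and then output YES iff $r'_i \subseteq r_i^*$ for every $i \in [s]$. Correctness follows from the explicit formula $r_i^* = \bigcap_{h \in {\rm Hom}(\mathbf{R}, \boldsymbol{\Gamma})} h^{-1}(\varrho_i)$ of Theorem~\ref{old-mine} (with the vacuous case ${\rm Hom}(\mathbf{R}, \boldsymbol{\Gamma}) = \emptyset$ handled by the convention that the densification is the full structure). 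Indeed, if ${\rm Hom}(\mathbf{R}, \boldsymbol{\Gamma}) \subseteq {\rm Hom}(\mathbf{R}', \boldsymbol{\Gamma})$, every $h$ sends each tuple of $r'_i$ into $\varrho_i$, hence $r'_i \subseteq r_i^*$; conversely, if $r'_i \subseteq r_i^*$, then any $h \in {\rm Hom}(\mathbf{R}, \boldsymbol{\Gamma})$ already sends $r_i^*$ into $\varrho_i$ and a fortiori sends the smaller $r'_i$ into $\varrho_i$.

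\emph{Hard direction.} Membership in coNP is immediate: a map $h : V \to D$ certifies the complement if it can be verified to lie in ${\rm Hom}(\mathbf{R}, \boldsymbol{\Gamma})$ while violating some tuple of some $r'_i$, both checkable in polynomial time. For coNP-hardness when $\Gamma$ is not Schaefer, I would reduce from $\neg{\rm CSP}(\boldsymbol{\Gamma}')$, where $\Gamma' = \Gamma \cup \{\{(0)\}, \{(1)\}\}$; adjoining the constants yields an NP-hard problem because the Post-lattice signature that makes $\Gamma$ non-Schaefer in the sense of this paper exactly coincides, after constants are added, with the genuinely intractable Schaefer classes. Given an instance $\mathbf{R}'$ of ${\rm CSP}(\boldsymbol{\Gamma}')$, I would build an instance $(\mathbf{R}_L, \mathbf{R}_R)$ of ${\rm Impl}(\boldsymbol{\Gamma})$ by placing the $\Gamma$-constraints of $\mathbf{R}'$ into $\mathbf{R}_L$ (together with two fresh marker variables $v_0, v_1$) and encoding each constant constraint $x = c$ by putting a carefully chosen tuple into the right-hand structure $\mathbf{R}_R$ whose satisfaction under $h$ is equivalent to $h(x) = c$. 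The reduction is then to show that $\mathbf{R}'$ is unsatisfiable iff ${\rm Hom}(\mathbf{R}_L, \boldsymbol{\Gamma}) \subseteq {\rm Hom}(\mathbf{R}_R, \boldsymbol{\Gamma})$.

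The principal obstacle is that the reduction must be engineered uniformly across all non-Schaefer Boolean clones, and in particular for $0$-valid or $1$-valid non-Schaefer languages, where $\neg{\rm CSP}(\boldsymbol{\Gamma})$ itself is trivial and a direct reduction from $\neg{\rm CSP}(\boldsymbol{\Gamma})$ fails. Following the Schnoor--Schnoor technique building on Hemaspaandra's earlier work, I would simulate the constants $\{(0)\}$ and $\{(1)\}$ not inside $\Gamma$ but inside the query: the universal quantifier over $h \in {\rm Hom}(\mathbf{R}_L, \boldsymbol{\Gamma})$ in the definition of ${\rm Impl}(\boldsymbol{\Gamma})$ gives the gadget designer an extra quantifier that ordinary pp-definitions lack. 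A case analysis along Post's lattice of Boolean clones not containing $\wedge$, $\vee$, ${\rm mjy}$, or $x \oplus y \oplus z$ then certifies the existence of log-space computable gadgets for simulating the two constants, and the correctness of the resulting reduction completes the coNP-hardness proof.
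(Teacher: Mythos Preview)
The paper does not prove this theorem at all: Theorem~\ref{Schnoor} is quoted from~\cite{Schnoor2008} (building on~\cite{Hemaspaandra}) and is used as a black box. So there is no ``paper's own proof'' to compare against.

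That said, your tractable direction is circular within this paper. You reduce ${\rm Impl}(\boldsymbol{\Gamma})$ to ${\rm Dense}(\boldsymbol{\Gamma})$ and invoke Theorem~\ref{Boolean-dense-thm} to conclude the latter is polynomial for Schaefer $\Gamma$. But Theorem~\ref{Boolean-dense-thm} is the same statement as Theorem~\ref{using-schnoor}, whose proof in Section~\ref{Boolean-dense} explicitly reduces ${\rm Dense}(\boldsymbol{\Gamma})$ to ${\rm Impl}(\boldsymbol{\Gamma})$ and then appeals to Theorem~\ref{Schnoor}. So you are proving Theorem~\ref{Schnoor} from a result that the paper derives \emph{from} Theorem~\ref{Schnoor}. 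To break the cycle you would need an independent polynomial algorithm for ${\rm Impl}$ in each of the four Schaefer cases (e.g., via Gaussian elimination for the affine case, arc/path consistency for the bijunctive case, forward chaining for the Horn/dual-Horn cases), which is exactly what the original Schnoor--Schnoor argument supplies.

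Your hard direction is a reasonable sketch of the Schnoor--Schnoor strategy, but it remains a sketch: you correctly identify the obstacle (constant-preserving non-Schaefer clones, where $\neg{\rm CSP}(\boldsymbol{\Gamma})$ is trivial) and gesture at the solution (simulate constants on the right-hand side of the implication), but the actual gadget construction and the Post-lattice case analysis are not carried out. As written this is a plan, not a proof.
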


This theorem directly leads us to the classification of ${\rm Dense}(\boldsymbol{\Gamma})$.
\begin{theorem}\label{using-schnoor} If  $\Gamma$ is Schaefer, then ${\rm Dense}(\boldsymbol{\Gamma})$ is polynomially solvable. Otherwise, it is NP-hard.
\end{theorem}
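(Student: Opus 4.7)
The plan is to establish a polynomial-time equivalence between ${\rm Dense}(\boldsymbol{\Gamma})$ and ${\rm Impl}(\boldsymbol{\Gamma})$, and then invoke Theorem~\ref{Schnoor} to transfer the dichotomy from one problem to the other. The key observation is that both tasks are, at their core, about determining which new constraints are forced by the existing ones: ${\rm Impl}(\boldsymbol{\Gamma})$ asks this for a prescribed family of candidate constraints, while ${\rm Dense}(\boldsymbol{\Gamma})$ asks it for all candidates simultaneously.

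First I would prove the hardness direction. Suppose $\Gamma$ is not Schaefer. Given an arbitrary instance $(\mathbf{R}, \mathbf{R}')$ of ${\rm Impl}(\boldsymbol{\Gamma})$, observe that ${\rm Hom}(\mathbf{R}, \boldsymbol{\Gamma}) \subseteq {\rm Hom}(\mathbf{R}', \boldsymbol{\Gamma})$ if and only if every constraint of $\mathbf{R}'$ is satisfied by every homomorphism from $\mathbf{R}$ to $\boldsymbol{\Gamma}$, which by Theorem~\ref{old-mine} is equivalent to $\mathcal{C}_{\mathbf{R}'} \subseteq {\rm Dense}(\mathcal{C}_{\mathbf{R}})$. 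Thus, a polynomial-time algorithm for ${\rm Dense}(\boldsymbol{\Gamma})$ would yield a polynomial-time algorithm for ${\rm Impl}(\boldsymbol{\Gamma})$ (after checking set inclusion in polynomial time). By Theorem~\ref{Schnoor}, ${\rm Impl}(\boldsymbol{\Gamma})$ is coNP-complete in the non-Schaefer case, so ${\rm Dense}(\boldsymbol{\Gamma})$ is NP-hard.

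For the tractability direction, suppose $\Gamma$ is Schaefer. The total number of possible constraints is $|\mathcal{C}^{\boldsymbol{\Gamma}}_V| = \mathcal{O}(|V|^k)$ where $k$ is the maximum arity in $\Gamma$, hence polynomial in $|V|$. For each candidate $\langle (v_1, \ldots, v_{\vectornorm{\varrho_i}}), \varrho_i\rangle \notin \mathcal{C}_{\mathbf{R}}$, I would construct the auxiliary structure $\mathbf{R}'$ with the single constraint in question and query the ${\rm Impl}(\boldsymbol{\Gamma})$ oracle on $(\mathbf{R}, \mathbf{R}')$; Theorem~\ref{Schnoor} guarantees this can be done in polynomial time. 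Collecting exactly those candidates for which the implication holds produces, by Theorem~\ref{old-mine}, precisely the relations $r'_i = \bigcap_{h\in {\rm Hom}(\mathbf{R}, \boldsymbol{\Gamma})} h^{-1}(\varrho_i)$ of the maximal instance, which is the required output of ${\rm Dense}(\boldsymbol{\Gamma})$. (The case ${\rm Hom}(\mathbf{R}, \boldsymbol{\Gamma}) = \emptyset$ can be handled separately by first checking satisfiability, which is tractable for Schaefer languages; in this case every candidate is added.)

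The only genuinely nontrivial ingredient is Theorem~\ref{Schnoor}, which is already in the literature; the rest of the argument is the polynomial equivalence sketched above. The main point to be careful about is the direction of the reductions and the handling of the unsatisfiable case, which is why I would dispatch it explicitly via the (tractable) decision version of ${\rm CSP}(\boldsymbol{\Gamma})$ before running the constraint-by-constraint implication queries.
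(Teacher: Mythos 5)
Your proof is correct and takes essentially the same route as the paper: a polynomial-time Turing equivalence between ${\rm Dense}(\boldsymbol{\Gamma})$ and ${\rm Impl}(\boldsymbol{\Gamma})$, established via the observation that $\mathcal{C}_{\mathbf{R}'} \subseteq {\rm Dense}(\mathcal{C}_{\mathbf{R}})$ iff ${\rm Hom}(\mathbf{R}, \boldsymbol{\Gamma}) \subseteq {\rm Hom}(\mathbf{R}', \boldsymbol{\Gamma})$, combined with Theorem~\ref{Schnoor}. Your explicit treatment of the unsatisfiable case is a small extra bit of care that the paper leaves implicit, but the argument is the same.
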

\begin{proof} Let us show that ${\rm Dense}(\boldsymbol{\Gamma})$ can be solved in polynomial time using an oracle access to ${\rm Impl}(\boldsymbol{\Gamma})$. Indeed, let $\mathbf{R} = (V,r_1, ..., r_s)$ be an instance of ${\rm Dense}(\boldsymbol{\Gamma})$. Then, $\langle (v_1,\cdots, v_d), \varrho \rangle\in \mathcal{C}_V^{\boldsymbol{\Gamma}}$ is in ${\rm Dense}(\mathcal{C}_{\mathbf{R}})$ if and only if ${\rm Hom}(\mathbf{R}, \boldsymbol{\Gamma}) \subseteq {\rm Hom}(\mathbf{R}', \boldsymbol{\Gamma})$ where $\mathcal{C}_{\mathbf{R}'} = \{\langle (v_1,\cdots, v_d), \varrho \rangle\}$. Thus, by giving $(\mathbf{R},\mathbf{R}')$ to an oracle of ${\rm Impl}(\boldsymbol{\Gamma})$, we decide whether $\langle (v_1,\cdots, v_d), \varrho \rangle\in  {\rm Dense}(\mathcal{C}_{\mathbf{R}})$. By doing this for all  $\langle (v_1,\cdots, v_d), \varrho \rangle\in \mathcal{C}_V^{\boldsymbol{\Gamma}}$, we compute the whole set ${\rm Dense}(\mathcal{C}_{\mathbf{R}})$ in polynomial time.

Thus, ${\rm Dense}(\boldsymbol{\Gamma})$ is polynomial time Turing reducible to ${\rm Impl}(\boldsymbol{\Gamma})$, and therefore, using Theorem~\ref{Schnoor}, is polynomially solvable if $\Gamma$ is Schaefer.

Let us now show that $\neg {\rm Impl}(\boldsymbol{\Gamma})$ is polynomial time Turing reducible to ${\rm Dense}(\boldsymbol{\Gamma})$. Given an instance $(\mathbf{R}, \mathbf{R}')$ of $\neg {\rm Impl}(\boldsymbol{\Gamma})$, the inclusion ${\rm Hom}(\mathbf{R}, \boldsymbol{\Gamma}) \subseteq {\rm Hom}(\mathbf{R}', \boldsymbol{\Gamma})$ holds if and only if  $\mathcal{C}_{\mathbf{R}'}\subseteq {\rm Dense}(\mathcal{C}_{\mathbf{R}})$.
Thus, by computing ${\rm Dense}(\mathcal{C}_{\mathbf{R}})$ one can efficiently decide whether $\mathcal{C}_{\mathbf{R}'}\subseteq {\rm Dense}(\mathcal{C}_{\mathbf{R}})$, i.e. whether ${\rm Hom}(\mathbf{R}, \boldsymbol{\Gamma}) \subseteq {\rm Hom}(\mathbf{R}', \boldsymbol{\Gamma})$. If $\mathcal{C}_{\mathbf{R}'}\not\subseteq {\rm Dense}(\mathcal{C}_{\mathbf{R}})$, our reduction outputs "yes", and it outputs "no", if otherwise.

If $\Gamma$ is not Schaefer, then $\neg {\rm Impl}(\boldsymbol{\Gamma})$ is NP-complete, and therefore,  ${\rm Dense}(\boldsymbol{\Gamma})$ is NP-hard.
\end{proof}

\section{Non-Schaefer languages and mP/poly}
For our proof of Theorem~\ref{main-bounded}, we need to show  ${\rm Dense}(\boldsymbol{\Gamma})\notin {\rm mP/poly}$ for non-Schaefer languages. Note that under ${\rm NP}\not\subseteq {\rm P/poly}$ (which is widely believed to be true), any NP-hard problem is outside of P/poly. Therefore, if $\Gamma$ is not Schaefer, then ${\rm Dense}(\boldsymbol{\Gamma})\notin {\rm P/poly}$ (and all the more, ${\rm Dense}(\boldsymbol{\Gamma})\notin {\rm mP/poly}$). In the current section we prove ${\rm Dense}(\boldsymbol{\Gamma})\notin {\rm mP/poly}$ {\em unconditionally} and this fact will be used in Section~\ref{bounded-width-main}.
\begin{theorem}\label{non-schaefer} Let $\Gamma$ be a non-Schaefer language. Then, ${\rm Dense}(\boldsymbol{\Gamma})\notin {\rm mP/poly}$.
\end{theorem}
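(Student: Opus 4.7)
The plan is to derive the result by showing that a polynomial-size monotone circuit for ${\rm Dense}(\boldsymbol{\Gamma})$ forces $\boldsymbol{\Gamma}$ to be of bounded width, and then using the Boolean classification of bounded-width languages to contradict the non-Schaefer hypothesis. The argument is an upgrade of Corollary~\ref{corol-mon}: instead of starting from a weak polynomial densification operator, I would start from the weaker assumption ${\rm Dense}(\boldsymbol{\Gamma})\in {\rm mP/poly}$ and still be able to extract a polynomial-size monotone circuit for $\neg {\rm CSP}(\boldsymbol{\Gamma})$.

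Concretely, the first step, assuming $\Gamma$ is not constant-preserving, is to take the polynomial-size monotone circuit $M_n$ computing ${\rm Dense}$ on variable set $[n]$, with output gates $\{y_c\}_{c\in \mathcal{C}^{\boldsymbol{\Gamma}}_n}$, and append a single $\bigwedge_{c\in \mathcal{C}^{\boldsymbol{\Gamma}}_n}y_c$ gate. Because the densification saturates exactly on unsatisfiable instances (by the remark following Definition~\ref{dense-define}), the resulting polynomial-size monotone circuit computes the unsatisfiability indicator, witnessing $\neg {\rm CSP}(\boldsymbol{\Gamma})\in {\rm mP/poly}$. The second step is to invoke the exact reduction used in the proof of Corollary~\ref{corol-mon}: $\neg {\rm CSP}(\boldsymbol{\Gamma})\in {\rm mP/poly}$ transfers via the standard adjunction of a pp-definable relation $\rho$ to $\neg {\rm CSP}\bigl({\rm core}(\Gamma)\cup\{\rho\}\bigr)\in {\rm mP/poly}$, and then Proposition~5.1 of~\cite{ability_to_count} together with the Bulatov/Barto--Kozik theorem forces ${\rm core}(\Gamma)$ to be of bounded width. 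In the Boolean setting this means $\{\vee,\wedge,{\rm mjy}\}\cap {\rm Pol}({\rm core}(\Gamma))\ne \emptyset$, and since these three operations are all idempotent they lift back to polymorphisms of $\Gamma$ itself (at worst interchanging $\vee\leftrightarrow\wedge$ through the negation endomorphism), contradicting $\{\vee,\wedge,\oplus,{\rm mjy}\}\cap {\rm Pol}(\Gamma)=\emptyset$.

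The main obstacle and only delicate point is the treatment of constant-preserving non-Schaefer Boolean languages (the $0$-valid or $1$-valid $\Gamma$ that lack all four of $\vee,\wedge,\oplus,{\rm mjy}$ as polymorphisms): in that case $\neg {\rm CSP}(\boldsymbol{\Gamma})$ is trivially constant and ${\rm core}(\Gamma)$ is a one-element structure of trivially bounded width, so the two-step argument above degenerates. My plan to handle this is to enlarge $\Gamma$ to $\Gamma^+=\Gamma\cup\{\{1-a\}\}$ where $a$ is the preserved constant; $\Gamma^+$ is non-constant-preserving and remains non-Schaefer because the four Schaefer-defining operations are idempotent and therefore preserve unary singleton relations automatically. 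A polynomial-size monotone reduction from ${\rm Dense}(\boldsymbol{\Gamma}^+)$ to ${\rm Dense}(\boldsymbol{\Gamma})$ --- implemented by simulating each unary pinning constraint via auxiliary variables and $\Gamma$-constraints --- would then put the argument back into the non-constant-preserving case and complete the proof. Verifying this reduction is the main technical hurdle; otherwise the argument is a direct combination of ingredients already assembled earlier in the paper.
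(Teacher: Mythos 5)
Your treatment of the non-constant-preserving case is essentially the paper's: Lemma~\ref{equiv} passes ${\rm Dense}(\boldsymbol{\Gamma})\in{\rm mP/poly}$ to $\neg{\rm CSP}(\boldsymbol{\Gamma})\in{\rm mP/poly}$, and the Corollary~\ref{corol-mon}-style argument via Proposition~5.1 of~\cite{ability_to_count} forces bounded width, hence Schaefer (the ``lifting back through the negation endomorphism'' you mention is actually moot here, since a non-constant-preserving Boolean structure is its own core). That much is fine.

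The constant-preserving case is a genuine gap, and it is precisely the hard part of the theorem. Your plan is to enlarge $\Gamma$ to $\Gamma^{+}=\Gamma\cup\{\{1-a\}\}$ and build a polynomial-size monotone reduction from ${\rm Dense}(\boldsymbol{\Gamma}^{+})$ to ${\rm Dense}(\boldsymbol{\Gamma})$ ``by simulating each unary pinning via auxiliary variables and $\Gamma$-constraints.'' This cannot work as stated. Since $\Gamma$ preserves the constant $a$, the constant map $h\equiv a$ is a homomorphism of \emph{every} instance of ${\rm CSP}(\boldsymbol{\Gamma})$, so ${\rm Dense}(\mathcal{C}_{\mathbf R})$ for any $\Gamma$-instance $\mathbf R$ is contained in the set of constraints satisfied by $h\equiv a$. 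No amount of auxiliary variables or $\Gamma$-constraints can make a densification output assert ``$v$ is pinned to $1-a$'' or record that a partially pinned instance is unsatisfiable; the output space of ${\rm Dense}(\boldsymbol{\Gamma})$ simply does not carry that information. So a reduction of the type you propose, whether monotone or not, does not exist, and the degenerate case is not ``back into the non-constant-preserving case.''

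The paper circumvents this with two ingredients you are missing. First, the target of the transfer is not $\boldsymbol{\Gamma}^{+}$ but the single relation $\varrho_{\rm b}=\{(x_1,x_2,x_3)\mid x_1=x_2\ \vee\ x_1=x_3\}$, which is itself preserved by both constants. Lemma~\ref{between} proves ${\rm Dense}(\{0,1\},\varrho_{\rm b})\notin{\rm mP/poly}$ by a clever observation: for a densified instance $r'$ over $\varrho_{\rm b}$, $(i,j,i)\in r'$ \emph{if and only if} no homomorphism satisfies $h(i)=0$ and $h(j)=1$, so densification for this specific relation does encode a ``pinned'' unsatisfiability question, and that question monotonely encodes $\neg{\rm CSP}(\{\varrho_{\rm NAE}\})$. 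Second, the transfer from $\{\varrho_{\rm b}\}$ to $\boldsymbol{\Gamma}$ goes through the notion of strong/A-reducibility (Lemma~\ref{null} and Theorem~\ref{reduction}(c)), which crucially requires not only a quantifier-free pp-formula $\Xi$ with ${\rm pr}_{1:k}\Xi^{\boldsymbol{\Gamma}}=\varrho_{\rm b}$, but also a complement-filling $\delta$ with ${\rm pr}_{1:k}\delta=D^k\setminus\varrho_{\rm b}$ and $\Xi^{\boldsymbol{\Gamma}}\cup\delta\in\langle\Gamma\rangle$. That padding is exactly what makes the reduction compatible with $\Gamma$'s constant-preservation: the union remains a $c$-preserving relation, so the auxiliary gadget introduces no unsatisfiable ``pinned'' instances. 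Neither of these ideas appears in your sketch, and without them the constant-preserving branch of the proof does not go through.
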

\begin{lemma}\label{equiv} For any language $\Gamma$ that is not constant preserving, if ${\rm Dense}(\boldsymbol{\Gamma})\in {\rm mP/poly}$, then $\neg {\rm CSP}(\boldsymbol{\Gamma})\in {\rm mP/poly}$. 
\end{lemma}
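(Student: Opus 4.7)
The plan is to exhibit a short monotone reduction from $\neg {\rm CSP}(\boldsymbol{\Gamma})$ to ${\rm Dense}(\boldsymbol{\Gamma})$, composing the hypothesized monotone circuit for ${\rm Dense}$ with a constant-size monotone post-processing stage. Observe first that, encoded as a Boolean function of the indicator vector of $\mathcal{C}_\mathbf{R} \subseteq \mathcal{C}^{\boldsymbol{\Gamma}}_V$, the map $\mathcal{C}_\mathbf{R} \mapsto [{\rm Hom}(\mathbf{R}, \boldsymbol{\Gamma}) = \emptyset]$ is monotone, since adding constraints only shrinks the set of homomorphisms, so mP/poly is the right target class.

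By the paper's convention, ${\rm Hom}(\mathbf{R}, \boldsymbol{\Gamma}) = \emptyset$ forces ${\rm Dense}(\mathcal{C}_\mathbf{R}) = \mathcal{C}^{\boldsymbol{\Gamma}}_V$, and conversely ${\rm Dense}$ is always a subset of $\mathcal{C}^{\boldsymbol{\Gamma}}_V$. So it suffices to identify, for each $n$, a small set $S_n \subseteq \mathcal{C}^{\boldsymbol{\Gamma}}_{[n]}$ such that on every $n$-variable instance the inclusion $S_n \subseteq {\rm Dense}(\mathcal{C}_\mathbf{R})$ holds if and only if ${\rm Hom}(\mathbf{R}, \boldsymbol{\Gamma}) = \emptyset$. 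The circuit for $\neg {\rm CSP}(\boldsymbol{\Gamma})$ is then the mP/poly circuit for ${\rm Dense}$ followed by a single fan-in-$|S_n|$ conjunction over the designated output wires, which is clearly monotone.

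To construct $S_n$, fix any $v \in [n]$ and, using non-constant-preservation of $\Gamma$, for each $a \in D$ pick an index $i_a \in [s]$ with $(a, \ldots, a) \notin \varrho_{i_a}$. Define
$$
S_n = \bigl\{\, \langle (v, v, \ldots, v), \varrho_{i_a}\rangle \;\big|\; a \in D\, \bigr\},
$$
where the tuple $(v, \ldots, v)$ has length $\vectornorm{\varrho_{i_a}}$, so $|S_n| \le |D|$. If ${\rm Hom}(\mathbf{R}, \boldsymbol{\Gamma}) \neq \emptyset$, pick any $h$ in it and let $a := h(v)$; then $h$ itself witnesses $(h(v), \ldots, h(v)) = (a, \ldots, a) \notin \varrho_{i_a}$, so by the definition of ${\rm Dense}$ via intersections over ${\rm Hom}(\mathbf{R}, \boldsymbol{\Gamma})$ we have $\langle (v, \ldots, v), \varrho_{i_a}\rangle \notin {\rm Dense}(\mathcal{C}_\mathbf{R})$, i.e.\ $S_n \not\subseteq {\rm Dense}(\mathcal{C}_\mathbf{R})$. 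The other direction is immediate from the convention, giving the equivalence.

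Combining everything, the resulting circuit has size ${\rm poly}(n) + |D|$, is monotone, and computes $[{\rm Hom}(\mathbf{R}, \boldsymbol{\Gamma}) = \emptyset]$ on $n$-variable instances, placing $\neg {\rm CSP}(\boldsymbol{\Gamma})$ in ${\rm mP/poly}$. No step is technically demanding; the only conceptual point that must be handled with care is that the ``no-homomorphism'' witness supplied by non-constant-preservation is value-dependent ($i_a$ depends on $a$), which is why $S_n$ has up to $|D|$ elements rather than one. Since $|D|$ is a constant of the fixed template, this does not affect the polynomial bound.
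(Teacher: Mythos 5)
Your proposal is correct and follows essentially the same strategy as the paper's proof: feed the indicator vector of $\mathcal{C}_\mathbf{R}$ into the hypothesized ${\rm mP/poly}$ circuit for ${\rm Dense}(\boldsymbol{\Gamma})$ and then take a conjunction of output wires. The paper simply ANDs all $|\mathcal{C}^{\boldsymbol{\Gamma}}_V|$ output wires (still polynomial), whereas you observe that the constant-size subset $S_n$ suffices because non-constant-preservation already fails along the diagonal tuples at a single variable; this is a harmless optimization that makes the role of the non-constant-preserving hypothesis explicit but does not change the argument.
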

\begin{proof}
Let $\mathbf{R}$ be an instance of ${\rm CSP}(\boldsymbol{\Gamma})$ and $\{x_v\}_{v\in \mathcal{C}_V^{\boldsymbol{\Gamma}}}$ be input boolean variables of a monotone polynomial-size circuit that computes ${\rm Dense}(\mathcal{C}_{\mathbf R})$ such that $x_v = 1$ if and only if $v\in \mathcal{C}_{\mathbf R}$. Let $\{y_{v}\}_{v\in \mathcal{C}_V^{\boldsymbol{\Gamma}}}$ be output variables of that circuit, and $y_v=1$ indicates $v\in {\rm Dense}(\mathcal{C}_{\mathbf R})$. Then, $\bigwedge\limits_{v\in \mathcal{C}_V^{\boldsymbol{\Gamma}}}y_v=1$ if and only if ${\rm Hom}(\mathbf{R}, \boldsymbol{\Gamma})= \emptyset$. Thus, emptiness of ${\rm Hom}(\mathbf{R}, \boldsymbol{\Gamma})$ can be decided by a polynomial-size monotone circuit. Therefore, $\neg {\rm CSP}(\boldsymbol{\Gamma})\in {\rm mP/poly}$.

\end{proof}

%From Lemma~\ref{equiv} we obtain: if $\{(0)\}, \{(1)\}\in \langle\Gamma\rangle$, then the complexities of ${\rm Dense}(\boldsymbol{\Gamma})$ and ${\rm CSP}(\boldsymbol{\Gamma})$ are polynomial (and NP-hard) simultaneously.

%Let us identify all tractable relational clones, i.e. sets $\Gamma$ for which $\Gamma = \langle \Gamma \rangle$, in the Boolean case using Post`s classification \cite{post}. From Theorem \ref{T:1} we see that any such clone correspond to some functional clone $Pol(\Gamma)$.
In the case $D = \left\{ {0,1} \right\}$, there is a countable number
of clones: in the list below we use the notation from the table on page 76 of~\cite{marchenkov} (the same results can be found in the table on page 1402 of~\cite{Mar98}), together with the notation from the Table 1 of~\cite{elmar}. 
For every row,  listed relations form a basis of the relational co-clone corresponding to the functional clone (notations of clones are given according to~\cite{marchenkov} and~\cite{elmar}). At the same time, the functional clone equals the set of polymorphisms of the relations. Below we list all Post co-clones $\Gamma$ except for those that: a) satisfy $\{(0)\}, \{(1)\}\in \Gamma$ and b) $\Gamma$ is Schaefer  (and we are not interested in such languages in the current section).
\begin{equation}\label{post-table}
\begin{array}{ |c|c|c|c| }
\hline
~\cite{marchenkov} &  $\cite{elmar}$ & {\rm Basis \,\,of \,\,clone} & {\rm Basis \,\,of \,\,co-clone} \\
\hline
   {U } & N & \{\neg x,1\} & \varrho_{\rm b}=\big\{\{(x_1,x_2,x_3)\mid x_1  = x_2  \vee x_2  = x_3\} \big\}  \\
   {SU } & N_2 & \{\neg x\} & \varrho_{\rm NAE}=\big\{\{(x_1,x_2,x_3)\mid x_1 \ne x_2  \vee x_1 \ne x_3\} \big\}  \\
   {MU } & I& \{0,1\} & \big\{\{(x_1,x_2)\mid x_1  \le x_2\}, \varrho_{\rm b}\big\}  \\
   {U_{0} } & I_0& \{0\} & \big\{\{(0)\},\varrho_{\rm b} \big\}  \\
   {U_{1} } & I_1& \{1\} & \big\{\{(1)\},\varrho_{\rm b} \big\}  \\
   \hline
\end{array}
\end{equation}
%Thus, from Lemma~\ref{equiv} we conclude that for any $\Gamma$ for which ${\rm Pol}(\Gamma)$ is not among the listed, ${\rm Dense}(\Gamma)$ and ${\rm CSP}(\Gamma)$ have the same computational complexity. 
Next, we will concentrate on languages listed in Table~\ref{post-table}.

Our first goal is to study the complexity of ${\rm Dense}(\boldsymbol{\Gamma})$ where $\boldsymbol{\Gamma} = (\{0,1\}, \varrho_{\rm b})$ where $\varrho_{\rm b} = \{(x_2, x_1, x_3)| x_1  = x_2  \vee x_1  = x_3\}$. %Recall that ${\rm mP/poly}$ is a class of problems that can be computed by nonuniform polynomial-size monotone circuits (i.e. only $\vee,\wedge$ can be used in gates).

\begin{lemma}\label{between}
${\rm Dense}(\boldsymbol{\Gamma} = (\{0,1\}, \varrho_{\rm b}))\notin {\rm mP/poly}$.
\end{lemma}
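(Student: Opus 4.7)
The plan is to argue by contradiction: assume ${\rm Dense}(\boldsymbol{\Gamma}) \in {\rm mP/poly}$ and use this to build a polynomial-size monotone circuit for a function known, unconditionally, to require super-polynomial monotone circuits. Lemma~\ref{equiv} cannot be invoked directly for $\Gamma = \{\varrho_{\rm b}\}$: both $(0,0,0)$ and $(1,1,1)$ lie in $\varrho_{\rm b}$, so $\Gamma$ is constant-preserving and $\neg {\rm CSP}(\boldsymbol{\Gamma})$ is identically false. The workaround is to introduce three distinguished variables $a, b, c$ and focus on the single output bit ``$\langle (a,b,c), \varrho_{\rm b}\rangle \in {\rm Dense}(\mathcal{C}_{\mathbf R})$'', which by itself encodes enough structure to simulate the missing constants.

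The key observation is that $\langle (a,b,c), \varrho_{\rm b}\rangle \in {\rm Dense}(\mathcal{C}_{\mathbf R})$ iff no $h \in {\rm Hom}(\mathbf R, \boldsymbol{\Gamma})$ satisfies $h(a) = h(c) \neq h(b)$. Since $\neg x \in {\rm Pol}(\varrho_{\rm b})$, replacing $h$ by $\neg h$ if necessary reduces this to: no satisfying $h$ has $h(a) = h(c) = 0$ and $h(b) = 1$. Under such a partial assignment, $\varrho_{\rm b}(a, x, c)$ forces $h(x) = 0$ and $\varrho_{\rm b}(b, x, b)$ forces $h(x) = 1$. Letting $\boldsymbol{\Gamma}'' = (\{0,1\}, \varrho_{\rm b}, \{(0)\}, \{(1)\})$, the transformation that sends an instance $\mathbf R$ of ${\rm CSP}(\boldsymbol{\Gamma}'')$ on variable set $V$ to the $\Gamma$-instance $\mathbf R^*$ on $V \cup \{a,b,c\}$ obtained by keeping $\varrho_{\rm b}$-constraints, replacing each $\{(0)\}(x)$ by $\varrho_{\rm b}(a, x, c)$, and each $\{(1)\}(x)$ by $\varrho_{\rm b}(b, x, b)$, therefore gives a monotone reduction from $\neg{\rm CSP}(\boldsymbol{\Gamma}'')$ to this designated output bit of ${\rm Dense}(\boldsymbol{\Gamma})$ evaluated on $\mathbf R^*$.

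What remains --- and this is the main obstacle --- is an unconditional monotone lower bound for $\neg{\rm CSP}(\boldsymbol{\Gamma}'')$. A quick calculation gives ${\rm Pol}(\Gamma'') = $ projections only (the constants $0, 1$ fail to preserve $\{(1)\}$ and $\{(0)\}$ respectively, and $\neg x$ swaps these unary relations), so Theorem~\ref{T:1} yields $\langle \Gamma''\rangle = $ all Boolean relations. Every 3-CNF clause is therefore pp-definable over $\Gamma''$ using fresh existential variables, giving a monotone (in the clause-indicator vector) reduction from Boolean 3-CNF UNSAT to $\neg{\rm CSP}(\boldsymbol{\Gamma}'')$. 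Finally, by encoding ``$G$ has a $k$-clique'' as satisfiability of a 3-CNF formula $\phi_G$ whose clauses depend monotonically on the non-edges of $G$, any polynomial monotone circuit for UNSAT would compose with this construction to yield a polynomial monotone circuit for the CLIQUE function (after relabelling edges versus non-edges), contradicting Razborov's $n^{\Omega(\sqrt{k})}$ monotone circuit lower bound.
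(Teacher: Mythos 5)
Your proposal is correct, but it takes a genuinely different route from the paper. The paper pp-defines $\varrho_{\rm NAE}$ from $\varrho_{\rm b}$ plus two distinguished variables $Z,O$ (fixed to $0,1$), then reads off the membership of the triple $(Z,O,Z)$ in the densified relation to obtain a monotone reduction from $\neg{\rm CSP}_{\rm b}$ (and hence from $\neg{\rm CSP}(\{\varrho_{\rm NAE}\})$) to ${\rm Dense}(\boldsymbol{\Gamma})$; for the unconditional monotone lower bound it cites Proposition~5.1 of~\cite{ability_to_count}, i.e.\ the algebraic machinery connecting bounded width and omitted types. You instead reduce from $\neg{\rm CSP}(\{\varrho_{\rm b},\{(0)\},\{(1)\}\})$ directly, simulating the constants via the gadgets $\varrho_{\rm b}(a,x,c)$ and $\varrho_{\rm b}(b,x,b)$ and reading off the bit $\langle(a,b,c),\varrho_{\rm b}\rangle$; and instead of invoking the bounded-width result you derive the lower bound from scratch by observing ${\rm Pol}(\Gamma'')$ is trivial, routing through 3-CNF UNSAT, and finally composing with a monotone encoding of $k$-clique so that Razborov's bound applies. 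Both arguments are sound; the paper's is shorter because the cited proposition does the heavy lifting, while yours is more self-contained and elementary. Two small notes: your three distinguished variables can be reduced to two by taking $c=a$, which is exactly the paper's $(Z,O,Z)$ trick (and $\varrho_{\rm b}(a,x,a)$ already forces $h(x)=h(a)$); and the final step "after relabelling edges versus non-edges'' is really invoking that a monotone function and its dual $f^d(z)=\neg f(\neg z)$ have equal monotone circuit size (swap $\wedge$ and $\vee$ gates), since the circuit you build computes the dual of ${\rm CLIQUE}_k$ rather than ${\rm CLIQUE}_k$ itself --- worth saying explicitly. Your opening observation that Lemma~\ref{equiv} cannot be applied because $\varrho_{\rm b}$ is constant-preserving is correct and is a useful sanity check the paper leaves implicit.
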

\begin{proof}
Let us introduce the restriction of ${\rm CSP}(\boldsymbol{\Gamma})$, $\boldsymbol{\Gamma} = ( \{0,1\}, \varrho_{\rm b}, \{(0)\}, \{(1)\} )$, in which we assume that in its instance ${\mathbf R} =(V, r, \{Z\}, \{O\})$ the domain $V$ contains two designated variables, $Z$ and $O$, with unary constraints, $Z=0$ and $O=1$. This task is denoted by ${\rm CSP}_{\rm b}$.

It is easy to see that 
\begin{equation*}
\begin{split}
\varrho_{\rm NAE}(x,y,z) = \exists t, O, Z\,\, \varrho_{\rm b}(x,t,z) \wedge 
\varrho_{\rm b}(t,Z,y) \wedge \varrho_{\rm b}(t,O,y) \wedge [O=1] \wedge [Z=0]
\end{split}
\end{equation*}
where $\varrho_{\rm NAE} = \{(x_1,x_2,x_3)| x_1 \ne x_2  \vee x_1 \ne x_3 \}$. Thus, by  ${\rm CSP}_{\rm b}$ we can model any instance of ${\rm CSP}(\{\varrho_{\rm NAE}\})$. %It is well-known that ${\rm CSP}(\{\varrho_{\rm NAE}\})$ is NP-hard, therefore, ${\rm CSP}_{\rm b}$ is NP-hard. 
The standard reduction  of ${\rm CSP}(\{\varrho_{\rm NAE}\})$ to ${\rm CSP}_{\rm b}$ can be implemented as a monotone circuit. Since $\{\varrho_{\rm NAE},\{(0)\},\{(1)\}\}$ is not of bounded width and $\neg {\rm CSP}(\{\varrho_{\rm NAE}\})$ is equivalent to $\neg {\rm CSP}(\{\varrho_{\rm NAE},\{(0)\},\{(1)\}\})$ modulo polynomial-size reductions by monotone circuits (see analogous argument in the proof of Corollary~\ref{corol-mon}), we conclude $\neg {\rm CSP}(\{\varrho_{\rm NAE}\})\notin {\rm mP/poly}$ (using Proposition 5.1. from~\cite{ability_to_count}). Therefore, $\neg {\rm CSP}_{\rm b}\notin {\rm mP/poly}$.

Let us now prove that ${\rm Dense}(\boldsymbol{\Gamma})$, where $\boldsymbol{\Gamma}=(\{0,1\}, \varrho_{\rm b})$, is outside of mP/poly. 
Let ${\mathbf R} =(V, r)$ be an instance of ${\rm Dense}(\boldsymbol{\Gamma} = (\{0,1\}, \varrho_{\rm b}))$ and let ${\mathbf R}'  =(V, r)$ be such that $r'\supseteq r$ and $({\mathbf R}', \boldsymbol{\Gamma})$ is a maximal instance. By construction, for any $i,j\in V$, $(i,j,i)\in r'$ if and only if there is no such $h\in {\rm Hom}({\mathbf R}, \boldsymbol{\Gamma} )$ that satisfies $h(i)=0$ and $h(j)=1$. But the last question, i.e. checking the emptiness of $\{h\in {\rm Hom}({\mathbf R}, \boldsymbol{\Gamma} ) | h(i)=0, h(j)=1\}$ is equivalent to $\neg {\rm CSP}_{\rm b}$ after setting $Z=i, O=j$. The latter argument can be turned into a reduction of $\neg {\rm CSP}_{\rm b}$ to ${\rm Dense}(\boldsymbol{\Gamma} = (\{0,1\}, \varrho_{\rm b}))$. Again, this reduction can be implemented as a monotone circuit.

Therefore, ${\rm Dense}(\boldsymbol{\Gamma} = (\{0,1\}, \varrho_{\rm b}))\notin {\rm mP/poly}$.
\end{proof}

\begin{lemma}\label{null} If $\langle\Gamma\rangle$ equals one of $ {\rm inv} (U_0),  {\rm inv} (U_1),  {\rm inv} (SU), {\rm inv} (M U)$ and $ {\rm inv} (U)$, then $\varrho_{\rm b}$ is strongly reducible to $\Gamma$.
\end{lemma}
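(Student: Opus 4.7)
The plan is to handle the five co-clones by an explicit case analysis, designing the witness pair $(\Xi,\delta)$ separately for the co-clones that contain $\varrho_{\rm b}$ in their canonical basis versus the exceptional case $\langle\Gamma\rangle={\rm inv}(SU)$. First I would note that $\varrho_{\rm b}\in\langle\Gamma\rangle$ in every case: for ${\rm inv}(U),{\rm inv}(MU),{\rm inv}(U_0),{\rm inv}(U_1)$ this is immediate from the table in~(\ref{post-table}), and for ${\rm inv}(SU)$ it follows from the self-duality of $\varrho_{\rm b}$ (it is closed under componentwise negation), so it is preserved by the generator $\neg x$ of $SU$. Modulo pp-equivalence one may take $\Gamma$ to be the canonical basis; for any other $\Gamma$ with the same $\langle\Gamma\rangle$ each basis relation is pp-definable from $\Gamma$, and the constructions below adapt by unfolding these pp-definitions, adding auxiliary variables to $\Xi$ and extending $\delta$ in product form on the new coordinates.

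For the four ``easy'' co-clones, where the canonical basis contains $\varrho_{\rm b}$, I would simply set $n=k=3$ and let $\Xi(x_1,x_2,x_3)$ be the single atom for $\varrho_{\rm b}$, so that $\Xi^{\boldsymbol{\Gamma}}=\varrho_{\rm b}$. Choosing $\delta=D^3\setminus\varrho_{\rm b}=\{(0,1,0),(1,0,1)\}$ yields $\Xi^{\boldsymbol{\Gamma}}\cup\delta=D^3$, which is pp-definable by the empty conjunction and hence lies in $\langle\Gamma\rangle$.

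The substantive case is $\langle\Gamma\rangle={\rm inv}(SU)$ with basis $\{\varrho_{\rm NAE}\}$. Here no quantifier-free pp-formula in only three variables can define $\varrho_{\rm b}$, since every conjunction of $\varrho_{\rm NAE}$-atoms on three variables excludes at least one of $(0,0,0),(1,1,1)$, which are in $\varrho_{\rm b}$. I would introduce an auxiliary variable $y$ with $n=4$ and take
\[
\Xi(x_1,x_2,x_3,y) \;=\; \varrho_{\rm NAE}(x_1,x_3,y)\wedge\varrho_{\rm NAE}(x_2,y,y).
\]
The second conjunct forces $y\ne x_2$, i.e.\ $y=\neg x_2$, and substituting into the first conjunct fails precisely when $x_1=x_3=\neg x_2$, i.e.\ $(x_1,x_2,x_3)\in\{(0,1,0),(1,0,1)\}=D^3\setminus\varrho_{\rm b}$. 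Hence ${\rm pr}_{1:3}\Xi^{\boldsymbol{\Gamma}}=\varrho_{\rm b}$. Choosing $\delta=\{(0,1,0,0),(1,0,1,1)\}$ gives ${\rm pr}_{1:3}\delta=D^3\setminus\varrho_{\rm b}$, and a direct check shows that
\[
\Xi^{\boldsymbol{\Gamma}}\cup\delta \;=\; \{(x_1,x_2,x_3,y)\mid y\ne x_2\},
\]
which is pp-defined by the single atom $\varrho_{\rm NAE}(x_2,y,y)$ (with no constraint on $x_1,x_3$) and therefore lies in $\langle\Gamma\rangle$.

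The principal obstacle is the ${\rm inv}(SU)$ case: $\varrho_{\rm NAE}$ is fully symmetric whereas $\varrho_{\rm b}$ singles out the middle coordinate, so the two forbidden tuples $(0,1,0),(1,0,1)$ cannot be excluded by any symmetric three-variable NAE formula. The key trick is to pin $y=\neg x_2$ via $\varrho_{\rm NAE}(x_2,y,y)$, collapsing the two forbidden tuples into the single NAE-excludable pattern $x_1=x_3=y$; the completion to the simple binary-like relation $\{y\ne x_2\}$ then provides the required $\delta$.
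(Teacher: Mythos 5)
Your proposal is correct for the case where $\Gamma$ is the canonical basis of its co-clone, and the explicit $\Xi=\varrho_{\rm NAE}(x_1,x_3,y)\wedge\varrho_{\rm NAE}(x_2,y,y)$ with $\delta=\{(0,1,0,0),(1,0,1,1)\}$ for the ${\rm inv}(SU)$ case is a nice concrete witness; it checks out. However, there is a genuine gap in the reduction to the canonical basis, and that gap is where the real content of the lemma lives.

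The definition of strong reducibility requires $\Xi$ to be a \emph{quantifier-free} pp-formula over the vocabulary $\tau$ of the given $\Gamma$, not of the canonical basis, and requires the completed relation $\Xi^{\boldsymbol{\Gamma}}\cup\delta$ to lie in $\langle\Gamma\rangle$. You write that ``the constructions below adapt by unfolding these pp-definitions, adding auxiliary variables to $\Xi$ and extending $\delta$ in product form on the new coordinates,'' but this is precisely the step that needs a proof. After unfolding, the relation $\Xi^{\boldsymbol{\Gamma}}\cup(\delta_0\times D^{\text{new}})$ is no longer the trivial relation $D^3$ or the binary relation $\{y\ne x_2\}$ padded by free coordinates; it is a genuinely constrained set on $n>3$ coordinates, and its membership in $\langle\Gamma\rangle$ does not follow from the canonical-basis verification. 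In fact, for the four ``easy'' co-clones your unfolded $\Xi^{\boldsymbol{\Gamma}}\cup\delta$ coincides exactly with the relation $\gamma=\{\mathbf{x}\mid\mathbf{x}\in\Phi^{\boldsymbol{\Gamma}}\text{ or }\mathbf{x}_{1:3}\notin\varrho_{\rm b}\}$ that the paper constructs, and the paper's entire argument is devoted to showing $\gamma\in\langle\Gamma\rangle$.

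The paper's proof is uniform and does not split on the co-clone: start from any quantifier-free $\Phi$ whose existential closure defines $\varrho_{\rm b}$ over $\Gamma$, set $\gamma$ as above, verify that $\gamma$ is preserved by every \emph{unary} polymorphism of $\Gamma$ (checking the four cases $x$, $\neg x$, $0$, $1$), and then invoke the structural fact that in all five listed clones the unary part generates the whole clone, so $\gamma\in{\rm Inv}({\rm Pol}(\Gamma))=\langle\Gamma\rangle$. Your proposal omits exactly this invariance argument. The argument is not hard to add to your framework — for instance, $\delta_0\times D^{\text{new}}$ is symmetric under componentwise negation and avoids the constant tuples, so $\Xi^{\boldsymbol{\Gamma}}\cup\delta$ inherits preservation by any unary $u\in{\rm Pol}(\Gamma)$ from $\Xi^{\boldsymbol{\Gamma}}\in\langle\Gamma\rangle$ — but as written the proposal simply asserts the conclusion of the unfolding step, which is the heart of the lemma.
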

\begin{proof}
Let $\Gamma = \{\rho_1, \cdots, \rho_s\}$. Since $\varrho_{\rm b}\in {\rm inv}(U)\subseteq {\rm inv} (U_0),  {\rm inv} (U_1),  {\rm inv} (SU), {\rm inv} (M U)$, then $\varrho_{\rm b} = \Psi^{\boldsymbol{\Gamma}}$  for a primitive positive formula $\Psi$ over $\tau = \{\pi_1, \cdots, \pi_s\}$. Let $$\Psi = \exists x_{4}... x_{l} \bigwedge_{t\in [N]}\pi_{j_{t}} (x_{o_{t1}}, x_{o_{t2}}, ...).$$ Let us denote $\Phi = \bigwedge_{t\in [N]}\pi_{j_{t}} (x_{o_{t1}}, x_{o_{t2}}, ...)$ and consider a relation $\gamma=\{{\mathbf x}\in \{0,1\}^l\mid {\mathbf x}\in \Phi^{\boldsymbol{\Gamma}}\,\,{\rm or}\,\,{\mathbf x}_{1:3}\notin \varrho_{\rm b}\}$. Let us prove that if $u\in {\rm pol}(\Phi^{\boldsymbol{\Gamma}})$ and $u$ is unary, then $u\in  {\rm Pol}(\Gamma)$. The latter can be checked by considering all 4 cases: $u(x)=x$, or $\neg x$, or 0, or 1. A unary $u(x)=x$ is a polymorphism of any relation. If $u(x)=c$, then $u\in {\rm pol}(\Phi^{\boldsymbol{\Gamma}})$ means that $\Phi^{\boldsymbol{\Gamma}}$ is a $c$-preserving relation. Then $\gamma$ is also $c$-preserving. Finally, if $u(x)=\neg x$, then $u\in {\rm pol}(\Phi^{\boldsymbol{\Gamma}})$ means that $\Phi^{\boldsymbol{\Gamma}}$ is a self-dual relation. Therefore, $\gamma = \Phi^{\boldsymbol{\Gamma}} \cup \{(0,1,0), (1,0,1)\}\times D^{l-3}$ is also self-dual, i.e. $u\in {\rm Pol}(\Gamma)$.

From the last fact we conclude that $\{u: D\to D \mid u\in {\rm Pol}(\Gamma)\}\subseteq \{u: D\to D \mid u\in {\rm pol}(\{\gamma\})\}$. Since  $\{u: D\to D \mid u\in {\rm Pol}(\Gamma)\}$ forms a basis of ${\rm Pol}(\Gamma)$ (in all listed cases), then $\gamma\in {\rm inv}({\rm Pol}(\Gamma))$, i.e. $\gamma\in \langle\Gamma\rangle$. 

Finally, by construction we have $\gamma = \Phi^{\boldsymbol{\Gamma}}\cup \delta$ where ${\rm pr}_{1,2,3}\delta = D^3\setminus \varrho_{\rm b}$ and ${\rm pr}_{1,2,3} 
\Phi^{\boldsymbol{\Gamma}}=\varrho_{\rm b}$. This is exactly the needed condition for $\varrho_{\rm b}$ to be strongly reducible to $\Gamma$.
\end{proof}
\begin{proof}[Theorem~\ref{non-schaefer}]
We have $D = \{0,1\}$. Our goal is to prove that
if $\Gamma$ is non-Schaefer then ${\rm Dense}(\boldsymbol{\Gamma})$ is outside of mP/poly.

%Polynomial solvability of ${\rm Dense}(\boldsymbol{\Gamma})$ in the listed cases follows from Lemma~\ref{equiv}. Now suppose that $\Gamma$ does not satisfy any of the listed properties. 

Let us first consider the subcase where ${\rm CSP}(\boldsymbol{\Gamma})$ is NP-hard. Then, by construction, $\Gamma$ is not constant preserving and ${\rm core}(\Gamma)=\Gamma$. Therefore, $\neg {\rm CSP}(\Gamma)$ and  $\neg {\rm CSP}(\Gamma\cup \{(0)\}\cup \{(1)\})$ can be mutually reduced by polynomial-size monotone circuits (as in the proof of corollary~\ref{corol-mon}). Since $\neg {\rm CSP}(\Gamma)\equiv \neg {\rm CSP}(\Gamma\cup \{(0)\}\cup \{(1)\})\notin {\rm mP/poly}$ (by proposition 5.1. from~\cite{ability_to_count}), then, by Lemma~\ref{equiv}, ${\rm Dense}(\boldsymbol{\Gamma})\notin {\rm mP/poly}$.

Next, let us consider the subcase where ${\rm CSP}(\boldsymbol{\Gamma})$ is tractable. Since we already assumed that $\Gamma$ is not any of 4 Schaeffer classes, this can happen only if $\Gamma$ is constant preserving. Therefore, $\{\{0\}, \{1\}\}\not\subseteq \langle\Gamma\rangle$. All possible variants for ${\rm Pol}(\Gamma)$ are listed in table~\ref{post-table}. 
Since $\langle \{\varrho_{\rm b}\}\rangle = {\rm inv}(U)\subseteq {\rm inv} (U_0),  {\rm inv} (U_1),  {\rm inv} (SU), {\rm inv} (M U)$,  Lemma~\ref{between} in combination with Lemma~\ref{null} and part (c) of Theorem~\ref{reduction} gives us that  ${\rm Dense}(\boldsymbol{\Gamma})\notin {\rm mP/poly}$. 

%Thus, only four cases of classical Schaefer's theorem lead to tractable ${\rm Dense}(\boldsymbol{\Gamma})$, i.e. 2-SAT, Horn, dual-Horn and affine cases. In all other cases, ${\rm Dense}(\boldsymbol{\Gamma})$ is NP-hard and ${\rm Dense}(\boldsymbol{\Gamma})\notin {\rm mP/poly}$.
\end{proof}

\section{Proof of Theorem~\ref{main-bounded}}\label{bounded-width-main}
Let us prove first that for the Boolean domain $D=\{0,1\}$, if  $\Gamma$ satisfies one of the following 3 conditions
\begin{itemize}
\item[(a)] $\Gamma$ is a subset of $\langle \{\varrho_1, \varrho_2, \varrho_3\}\rangle $ where $ \varrho_1 = \left \{{\left ({x, y} \right) | x \vee y} \right \} $, $ \varrho_2 = \left \{{\left ({x, y} \right) | \neg x \vee y} \right \} $ and $ \varrho_3 = \left \{{\left ({x, y} \right) | \neg x \vee \neg y} \right \} $ (2-SAT);
\item[(b)] $\Gamma$ is a subset of $\langle \{\{(0)\}, \{(1)\}, \varrho_{x \wedge y\rightarrow z}\}\rangle $ (Horn case); 
\item[(c)] $\Gamma$ is a subset of $\langle \{\{(0)\}, \{(1)\}, \varrho_{\neg x \wedge \neg y\rightarrow \neg z}\}\rangle $ (dual-Horn case). 
\end{itemize}
then it has a  weak polynomial densification operator. 

Note that from Theorems~\ref{SAT-case-reduce} and~\ref{Horn-case-reduce} it follows that  in all three cases $\Gamma$  is a subset of the
relational clone of an A-language. Part (b) of Theorem~\ref{reduction} claims that $\Gamma$ has a weak polynomial densification operator if languages $\{\varrho_1, \varrho_2, \varrho_3\}, \{\{(0)\}, \{(1)\}, \varrho_{x \wedge y\rightarrow z}\}$ have one. Theorems~\ref{DS-basis}, \ref{Horn-case} and~\ref{2-SAT} give us that $(D,\varrho_1, \varrho_2, \varrho_3), (D,\{(0)\}, \{(1)\}, \varrho_{x \wedge y\rightarrow z})$ are DS-templates.
Therefore,  $\Gamma$ has a weak polynomial densification operator. 

It remains to prove that, in the Boolean case, the weak polynomial densification property implies one of these 3 conditions.

For the general domain $D$, if a constraint language 
$\Gamma$ has a weak polynomial densification operator, then its core is of bounded width (Theorem~\ref{no-linear}). Thus, in the Boolean case, if $\Gamma$ is not constant-preserving and has a weak polynomial densification operator, then it is of bounded width (i.e. $\Gamma$ is in one of the latter three classes). If $\Gamma$ preserves some constant $c$, then w.l.o.g. we can assume that $c=0$. From Theorem~\ref{Boolean-dense-thm}, whose proof is given in Section~\ref{Boolean-dense}, it is clear that either a) ${\rm Dense}(\boldsymbol{\Gamma})$ is NP-hard, or b)  $\Gamma$ is Schaefer, i.e. $\{\{0\}, \{1\}\}\cup \Gamma$ is tractable. In the first case, existence of a polynomial-size implicational system for the densification operator implies that there exists a monotone circuit of size ${\rm poly}(|V|)$ that computes the densification operator ${\rm Dense}$ (a construction of such a circuit is identical to the one given in Theorem~\ref{no-linear}). In other words, ${\rm Dense}(\boldsymbol{\Gamma})\in {\rm mP/poly}$. This contradicts to the claim of Theorem~\ref{non-schaefer} that ${\rm Dense}(\boldsymbol{\Gamma})\notin {\rm mP/poly}$ for non-Schaefer languages.

Thus, we have option b), and this can happen only if either b.1) $\Gamma$ preserves  $\vee$, or $\wedge$, or ${\rm mjy}(x,y,z) = (x\wedge y)\vee (x\wedge z)\vee (y\wedge z)$, or b.2) $\Gamma$ preserves $x\oplus y \oplus z$, but does not preserve $\vee, \wedge$ and ${\rm mjy}$. In the first case, $\Gamma$ satisfies the needed conditions. In the second case, $\Gamma$ is a 0-preserving language, i.e. $0, x\oplus y \oplus z\in {\rm Pol}(\Gamma)$, but $\vee, \wedge, {\rm mjy}\notin {\rm Pol}(\Gamma)$. According to table 2.1 on page 76 of Marchenkov's textbook~\cite{marchenkov}, there are only two functional clones with these properties, i.e. either b.2.1) ${\rm Pol}(\Gamma) = L$ where $L = \{a_0\oplus a_1x_1\oplus \cdots \oplus a_k x_k\}$ is a set of all linear functions, or b.2.2) ${\rm Pol}(\Gamma) = L_0$ where $L_0 = \{a_1x_1\oplus \cdots \oplus a_k x_k\}$. In both cases $\rho_L = \{(x,y,z,t)\mid x\oplus y\oplus z \oplus t=0\}\in \langle \Gamma\rangle$.

\begin{lemma} If ${\rm Pol}(\Gamma) = L_0$ or ${\rm Pol}(\Gamma) = L$, then $\rho_L$ is strongly reducible to $\Gamma$.
\end{lemma}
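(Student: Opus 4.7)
The plan is to reduce the problem to coset arithmetic over $\mathbb{F}_2$. Since $\rho_L\in \langle\Gamma\rangle$ there exist fresh variables $w_1,\dots,w_m$ and a conjunction $\Xi(x_1,x_2,x_3,x_4,w_1,\dots,w_m)$ of $\Gamma$-atoms with $\rho_L(x_1,x_2,x_3,x_4) = \exists w_1\cdots w_m\,\Xi$. I would take this $\Xi$ as the quantifier-free pp-formula required by the definition, set $n=4+m$ and $A := \Xi^{\boldsymbol{\Gamma}}\subseteq D^n$, so that ${\rm pr}_{1:4}A=\rho_L$ automatically. Because every relation in $\Gamma$ is invariant under ${\rm Pol}(\Gamma)\supseteq L_0$, each atom of $\Xi$ is an affine subspace, and hence so is their intersection $A$.

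Next I would produce $\delta$ as a translate of $A$. Pick any $\mathbf{a}\in D^n$ whose first four coordinates XOR to $1$, say $\mathbf{a}=(1,0,0,0,0,\dots,0)$. Since ${\rm pr}_{1:4}A=\rho_L$ and $\mathbf{a}$'s first four coordinates are not in $\rho_L$, one has $\mathbf{a}\notin A$, so $A$ and $A+\mathbf{a}$ are disjoint cosets. Set $\delta:=A+\mathbf{a}$ and $\gamma:=A\cup \delta=\Xi^{\boldsymbol{\Gamma}}\cup\delta$. The required projection conditions are then immediate: ${\rm pr}_{1:4}\delta = \rho_L+(1,0,0,0)=D^4\setminus\rho_L$.

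The real content is the membership $\gamma\in\langle\Gamma\rangle={\rm Inv}({\rm Pol}(\Gamma))$. In both cases $c_0\in{\rm Pol}(\Gamma)$, so every relation of $\Gamma$ contains the zero tuple and therefore $\mathbf{0}\in A$; this upgrades the affine subspace $A$ to a linear subspace of $\mathbb{F}_2^n$. A direct coset computation shows closure of $\gamma$ under the ternary affine sum: if $x_i=y_i+c_i\mathbf{a}$ with $y_i\in A$ and $c_i\in\{0,1\}$ for $i=1,2,3$, then $x_1\oplus x_2\oplus x_3 = (y_1\oplus y_2\oplus y_3)+(c_1\oplus c_2\oplus c_3)\mathbf{a}\in A\cup(A+\mathbf{a})=\gamma$. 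Together with $\mathbf{0}\in\gamma$ this already makes $\gamma$ a linear subspace, yielding $\gamma\in{\rm Inv}(L_0)$. In the case ${\rm Pol}(\Gamma)=L$, one also has $c_1\in{\rm Pol}(\Gamma)$, so $\mathbf{1}\in A\subseteq\gamma$; since $A$ is linear and contains $\mathbf{1}$, $A+\mathbf{1}=A$ and hence $\gamma+\mathbf{1}=\gamma$, i.e.\ $\gamma$ is invariant under $\neg x = 1\oplus x$. Combined with invariance under $x\oplus y\oplus z$ and containing $\mathbf{0},\mathbf{1}$, this gives $\gamma\in{\rm Inv}(L)=\langle\Gamma\rangle$.

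The only subtle point is verifying that $\mathbf{a}$ can indeed be chosen outside $A$ (handled by the parity mismatch with $\rho_L$) and, in the $L$-case, that both constants preserve $A$ so that $A$ is invariant under $\neg$; everything else is routine $\mathbb{F}_2$-linear algebra on the two cosets.
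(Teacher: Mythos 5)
Your proof is correct, and it reaches the conclusion by a genuinely different mechanism than the paper's. Both arguments exploit the same underlying $\mathbb{F}_2$-linear structure — $\Phi^{\boldsymbol{\Gamma}}$ is a linear subspace and $\delta$ is a disjoint coset of it — but the way membership of $\Phi^{\boldsymbol{\Gamma}}\cup\delta$ in $\langle\Gamma\rangle$ is established differs. The paper sets $\Psi(x_1,\dots,x_l)=\exists x_4\,\Phi(x_1,\dots,x_l)$, i.e.\ it cylindrifies $\Phi^{\boldsymbol{\Gamma}}$ along the coordinate $x_4$ by re-introducing $x_4$ as a dummy free variable; then $\Psi^{\boldsymbol{\Gamma}}=\Phi^{\boldsymbol{\Gamma}}\cup(\Phi^{\boldsymbol{\Gamma}}\oplus e_4)$ is automatically pp-definable, so $\Psi^{\boldsymbol{\Gamma}}\in\langle\Gamma\rangle$ requires no verification, and $\delta:=\Psi^{\boldsymbol{\Gamma}}\setminus\Phi^{\boldsymbol{\Gamma}}$ needs only the projection check. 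This is uniform across the $L$ and $L_0$ cases. You instead take a hand-built translate $\delta=A+\mathbf a$ (shifting along $e_1$ rather than $e_4$, which is immaterial) and verify invariance of $A\cup\delta$ under ${\rm Pol}(\Gamma)$ directly by coset arithmetic, appealing to the Galois connection $\langle\Gamma\rangle={\rm Inv}({\rm Pol}(\Gamma))$; this forces a small case split to handle $\neg$ when ${\rm Pol}(\Gamma)=L$. Your version is more elementary and makes the two-coset structure explicit, at the cost of re-deriving by hand what the paper gets for free from pp-definability; the paper's cylindrification trick is shorter and case-free. Both are sound.
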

\begin{proof}
%Let us first consider  the case ${\rm Pol}(\Gamma) = L_0$. 
Note that $x\oplus y\in L_0\subseteq L$. Therefore, for any $\varrho\in \langle \Gamma \rangle$ we have $ \forall {\mathbf x}, {\mathbf y}\in \varrho\to {\mathbf x}\oplus {\mathbf y}\in \varrho$ where $\oplus$ is applied component-wise, i.e. $\varrho$ is a linear subspace. 
Since $\rho_L\in \langle \Gamma \rangle$, then there is a quantifier-free primitive positive formula $\Phi(x_1, \cdots, x_l)$ such that $\rho_L = {\rm pr}_{1,2,3,4}\Phi^{\boldsymbol{\Gamma}}$. Let us set $\Psi(x_1,\cdots, x_l) = \exists x_4 \Phi(x_1, \cdots, x_l)$, i.e. $\Psi$ depends on $x_4$ fictitiously. %By construction, ${\rm pr}_{1,2,3,4}\Psi^{\boldsymbol{\Gamma}} = {\rm pr}_{1,2,3}\rho_L\times D = D^4$. 
Let us define $\delta = \Psi^{\boldsymbol{\Gamma}}\setminus \Phi^{\boldsymbol{\Gamma}}$. Thus, we have $\Phi^{\boldsymbol{\Gamma}}\cup \delta\in  \langle \Gamma \rangle$, $\rho_L = {\rm pr}_{1,2,3,4}\Phi^{\boldsymbol{\Gamma}}$ and ${\rm pr}_{1,2,3,4}\delta ={\rm pr}_{1,2,3,4} \Psi^{\boldsymbol{\Gamma}}\setminus \Phi^{\boldsymbol{\Gamma}} = {\rm pr}_{1,2,3,4} \{{\mathbf x}\oplus a(0,0,0,1,0,\cdots,0)\mid a\in D, {\mathbf x}\in \Phi^{\boldsymbol{\Gamma}}\}\setminus \Phi^{\boldsymbol{\Gamma}} =  {\rm pr}_{1,2,3,4} \{{\mathbf x}\oplus (0,0,0,1,0,\cdots,0)\mid {\mathbf x}\in \Phi^{\boldsymbol{\Gamma}}\}= \{(x,y,z,t)\mid x\oplus y\oplus z \oplus t=1\} = D^4\setminus \rho_L$. The latter is the condition for strong reducibility of $\rho_L$ to $\Gamma$.
%Now consider  the case ${\rm Pol}(\Gamma) = L$. The basis of $L$ is $\{x\oplus y,1\}$.
\end{proof}

Using part (b) of Theorem~\ref{reduction}, the weak polynomial densification property of $\Gamma$ and the latter lemma, we obtain that $\{\rho_L\}$ has a weak polynomial densification operator. The following Lemma contradicts to our conclusion. Therefore, in the Boolean case, the weak polynomial densification property implies one of 3 conditions given above.

\begin{lemma} $\{\rho_L\}$ does not have a weak polynomial densification operator.
\end{lemma}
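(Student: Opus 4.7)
The plan is to argue by contradiction: assume $\{\rho_L\}$ admits a weak polynomial densification operator. The first step is to apply verbatim the circuit construction inside the proof of Theorem~\ref{no-linear}; only its intermediate output is needed, not the final $\neg{\rm CSP}$ extraction, so the fact that $\rho_L$ is constant-preserving poses no obstacle. This yields, for every $n$, a monotone Boolean circuit $M_n$ of size $\mathcal{O}({\rm poly}(n))$ whose output coordinate indexed by $x\in\mathcal{C}^{\boldsymbol{\Gamma}}_n$ equals $1$ iff $x\in{\rm Dense}(\mathcal{C}_\mathbf{R})$.

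Next I would exhibit a monotone, polynomial-size reduction from the complement of ${\rm CSP}(\boldsymbol{\Gamma}^*)$, where $\boldsymbol{\Gamma}^*=(\{0,1\},\rho_L,\{(0)\},\{(1)\})$, to the densification problem for $\{\rho_L\}$. Given an instance $\mathbf{R}$ of ${\rm CSP}(\boldsymbol{\Gamma}^*)$ over $V$ with unary-constraint sets $V_0,V_1\subseteq V$, introduce two fresh variables $Z,O$ and form a $\{\rho_L\}$-instance $\mathbf{R}'$ on $V\cup\{Z,O\}$ by keeping every quaternary constraint of $\mathbf{R}$, adding $\rho_L(v,Z,O,O)$ for each $v\in V_0$ (which enforces $v=Z$, since the two occurrences of $O$ cancel in the XOR), and adding $\rho_L(v,O,Z,Z)$ for each $v\in V_1$ (enforcing $v=O$). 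The translation $\mathcal{C}_\mathbf{R}\mapsto\mathcal{C}_{\mathbf{R}'}$ is a monotone map of Boolean vectors of polynomial size, hence realisable by a polynomial-size monotone circuit.

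Because $\rho_L$ is self-dual (preserved by $x\mapsto 1-x$) and both constant maps are always homomorphisms, the projection ${\rm pr}_{Z,O}{\rm Hom}(\mathbf{R}',\boldsymbol{\Gamma})$ is closed under coordinatewise complementation and contains $(0,0)$ and $(1,1)$, so it is either the equality relation $\{(0,0),(1,1)\}$ or the full square $\{0,1\}^2$. Consequently $\mathbf{R}$ is unsatisfiable iff no homomorphism of $\mathbf{R}'$ separates $Z$ and $O$, which in turn happens iff $\langle(Z,O,w,w),\rho_L\rangle\in{\rm Dense}(\mathcal{C}_{\mathbf{R}'})$ for any (equivalently, for some) $w\in V\cup\{Z,O\}$. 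Reading off that single output bit of $M_{|V|+2}$ therefore produces a polynomial-size monotone circuit for $\neg{\rm CSP}(\boldsymbol{\Gamma}^*)$.

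Finally I would appeal to Proposition~5.1 of~\cite{ability_to_count}: the idempotent algebra $(\{0,1\},{\rm Pol}(\boldsymbol{\Gamma}^*))$ admits the Mal'tsev polymorphism $x\oplus y\oplus z$, so the variety it generates does not omit the affine type, whence $\neg{\rm CSP}(\boldsymbol{\Gamma}^*)\notin{\rm mP/poly}$. This contradicts the circuit just built, completing the proof. The main obstacle is designing the simulation of the unary relations $\{(0)\}$ and $\{(1)\}$ using only $4$-ary $\rho_L$-gadgets so that the combined reduction is both faithful and monotone; the self-duality of $\rho_L$ is precisely what forces the two-variable projection on $(Z,O)$ into the ``equality or everything'' dichotomy, which is what lets a single densification output bit serve as an exact indicator of unsatisfiability.
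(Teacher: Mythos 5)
Your proposal is correct and follows the same high-level strategy as the paper's proof: extract a polynomial-size monotone circuit for ${\rm Dense}$ from the hypothesized weak polynomial implicational system (as in Theorem~\ref{no-linear}, noting correctly that the constant-preserving restriction there only matters for the final $\neg{\rm CSP}$ extraction and not for the densification circuit itself), monotonely reduce $\neg{\rm CSP}$ of a not-bounded-width constant-adjoined language to a single densification query, and contradict Proposition~5.1 of~\cite{ability_to_count}. The difference is in the gadget. The paper works with the \emph{ternary} affine relation $\varrho=\{(x,y,z)\mid x\oplus y\oplus z=0\}$ together with both constants, encodes the $\{(0)\}$-constraints directly, and handles the $\{(1)\}$-constraints by quotienting the variable set by an equivalence relation that identifies all variables forced to $1$ with a single fresh variable, then reads off whether one specific $\rho_L$-tuple is in the densification. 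You instead stay with $\rho_L$ itself, introduce two designated variables $Z,O$, and exploit the observation that $\rho_L(v,Z,O,O)$ and $\rho_L(v,O,Z,Z)$ encode $v=Z$ and $v=O$ respectively; self-duality then forces the $(Z,O)$-projection of ${\rm Hom}(\mathbf{R}',\boldsymbol{\Gamma})$ to be either equality or the full square, so a single output bit $\rho_L(Z,O,w,w)$ decides satisfiability. Your gadget avoids the variable-merging step and is a cleaner faithful monotone reduction; both routes are sound. One point of imprecision: the sentence ``admits the Mal'tsev polymorphism $x\oplus y\oplus z$, so the variety it generates does not omit the affine type'' is not a correct implication in general (a primal algebra on $\{0,1\}$ has a Mal'tsev term yet omits the affine type). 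The correct justification is that ${\rm Pol}(\boldsymbol{\Gamma}^*)$ is precisely the idempotent linear clone $L_0\cap[{\rm idempotent}]$, i.e.\ the module clone of $\mathbb{Z}_2$, which is the prototypical algebra whose variety admits the affine type; this is also what the paper relies on when it invokes that the core of $\Gamma'$ is not of bounded width.
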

\begin{proof} Let us prove the statement by reductio ad absurdum. Suppose that $\Gamma=\{\rho_L\}$ has  a weak polynomial densification operator. Therefore, ${\rm Dense}(\boldsymbol{\Gamma})\in {\rm mP/poly}$.

Since the core of $\Gamma'=\{\varrho, \{(0)\}, \{(1)\}\}$ where $\varrho = \{(x,y,z) \mid x\oplus y\oplus z=0\}$ is not of bounded width, by proposition 5.1 from~\cite{ability_to_count}, $\neg {\rm CSP}(\boldsymbol{\Gamma}')$ cannot be computed by a polynomial-size monotone circuit. Let us describe a monotone reduction of $\neg {\rm CSP}(\boldsymbol{\Gamma}')$ to ${\rm Dense}(\boldsymbol{\Gamma})$ which will imply $\neg {\rm CSP}(\boldsymbol{\Gamma}')\in {\rm mP/poly}$. This will be a contradiction.

According to~\cite{marchenkov}, $\Gamma=\{\rho_L\}$ is a basis of ${\rm Inv}(L_0)$. Therefore, 
$\langle \{\rho_L\}\rangle$ equals the set of all linear subspaces in $\{0,1\}^n, n\in {\mathbb N}$. In other words, for any $([n], r)$, ${\rm Hom}(([n], r),\boldsymbol{\Gamma})$ is a linear subspace of $\{0,1\}^n$, and $\{{\rm pr}_{[k]}{\rm Hom}(([n], r),\boldsymbol{\Gamma}) \mid n\in {\mathbb N}, k\leq n, r\subseteq [n]^4\}$ spans all possible linear subspaces. %The question $\langle (v_1,v_2,v_3,v_4), \rho_L\rangle\in {\rm Dense}(\mathcal{C}_{{\mathbf R}})$ is equivalent to the decision problem that asks  whether $\{h\in {\rm Hom}({\mathbf R},\boldsymbol{\Gamma})\mid h(v_1)\oplus h(v_2)\oplus h(v_3)\oplus h(v_4)=1\}=\emptyset$.

Let $\mathbf{R}' = ([n], r', Z, O)$ be an instance of $\neg {\rm CSP}(\boldsymbol{\Gamma}'=(D,\varrho, \{(0)\}, \{(1)\}))$.
Since $\varrho, \{(0)\}\in \langle \{\rho_L\}\rangle$,  a set of constraints $\{\langle (v_1,v_2,v_3), \varrho\rangle \mid (v_1,v_2,v_3)\in r'\}\cup \{\langle v, \{(0)\}\rangle \mid v\in Z\}\cup \{\langle (n+1,n+2,n+3),\varrho\rangle\}$  can be modeled as a set of constraints over $\{\rho_L\}$ with an extended set of variables $[m], m\geq n+3$, or alternatively, as 
an instance $\mathbf{R}'' = ([m], r)$ of ${\rm CSP}( \{\rho_L\})$.  Let $\sim$ 
be an equivalence relation on $[m+1]$ with equivalence classes $\big\{\{i\}\mid i \in [m]\setminus O\big\}\cup \big\{\{m+1\}\cup O\big\}$ and let $\overline{x}$ denote an equivalence class that contains $x\in [m+1]$.
A relational structure $\mathbf{R} = ([m+1]/\sim, \overline{r})$ where $\overline{r} = \{(\overline{x}, \overline{y}, \overline{z}, \overline{t}) \mid (x,y,z,t)\in r\}$, considered as an instance of ${\rm Dense}( \{\rho_L\})$, satisfies: $(\overline{n+1},\overline{n+2},\overline{n+3},\overline{m+1})\in {\rm Dense}(\mathcal{C}_{\mathbf{R}})$ if and only if ${\rm Hom}(\mathbf{R}', \boldsymbol{\Gamma}') = \emptyset$. Indeed, the constraint $\langle (n+1, n+2,n+3), \varrho\rangle$ that is satisfied for assignments in ${\rm Hom}(\mathbf{R}'', \boldsymbol{\Gamma})$, together with $(\overline{n+1},\overline{n+2},\overline{n+3},\overline{m+1})\in {\rm Dense}(\mathcal{C}_{\mathbf{R}})$, implies that $h(\overline{m+1})=0$ for any $h\in {\rm Hom}(\mathbf{R}, \boldsymbol{\Gamma})$. Or, equivalently, $\{h\in {\rm Hom}(\mathbf{R}, \boldsymbol{\Gamma}) \mid h(\overline{m+1})=1\}=\emptyset$. The latter is equivalent to $\{h\in {\rm Hom}(\mathbf{R}'', \boldsymbol{\Gamma}) \mid h(x)=1, x\in O\}=\emptyset$, or ${\rm Hom}(\mathbf{R}', \boldsymbol{\Gamma}')=\emptyset$.

By construction, the indicator Boolean vector of the subset $\mathcal{C}_{\mathbf{R}''}\in 2^{\mathcal{C}_{[m]}^{\boldsymbol{\Gamma}}}$, i.e. the Boolean vector $\mathbf{x}\in \{0,1\}^{\mathcal{C}_{[m]}^{\boldsymbol{\Gamma}}}$, $\mathbf{x}(\langle (v_1, v_2,v_3,v_4), \rho_L\rangle )=1\Leftrightarrow \langle (v_1, v_2,v_3,v_4), \rho_L\rangle \in \mathcal{C}_{\mathbf{R}''}$ can be computed from the indicator Boolean vector of $\mathcal{C}_{\mathbf{R}'}\in 2^{\mathcal{C}_{[n]}^{\boldsymbol{\Gamma}'}}$ by a polynomial-size monotone circuit.
Further, the indicator Boolean vector of the subset $\mathcal{C}_{\mathbf{R}}\in 2^{\mathcal{C}_{[m+1]/\sim}^{ \boldsymbol{\Gamma}}}$ can be computed by  a polynomial-size monotone circuit from the indicator Boolean vector of $\mathcal{C}_{\mathbf{R}''}\in 2^{\mathcal{C}_{[m]}^{\boldsymbol{\Gamma}}}$ and the indicator Boolean vector of $O\in 2^{[n]}$. Finally, we feed the indicator vector of $\mathcal{C}_{\mathbf{R}}$ to ${\rm Dense}(\boldsymbol{\Gamma})$ and compute whether $(\overline{n+1},\overline{n+2},\overline{n+3},\overline{m+1})\in {\rm Dense}(\mathcal{C}_{\mathbf{R}})$. Thus, the emptiness of ${\rm Hom}(\mathbf{R}', \boldsymbol{\Gamma}')$ can be decided by a polynomial-size monotone circuit which contradicts $\neg {\rm CSP}(\Gamma')\not\in {\rm mP/poly}$.

\end{proof}

\section{Proofs of Theorems~\ref{SAT-case-reduce} and~\ref{Horn-case-reduce}}\label{Horn-case-reduce-proof}
\begin{proof}[Theorem~\ref{SAT-case-reduce}] Let $\boldsymbol{\Gamma} = (D=\{0,1\}, \varrho_1, \varrho_2, \varrho_3)$ where $ \varrho_1 = \big \{ ({x, y} ) | x \vee y \big \} $,
 $ \varrho_2 = \big \{ ({x, y} ) | \neg x \vee y \big \} $
 and $ \varrho_3 = \big \{{ ({x, y} ) | \neg x \vee \neg y} \big \} $. 

First, let us note that  any binary relation $\rho\subseteq D^2$ is strongly reducible to $\Gamma$, due to $\rho = \bigcap_{\gamma\in S:  \rho\subseteq \gamma}\gamma$ where $S = \{\varrho_1,\varrho_2,\varrho_3, \varrho_2^T\}$, $\varrho_2^T = \{(y,x)\mid (x,y)\in \varrho_2\}$ (in the definition of strong reducibility one can set $\Xi(x,y) = \bigwedge_{i:\rho\subseteq \varrho_i} \pi_i(x,y) \bigwedge_{\rho\subseteq \varrho^T_2} \pi_2(y,x)$ and $\delta = D^2\setminus \rho$).

It is well-known that $\langle \Gamma\rangle = {\rm pol}({\rm mjy})$ where ${\rm mjy}(x,y,z) = (x\wedge y)\vee (x\wedge z)\vee (y\wedge z)$ is a majority operation. 
Every $n$-ary relation $\rho\in\langle \Gamma\rangle$ is defined by its binary projections $\rho_{ij} = \{(x_i,x_j)\mid (x_1, \cdots, x_n)\in \rho\}$, i.e.
$$
\rho = \bigcap_{ i,j \in [n]}r_{ij}
$$
where $r_{ij} = \{(x_1, \cdots, x_n) \mid (x_i,x_j)\in  \rho_{ij}\}$. Since $\rho_{ij}$ is strongly reducible to $\Gamma$, $r_{ij}$ also has this property. Thus, $\rho$ is A-reducible to $\Gamma$, and therefore, $\Gamma$ is an A-language.
\end{proof}

{\bf The Horn case.} Let $\boldsymbol{\Gamma} = (D=\{0,1\}, \{(0)\}, \{(1)\}, \varrho_{x \wedge y\rightarrow z})$.
In other words, $\langle \Gamma\rangle$ is a set of relations that is closed under component-wise conjunction, i.e. ${\mathbf x}, {\mathbf y}\in \rho\in \langle \Gamma\rangle$ implies ${\mathbf x}\wedge {\mathbf y}\in \rho$. 
\begin{lemma} Let $D=\{0,1\}$ and $\rho$ be a set of satisfying assignments of a Horn clause, i.e. $$\rho = \{(x_{1}, \cdots, x_{n})\mid (x_{1}\wedge \cdots\wedge x_{n}\to 0)\}$$ or $$\rho = \{(x_{1}, \cdots, x_{n+1})\mid  (x_{1}\wedge \cdots\wedge x_{n}\to x_{n+1})\}.$$ Then, $\rho$ is strongly reducible to $\Gamma$.
\end{lemma}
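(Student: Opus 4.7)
The plan is to treat both shapes of Horn clause by a uniform chain encoding of partial conjunctions. I introduce auxiliary variables $y_2, \ldots, y_n$ (and set $y_1 := x_1$), and for every $i \in \{2, \ldots, n\}$ I include in $\Xi$ the three conjuncts $\varrho_{x \wedge y \to z}(y_{i-1}, x_i, y_i)$, $\varrho_{x \wedge y \to z}(y_i, y_i, y_{i-1})$, $\varrho_{x \wedge y \to z}(y_i, y_i, x_i)$. The first atom forces $y_i \geq y_{i-1} \wedge x_i$, and the other two force $y_i \leq y_{i-1}$ and $y_i \leq x_i$; together they pin $y_i = y_{i-1} \wedge x_i$, and a short induction yields $y_i = x_1 \wedge \cdots \wedge x_i$ on every tuple of $\Xi^{\boldsymbol{\Gamma}}$.

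Next I tie the chain to the conclusion. For $\rho = \{(x_1,\ldots,x_{n+1}) \mid x_1 \wedge \cdots \wedge x_n \to x_{n+1}\}$, I add $\varrho_{x \wedge y \to z}(y_n, y_n, x_{n+1})$, which asserts $y_n \leq x_{n+1}$; the projection of $\Xi^{\boldsymbol{\Gamma}}$ onto coordinates $1,\ldots,n+1$ becomes $\{(x_1,\ldots,x_{n+1}) \mid x_{n+1} \geq x_1 \wedge \cdots \wedge x_n\} = \rho$, and I take $\delta := \{\vec b\}$ with $\vec b$ equal to $0$ at the $x_{n+1}$-coordinate and $1$ elsewhere, whose projection is $\{(1,\ldots,1,0)\} = D^{n+1} \setminus \rho$. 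For $\rho = \{(x_1,\ldots,x_n) \mid x_1 \wedge \cdots \wedge x_n \to 0\}$, I instead include the unary atom from the predicate $\{(0)\}$ applied to $y_n$, pinning $y_n = 0$; then $\Xi^{\boldsymbol{\Gamma}}$ projects to $\{(x_1,\ldots,x_n) \mid x_1 \wedge \cdots \wedge x_n = 0\} = \rho$, and $\delta$ is the singleton all-ones tuple, projecting to $\{(1,\ldots,1)\} = D^n \setminus \rho$.

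The nontrivial step is verifying $\Xi^{\boldsymbol{\Gamma}} \cup \delta \in \langle \Gamma \rangle$. Since $\langle \Gamma \rangle$ consists precisely of the relations closed under componentwise $\wedge$, this reduces to checking $\wedge$-closure of the union. The set $\Xi^{\boldsymbol{\Gamma}}$ is defined by Horn atoms, hence already $\wedge$-closed, and $\vec b \wedge \vec b = \vec b$. The interesting case is $\vec a \wedge \vec b$ for $\vec a \in \Xi^{\boldsymbol{\Gamma}}$. In the conclusion-free case $\vec b$ is the all-ones tuple and $\vec a \wedge \vec b = \vec a \in \Xi^{\boldsymbol{\Gamma}}$. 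In the conclusion-bearing case, $\vec a \wedge \vec b$ agrees with $\vec a$ everywhere except the $x_{n+1}$-coordinate, which is zeroed out, and I split on $m := a_1 \wedge \cdots \wedge a_n$. If $m = 0$, then the chain value $a_{y_n} = m = 0$, so $y_n \leq x_{n+1}$ continues to hold and the modified tuple lies in $\Xi^{\boldsymbol{\Gamma}}$. If $m = 1$, then each $a_i = 1$ for $i \leq n$, the pinning forces each $a_{y_i} = 1$, and $y_n \leq x_{n+1}$ forces $a_{n+1} = 1$, so $\vec a$ is itself the all-ones tuple and $\vec a \wedge \vec b = \vec b \in \delta$.

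The main obstacle is precisely this closure argument: keeping the union $\wedge$-closed when $\vec b$ flips $x_{n+1}$ to $0$ requires the chain variables to be rigidly determined by the $x_i$'s. Encoding only the forward implications $y_i \geq y_{i-1} \wedge x_i$ would leave slack in the $y_i$'s that a meet with $\vec b$ could exploit to produce a tuple violating the last Horn atom while not coinciding with $\vec b$; the two-sided pinning via the auxiliary $\varrho_{x \wedge y \to z}(y_i, y_i, \cdot)$ atoms is exactly what rules this out.
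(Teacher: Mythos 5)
Your proof is correct and follows essentially the same route as the paper's: encode the running conjunction $x_1\wedge\cdots\wedge x_i$ in a chain of auxiliary variables, choose $\delta$ to be the single tuple projecting to the unique falsifying assignment, and check $\wedge$-closure of $\Xi^{\boldsymbol{\Gamma}}\cup\delta$ directly. Two small differences: the paper handles the conclusion-free clause with a simpler one-sided chain (needing only $y_i\ge y_{i-1}\wedge x_i$ together with $y_n=0$, since meeting with the all-ones $\delta$ is harmless), reserving two-sided pinning for the conclusion-bearing case only; and for that pinning it introduces an auxiliary variable $O$ constrained to $1$ and a macro $(x\wedge y=z)$, whereas you obtain $y_i\le y_{i-1}$ and $y_i\le x_i$ more directly via the repeated-variable atoms $\varrho_{x\wedge y\to z}(y_i,y_i,\cdot)$ — a modest streamlining that makes the two cases uniform.
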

\begin{proof}
Let us consider first the case of
$
\Phi = (x_{1}\wedge \cdots\wedge x_{n}\to 0)$. This formula can be given as $\Phi \equiv \exists x_{n+1}, \cdots,x_{2n-1} \Xi(x_{1}, \cdots,x_{2n-1})$ where
$$
\Xi(x_{1}, \cdots,x_{2n-1}) =(x_1\wedge x_2\to x_{n+1}) \wedge (x_{2n-1}=0) \bigwedge_{i=3}^n (x_i\wedge x_{n+i-2}\to x_{n+i-1}).
$$
If we define a $2n-1$-ary $\delta$ as $\{(1,\cdots, 1)\}$, then it can be checked that $\Xi^{\boldsymbol{\Gamma}}\cup \delta$ is a $\wedge$-closed set. Indeed, for any ${\mathbf x}\in \Xi^{\boldsymbol{\Gamma}}$ and ${\mathbf y}\in \delta$, we have ${\mathbf x}\wedge {\mathbf y}={\mathbf x}\in \Xi^{\boldsymbol{\Gamma}}\cup \delta$. Since both  $\Xi^{\boldsymbol{\Gamma}}$ and $\delta$ are $\wedge$-closed, then we conclude the statement. Therefore, $\Xi^{\boldsymbol{\Gamma}}\cup \delta\in \langle \Gamma\rangle$. It remains to check that ${\rm pr}_{1:n}\Xi^{\boldsymbol{\Gamma}} = \rho$ and ${\rm pr}_{1:n}\delta = \{0,1\}^n\setminus \rho$. Thus, $\Xi^{\boldsymbol{\Gamma}}\cup \delta\in \langle \Gamma\rangle$ and $\rho = \{(x_{1}, \cdots, x_{n})\mid (x_{1}\wedge \cdots\wedge x_{n}\to 0)\}$ is strongly reducible to $\Gamma$.

Let us now consider the case of 
$
\Phi = (x_{1}\wedge \cdots\wedge x_{n}\to x_{n+1})
$. Let us denote by $(x\wedge y = z)$ the formula $(x\wedge y \to z)\wedge (z\wedge O \to x)\wedge (z\wedge O \to y)\wedge (O=1)$ where $O$ is an additional fixed variable. Note that $(x\wedge y = z)$ is a quantifier-free primitive positive formula over $\tau$.
Thus, we have $\Phi\equiv \exists x_{n+2}, \cdots,x_{2n-1},O\,\, \Xi(x_{1}, \cdots,x_{2n-1},O)$ where
\begin{equation*}
\begin{split}
\Xi(x_{1}, \cdots,x_{2n-1},O) =(x_1\wedge x_2 = x_{n+2}) \wedge 
(x_{n}\wedge x_{2n-1}\to x_{n+1}) \wedge
\bigwedge_{i=3}^{n-1} (x_i\wedge x_{n+i-1} = x_{n+i}).
\end{split}
\end{equation*}
Here we define a $2n$-ary $\delta$ as $\{1\}^{n}\times \{0\}\times \{1\}^{n-1}$. Let us prove that $\Xi^{\boldsymbol{\Gamma}}\cup \delta$ is a $\wedge$-closed set. Again, let us consider  ${\mathbf x}\in \Xi^{\boldsymbol{\Gamma}}$ and ${\mathbf y}\in \delta$. If $x_{n+1}=0$, then ${\mathbf x}\wedge {\mathbf y}={\mathbf x}\in \Xi^{\boldsymbol{\Gamma}}\cup \delta$. Otherwise, if $x_{n+1}=1$, we have either a) ${\mathbf x}=1^{2n-1}$ and in that case  $1^{2n-1}\wedge {\mathbf y}={\mathbf y}\in \Xi^{\boldsymbol{\Gamma}}\cup \delta$, or b) at least one of $x_1, \cdots, x_n$ is 0. In the case of b) let $i\in [n]$ be the smallest such that $x_i=0$, i.e. $x_j=1, j\in [i-1]$.  Therefore, $x_{n+j}=1, j\in [2,i-1]$ and $x_{n+j}=0, j\in [i,n-1]$. It remains to check that an assignment ${\mathbf x}\wedge {\mathbf y} = (x_1, \cdots, x_n, 0, x_{n+2}, \cdots, x_{2n-1})$ also satisfies $\Xi$, and therefore, is in $\Xi^{\boldsymbol{\Gamma}}\cup \delta$. Thus,  $\Xi^{\boldsymbol{\Gamma}}\cup \delta\in \langle \Gamma\rangle$ and $\rho$ is strongly reducible to $\Gamma$.
\end{proof}

\begin{proof}[Theorem~\ref{Horn-case-reduce}] Let $\rho\in \langle \Gamma\rangle$ be $n$-ary, i.e. $\rho$ is closed with respect to component-wise conjunction. A classical result about
$\wedge$-closed relations (see~\cite{Alfred,McKinsey}) states that $\rho$ can be represented as:
$$
\rho = \bigcap_{i=1}^l \rho_i
$$
where $\rho_i = \{(x_1, \cdots, x_n) \mid \Phi_i(x_{s_{i1}}, \cdots, x_{s_{i r_i}})\}$ where $\Phi_i$ is a Horn clause. From the previous Lemma we conclude that each of $\rho_i, i\in [l]$ is strongly reducible to $\Gamma$. Therefore, $\rho$ is A-reducible to $\Gamma$. Since this is true for any $\rho\in \langle \Gamma\rangle$, we conclude that $\Gamma$ is an A-language.
\end{proof}
\section{Proof of Theorem~\ref{Horn-case}}\label{proof-of-horn}
%\begin{proof}
%The only non-trivial part is to construct $\boldsymbol{\Gamma}$-irreducible $\Psi$. 
In this case we have a vocabulary $\tau=\{\pi_1, \pi_2, \pi_3\}$ where $\pi_1, \pi_2$ are unary and $\pi_3$ is assigned an arity 3. 

Let ${\mathbf R} = (V, Z, O, r)$ be an instance of ${\rm Dense}(\boldsymbol{\Gamma})$. Let us define an implicational system $\Sigma$ on $V$ that consists of rules $\{i,j\}\rightarrow k$ for any $(i,j,k)\in r$. The implicational system $\Sigma$ defines a closure operator $o_{\Sigma}(S) = \{x| (S\rightarrow x)\in \Sigma^\triangleright\}$.
Let ${\mathbf R}' = (V, Z', O', r')$ be a maximal instance such that $Z'\supseteq Z$, $O'\supseteq O$, $r'\supseteq r$ and ${\rm Hom} ({\mathbf R}, \boldsymbol{\Gamma})={\rm Hom} ({\mathbf R}', \boldsymbol{\Gamma})\ne \emptyset$. Note that $(i,j,k)\in r'$ if and only if $k\in o_\Sigma (\{i,j\}\cup O)$ and $Z\cap o_\Sigma (\{i,j\}\cup O) = \emptyset$. 
Indeed, for any $k\in o_\Sigma (\{i,j\}\cup O)$ we have $(i,j,k)\in r'$, because $\{i,j\}\cup O\rightarrow k$ is a consequence of rules in $r$. On the contrary, let $k\notin o_\Sigma (\{i,j\}\cup O)$. Then, $h: V\to D$ defined by $h(v) = 1$ if $v\in o_\Sigma (\{i,j\}\cup O)$ and $h(v) = 0$, if otherwise, is a homomorphism from ${\mathbf R}$ to $\boldsymbol{\Gamma}$. Therefore, for any $k\notin o_\Sigma (\{i,j\}\cup O)$ we have $(h(i), h(j), h(k))\notin \varrho_3$. Using Theorem~\ref{jvm}, we obtain $(i,j,k)\notin r'$.

Thus, for any $(i,j,k)\in r'$ there exists a derivation of $k$ from $\{i,j\}\cup O$ using only rules $\{i,j\}\rightarrow k$, $(i,j,k)\in r$. To such a derivation one can always correspond a rooted binary tree $T$ whose nodes are labeled with elements of $V$, the root is labeled with $k$, and all leaves are labeled by elements of $\{i,j\}\cup O$. Any (non-leaf) node $p$ (a parent) of the tree $T$ has two children $c_1, c_2$ such that $\{l(c_1), l(c_2)\}\rightarrow l(p)$ is in $\Sigma$ ($l$ is a labeling function).

Let $x,y$ be two leaves of the tree $T$ with a common parent $z$ such that the distance from $x$ to the root $k$ equals the depth of the tree (i.e. is the largest possible one). The parent of $z$ is denoted by $u$ and all possible branches under $u$ are drawn in Figure~\ref{fig:somthing1}: we reduced the number of possible branches to analyze using the rule $\pi_3(x,y,u) \to \pi_3 (y,x,u)$ that makes an order of children irrelevant. Circled leaves correspond to leaves labeled by elements of $O$.  A leaf that is not circled can be labeled either by $i,j$ or by an element from $O$.
For each case, the Figure shows how to reduce the tree $T$ by deleting redundant nodes under $u$. To delete the redundant nodes and connect leaves to $u$ we have to verify that a new reduced branch with a parent $u$ and 2 leaves $x, y$ (or, $x, t$) corresponds to a triple $(x,y,u)\in r^L$ (or, $(x,t,u)\in r^L$), i.e. the resulting triple can be obtained using rules from $L$. 
Needed rules are indicated near each deletion operation in Figure~\ref{fig:somthing1}. 

It is easy to see that using such deletions we will eventually obtain a root $k$ with two children labeled by $c_1,c_2\in \{i,j\} \cup O$. Therefore, the triple $(c_1, c_2, k)$ is in $r^L$. If $\{c_1, c_2\} = \{i,j\}$, then $(i,j,k)$ can be obtained from $(c_1, c_2, k)$ using the rule (1) from the list below. If $c_1=i$ and $c_2\in O$ (or, $c_1, c_2\in O$), then $(i,j,k)$ can be obtained from $(c_1, c_2, k)$ using the rule (2). 
Thus, $(i,j,k)\in r^L$, i.e. we proved that $r' = r^L$.

Let us show now that $O'=O^L$. Analogously to the previous analysis, $k\in o_\Sigma(O)$ if there is a derivation tree with a root $k$ labeled with elements of $V$ and all leaves are labeled by elements of $O$. Using the same reduction we finally obtain the triple $(i,j,k)\in r^L$, where $i,j\in O$. Using the rule (3), we conclude $k\in O^L$, i.e. we proved the inclusion $O^L \supseteq o_\Sigma(O)$. Therefore, $O^L = o_\Sigma(O)$. Then, $h: V\to D$ defined by $h(v) = 1$ if $v\in o_\Sigma (O)$ and $h(v) = 0$, if otherwise, is a homomorphism from ${\mathbf R}$ to $\boldsymbol{\Gamma}$. Since for any $v\notin O^L$ we have $h(v)\notin \varrho_2$, then using Theorem~\ref{jvm}, we obtain that $o_\Sigma (O) = O^L$ is maximal and $O'=O^L$.

Finally, let us prove that $Z'=Z^L$. First, let us prove $Z' = \{v\in V| o_\Sigma (\{v\}\cup O)\cap Z \ne \emptyset\}$. Indeed, if $a\in V$ is such that $o_\Sigma (\{a\}\cup O)\cap Z \ne \emptyset$, then the set $\{h\in {\rm Hom} ({\mathbf R}, \boldsymbol{\Gamma}) | h(a)=1\}$ is empty. Therefore, $h(a)=0$ for any $h\in {\rm Hom} ({\mathbf R}, \boldsymbol{\Gamma})$, which implies $a\in Z'$. On the contrary, if $a\in V$ is such that $o_\Sigma (\{a\}\cup O)\cap Z = \emptyset$, then $h: V\to D$ defined by $h(v) = 1$ if $v\in o_\Sigma (\{a\}\cup O)$ and $h(v) = 0$, if otherwise, is a homomorphism from ${\mathbf R}$ to $\boldsymbol{\Gamma}$. Therefore, $a\notin Z'$.

Thus, $Z'$ is a set of all elements $a\in V$ such that some element $r\in Z$ can be derived from $\{a\}\cup O$ in the implicational system $\Sigma$. Analogously to the previous case, there is a rooted binary tree $T$ with a root $r\in Z$ whose nodes are labeled by elements of $V$ and leaves are labeled by $\{a\}\cup O$. Using the same technique this tree can be reduced to a root $r$ with two children $c_1$ and $c_2$, such that $\{c_1, c_2\}\subseteq \{a\} \cup O$, $\{c_1, c_2\}\not\subseteq  O$ and $(c_1, c_2, r)\in r^L$. W.l.o.g. let $c_1=a$. If $c_2\in O$, then using the rule (4) we can deduce $a\in Z^L$. 
If $c_2=a$, then using the rule (5) we can deduce $a\in Z^L$. 
Thus, $Z'\subseteq Z^L$, and consequently, $Z'= Z^L$. 

In the case ${\rm Hom} ({\mathbf R}, \boldsymbol{\Gamma})= \emptyset$, it is easy to see that we will eventually apply the rule (6).
\begin{figure}[htbp]
    \centering
        \includegraphics[width=0.6\textwidth]{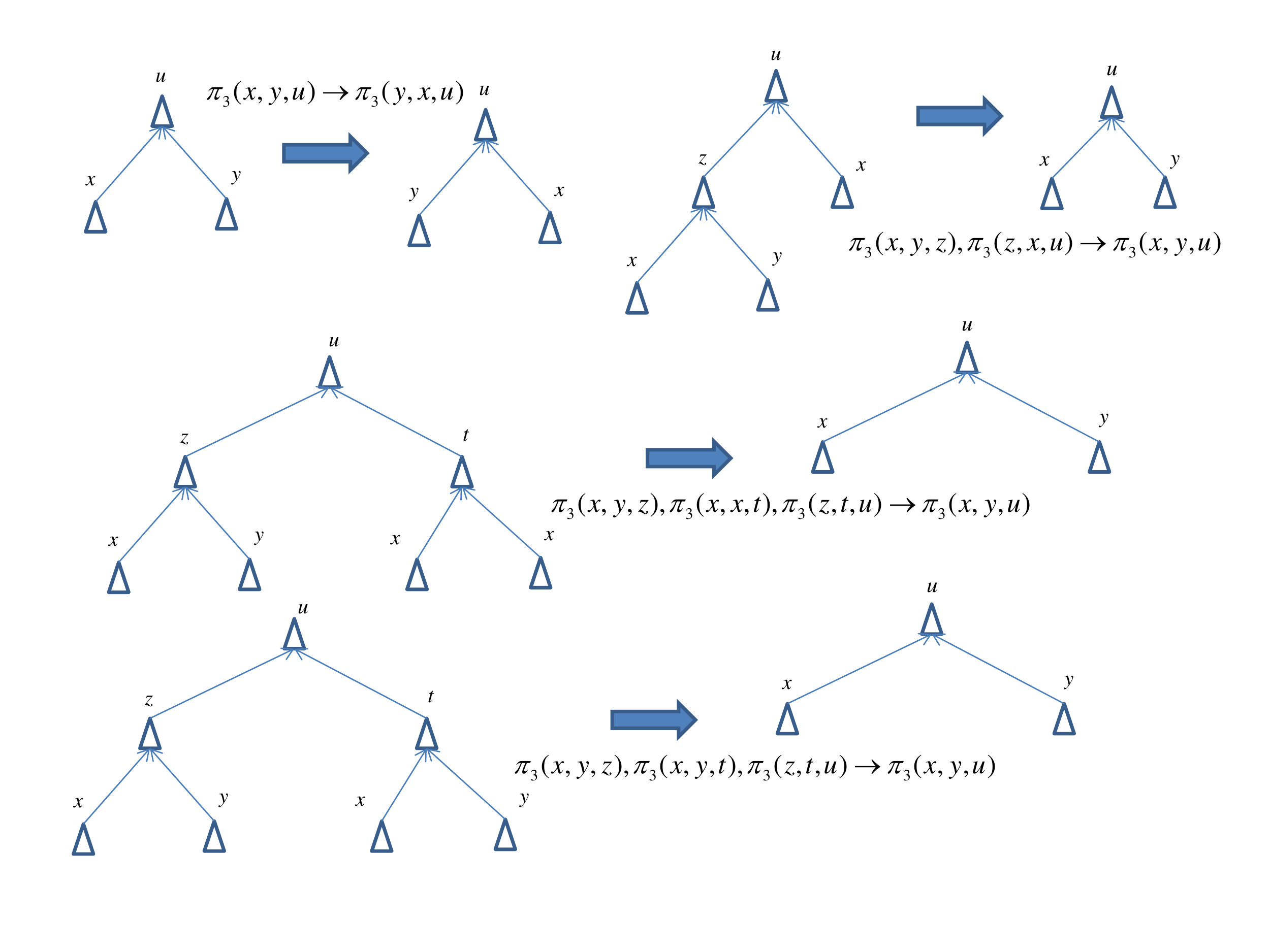}
        \includegraphics[width=0.6\textwidth]{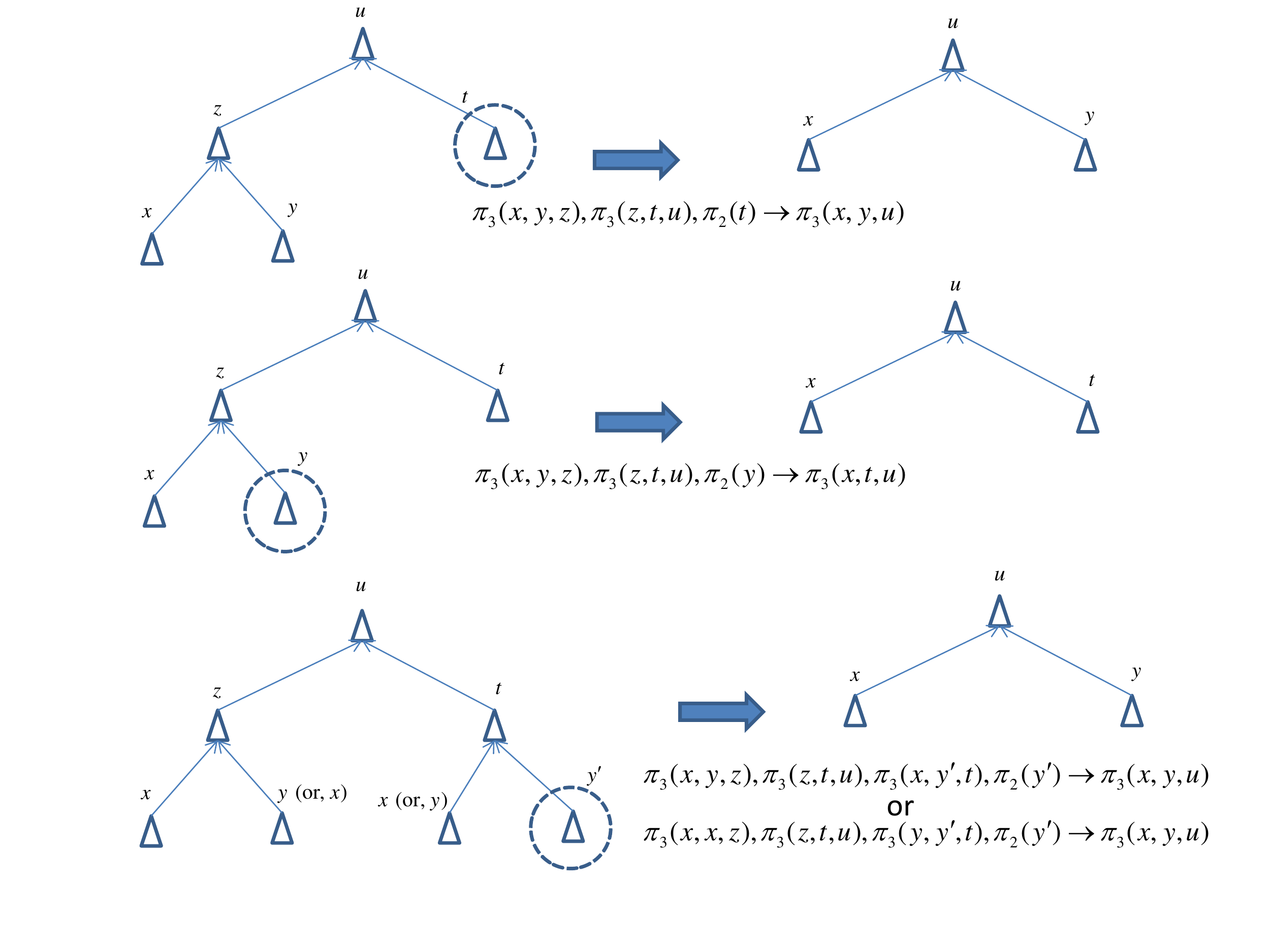}
        \includegraphics[width=0.6\textwidth]{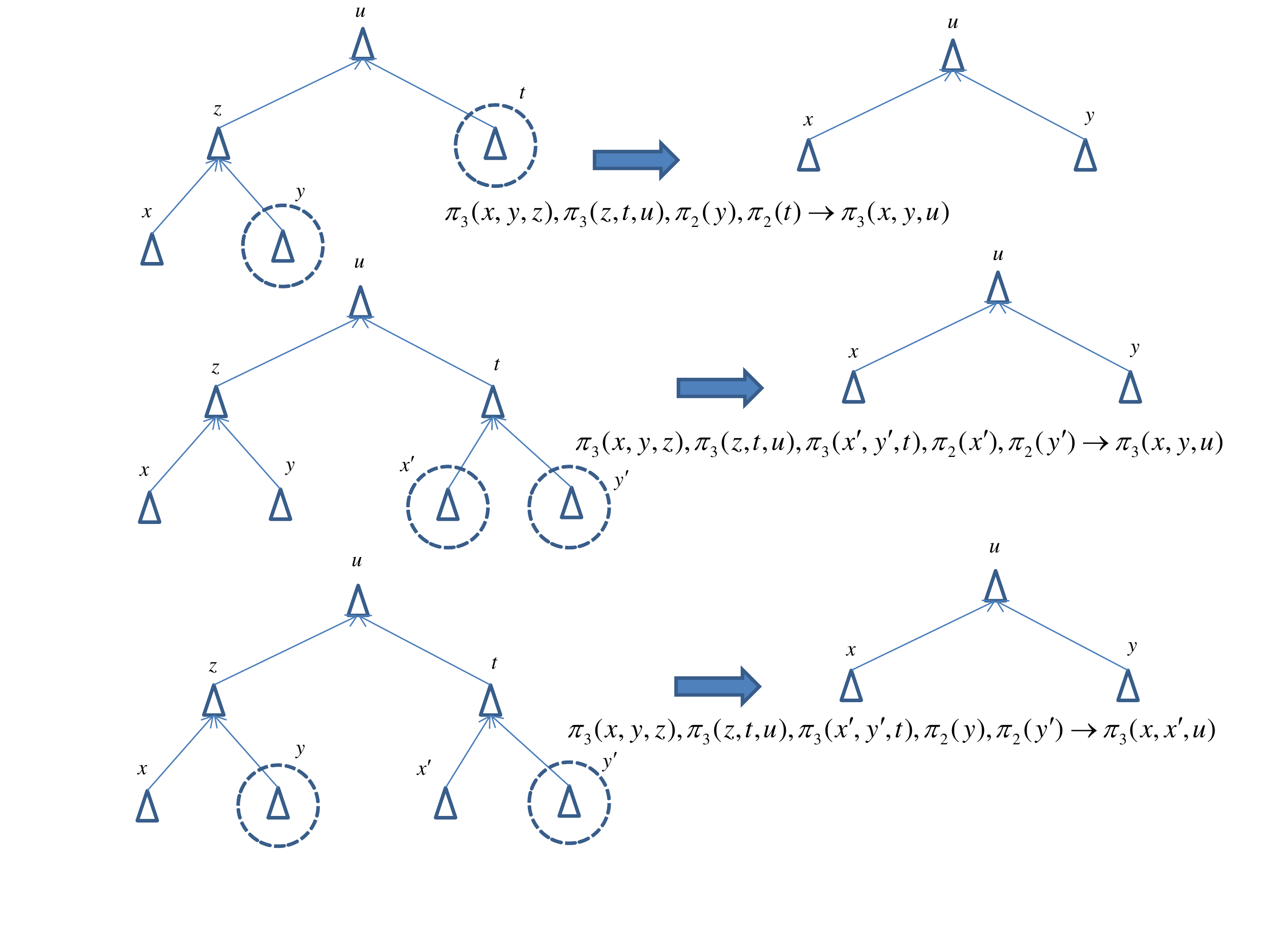}
    \caption{A new reduced branch with a parent $u$ and 2 leaves $x, y$ (or, $x, t$) corresponds to a triple $(x,y,u)\in r^L$. There is no need to list cases with 3 nodes labeled by $O$, because they all are subcases of the listed.}
    \label{fig:somthing1}
\end{figure}
The complete list of Horn formulas in $L$ is given below:
\begin{itemize}
\item[(1)] $ \forall x,y,u\,\, \big(\pi_3 (x, y, u) \rightarrow \pi_3 (y, x, u)\big) $
\item[(2)] $ \forall x,y,z,u\,\, \big(\pi_3 (x, y, u)\wedge \pi_2 (x) \rightarrow \pi_3 (z, y, u)\big) $
\item[(3)] $ \forall x,y,z,u\,\, \big(\pi_3 (x, y, u)\wedge \pi_2 (x)\wedge \pi_2 (y) \rightarrow \pi_2 (u)\big) $
\item[(4)] $ \forall x,y,z,u\,\, \big(\pi_3 (x, y, u)\wedge \pi_2 (x) \wedge \pi_1 (u) \rightarrow \pi_1 (y)\big) $
\item[(5)] $ \forall x,y\,\, \big(\pi_3 (x, x, y)\wedge \pi_1 (y) \rightarrow \pi_1 (x)\big) $
\item[(6)] $ \forall x\,\, \big(\pi_1 (x)\wedge \pi_2 (x) \rightarrow {\rm F}\big) $
\item[(7)] $ \forall x,y,z,u \big(\pi_3 (x, y, z) \wedge \pi_3 (z, x, u) \rightarrow \pi_3 (x, y, u)\big) $
\item[(8)] $ \forall x,y,z,t,u \big(\pi_3 (x, y, z) \wedge \pi_3 (x, x, t) \wedge  \pi_3 (z, t, u) \rightarrow \pi_3 (x, y, u)\big) $
\item[(9)] $ \forall x,y,z,t,u \big(\pi_3 (x, y, z) \wedge \pi_3 (x, y, t) \wedge  \pi_3 (z, t, u) \rightarrow \pi_3 (x, y, u)\big) $
\item[(10)] $ \forall x,y,z,t,u \big(\pi_3 (x, y, z)\wedge \pi_3 (z, t, u)\wedge \pi_2(t) \rightarrow \pi_3(x,t,u)\big) $
\item[(11)] $ \forall x,y,z,t,u \big(\pi_3 (x, y, z)\wedge \pi_3 (z, t, u)\wedge \pi_2(y) \rightarrow \pi_3(x,y,u)\big) $
\item[(12)] $ \forall x,y,y',z,t,u \big(\pi_3 (x, y, z)\wedge \pi_3 (z, t, u)\wedge  \pi_3 (x, y', t)\wedge \pi_2(y') \rightarrow \pi_3(x,y,u)\big) $
\item[(13)] $ \forall x,y,y',z,t,u \big(\pi_3 (x, x, z)\wedge \pi_3 (z, t, u)\wedge \pi_3 (y, y', t)\wedge  \pi_2(y') \rightarrow \pi_3(x,y,u)\big) $
\item[(14)] $ \forall x,y,z,t,u \big(\pi_3 (x, y, z)\wedge \pi_3 (z, t, u)\wedge \pi_2 (y)\wedge  \pi_2(t) \rightarrow \pi_3(x,y,u)\big) $
\item[(15)] $ \forall x,y,x',y',z,t,u \big(\pi_3 (x, y, z)\wedge \pi_3 (z, t, u)\wedge 
\pi_3 (x', y', t)\wedge \pi_2 (x')\wedge \pi_2(y') \rightarrow \pi_3(x,y,u)\big) $
\item[(16)] $ \forall x,y,x',y',z,t,u \big(\pi_3 (x, y, z)\wedge \pi_3 (z, t, u)\wedge 
\pi_3 (x', y', t)\wedge \pi_2 (y)\wedge \pi_2(y') \rightarrow \pi_3(x,x',u)\big) $
\end{itemize}
This list is not optimized and some formulas could be derivable from others.
%\end{proof}

\section{Proof of Theorem~\ref{2-SAT}.} \label{2-SAT-proof}
Throughout the proof we assume $D=\{0,1\}$ and $\boldsymbol{\Gamma} = (D, \varrho_1, \varrho_2, \varrho_3)$ where $ \varrho_1 = \big \{ ({x, y} ) | x \vee y \big \} $,
 $ \varrho_2 = \big \{ ({x, y} ) | \neg x \vee y \big \} $
 and $ \varrho_3 = \big \{{ ({x, y} ) | \neg x \vee \neg y} \big \} $.
For $\rho_1, \rho_2\subseteq D^2$ let us denote
$$
\rho_1\circ \rho_2 = \{(x,z)| \exists y: (x,y)\in\rho_1 {\rm\,\,and\,\,} (y,z)\in \rho_2\}
$$
\begin{definition} Let $\Gamma_2$ be a set of all nonempty binary relations over $D$. A subset $C\subseteq {\mathcal C}^{\boldsymbol{\Gamma}_2}_V$ is called full if for any $u,v\in V$ there exists only one $\langle (u,v), \rho \rangle\in C$. A full subset $C\subseteq {\mathcal C}^{\boldsymbol{\Gamma}_2}_V$ is called path-consistent if for any $\langle (u,v), \rho_1 \rangle, \langle (v,w), \rho_2 \rangle, \langle (u,w), \rho_3 \rangle\in C$ we have $\rho_3\subseteq \rho_1\circ \rho_2$ and for any $\langle (u,u), \rho \rangle\in C$ we have $\rho\subseteq \{(a,a)|a\in D\}$.
\end{definition}
It is well-known that for binary constraint satisfaction problems, path consistency is equivalent to 3-local consistency~\cite{Dechter}. Therefore, if $C\subseteq {\mathcal C}^{\boldsymbol{\Gamma}_2}_V$ is path-consistent, then the corresponding 2-SAT instance is satisfiable.

Let us introduce the set of formulas:
\begin{enumerate}
\item  $\forall x\,\,{\rm True} \rightarrow \pi_2(x, x)$ \label{seventh}
\item $\forall x,y\,\, \big(\pi_1(x, y) \rightarrow \pi_1(y, x)\big)$ \label{first}
\item $\forall x,y\,\, \big(\pi_3(x, y) \rightarrow \pi_3(y, x)\big)$ \label{second}
\item $\forall x,y,z\,\, \big(\pi_2(x, y) \wedge \pi_2(y, z)  \rightarrow \pi_2(x, z)\big)$ \label{third}
\item  $\forall x,y,z\,\, \big(\pi_1 (x, y) \wedge \pi_2(y, z)  \rightarrow  \pi_1(x, z)\big)$ \label{fourth}

\item  $\forall x,y,z\,\, \big(\pi_3 (x, y) \wedge \pi_2 (z, y)  \rightarrow \pi_3 (x, z)\big)$ \label{fifth}

\item  $\forall x,y,z\,\, \big(\pi_3 (x, y) \wedge \pi_1(y, z) \rightarrow \pi_2(x, z)\big)$ \label{sixth}
\end{enumerate}

To any relational structure ${\mathbf R} = (V, r_1, r_2, r_3)$, where $r_i, i\in [r]$ is a binary relation, one can correspond the full subset:
$$
C({\mathbf R}) = \{\langle (u,v), \rho_{uv} \rangle | u,v\in V\} \subseteq {\mathcal C}^{\boldsymbol{\Gamma}_2}_V
$$
where
\begin{equation*}
\begin{split}
\rho_{uv} = \bigcap_{i: (u,v)\in r_i}\varrho_i \bigcap_{i: (u,v)\in r^T_i}\varrho^T_i, {\rm\,\, if\,\,} u\ne v \\
\rho_{uu} = \bigcap_{i: (u,u)\in r_i}\varrho_i \bigcap_{i: (u,u)\in r^T_i}\varrho^T_i \cap \{(a,a)|a\in D\}
\end{split}
\end{equation*}

\begin{lemma}\label{path} If ${\mathbf R} = (V, r_1, r_2, r_3)$ satisfies the formulas 1-7 and $r_1\cap r_2\cap r_3\cap r^T_2=\emptyset$, $r_1\cap r_3\cap \{(u,u)|u\in V\}=\emptyset$, then $C({\mathbf R})$ is path-consistent.
\end{lemma}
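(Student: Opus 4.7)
The plan is to check both conditions in the definition of path-consistency for $C(\mathbf{R})$. The condition $\rho_{uu}\subseteq\{(a,a)\mid a\in D\}$ is immediate from the definition of $\rho_{uu}$, which already intersects with the diagonal. The substantive work is to verify $\rho_{uw}\subseteq \rho_{uv}\circ \rho_{vw}$ for every triple $u,v,w\in V$, which I will prove by contrapositive and case analysis on the four elements of $D^2$.

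The key observation is that, since $\rho_{uv}$ is an intersection of some of the four relations $\varrho_1,\varrho_2,\varrho_2^T,\varrho_3$, each element of $D^2$ fails to lie in $\rho_{uv}$ for a single reason: $(0,0)\notin \rho_{uv}\Leftrightarrow (u,v)\in r_1$, $(1,1)\notin \rho_{uv}\Leftrightarrow (u,v)\in r_3$, $(1,0)\notin \rho_{uv}\Leftrightarrow (u,v)\in r_2$, and $(0,1)\notin \rho_{uv}\Leftrightarrow (v,u)\in r_2$. Hence the statement that no $b\in\{0,1\}$ witnesses $(a,c)\in \rho_{uv}\circ \rho_{vw}$ is equivalent to a conjunction of two such atomic constraints, one killing $b=0$ and one killing $b=1$. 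For each of the four values of $(a,c)$, this produces at most four subcases to check.

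The main content of the proof is that each subcase is closed by a single application of one of the composition rules 4, 5, 6 or 7, possibly after normalizing via the symmetry rules 2 or 3. For instance, when $(a,c)=(0,0)$ and the witnesses are $(u,v)\in r_1$ (killing $b=0$) and $(v,w)\in r_2$ (killing $b=1$), rule 5 gives $\pi_1(u,w)$ directly, so $(u,w)\in r_1$ and $(0,0)\notin \rho_{uw}$; the "mirrored" subcase with $(v,w)\in r_1$ and $(v,u)\in r_2$ is reduced to the previous one by first applying rule 2 to transpose, then rule 5, then rule 2 again. The three other values of $(a,c)$ are treated analogously, with rule 4 handling transitivity of $r_2$-witnesses, rule 6 handling $r_3$-with-$r_2$ combinations, and rule 7 handling the mixed $r_3$-with-$r_1$ case. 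The two side hypotheses enter to rule out degeneracies: $r_1\cap r_2\cap r_3\cap r_2^T=\emptyset$ ensures all $\rho_{uv}$ considered are nonempty (so the contrapositive is well-founded), and $r_1\cap r_3\cap\{(u,u)\mid u\in V\}=\emptyset$ handles the diagonal sub-subcases that arise when the witness variables coincide and a derivation would otherwise produce an impossible $\rho_{uu}$.

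The principal obstacle is the bookkeeping: the raw enumeration yields $4\times 4=16$ subcases, many reducible to a common form by exploiting the three symmetries (rule 2 on $r_1$, rule 3 on $r_3$, and exchanging the roles of $u$ and $w$). Organizing the cases first by $(a,c)\in D^2$ and then by which constraints furnish the witnesses at $(u,v)$ and $(v,w)$ keeps the analysis tractable; I expect the subtlest subcases to be those where both witnesses come from $r_2$-constraints, which require combining the transitivity rule 4 with the other composition rules before the desired $\rho_{uw}$-exclusion pops out.
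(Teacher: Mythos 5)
Your decomposition is actually \emph{more} careful than the paper's own proof, and that extra care exposes a real gap. The paper opens its reductio with ``assume that for any $b$ we have $(a,b)\notin\rho_{uv}$, $(b,c)\notin\rho_{vw}$,'' which is the \emph{conjunction} of all four atomic failures; the correct negation of ``some $b$ witnesses $(a,c)$'' is the weaker statement that for each $b$ at least one of $(a,b)\notin\rho_{uv}$, $(b,c)\notin\rho_{vw}$ holds. You correctly observe that this yields $2\times 2$ subcases per value of $(a,c)$, whereas the paper in effect only treats the single subcase in which all four atoms hold simultaneously. Your two worked examples (both for $(a,c)=(0,0)$) are the ``mixed'' combos (one atom on the $\rho_{uv}$ side, one on the $\rho_{vw}$ side), and those indeed close cleanly with rule 5, possibly conjugated by the symmetry rules. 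But the two remaining combos, in which both $b=0$ and $b=1$ are killed on the \emph{same} side, do not close. For $(a,c)=(0,0)$ with both failures on the $\rho_{uv}$ side you get only $(u,v)\in r_1\cap r_2^T$: this says $h(u)$ is forced to $1$ but gives no handle whatsoever on $\rho_{vw}$, and none of rules 4--7 can produce $\pi_1(u,w)$ from it. The symmetric combo $(v,w)\in r_1\cap r_2$ is equally stuck, and so are the analogous same-side combos for the other three values of $(a,c)$. Your guess that the delicate cases are the ones ``where both witnesses come from $r_2$'' points in the wrong direction; those cases are exactly the ones rule 4 handles in one step. The dangerous cases are the same-side ones, and rules 1--7 lack the unit-propagation consequences (something of the shape $\pi_1(x,x)\to\pi_2(y,x)$ and $\pi_1(x,x)\to\pi_1(x,y)$, plus duals for $\pi_3(x,x)$) needed to push a forced value $h(x)=1$ or $h(x)=0$ around the structure.

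This is not merely a presentational issue: the gap is genuine. Take $V=\{u,w\}$, $r_1=\{(u,u)\}$, $r_2=\{(u,u),(w,w)\}$, $r_3=\emptyset$. This $\mathbf{R}$ satisfies all seven Horn formulas (rule 1 supplies the diagonal of $r_2$; rules 2--4 are trivially respected; rule 5 applied to $(u,u)\in r_1$, $(u,u)\in r_2$ returns $(u,u)$; rules 6--7 are vacuous), and both emptiness hypotheses hold since $r_3=\emptyset$. Yet $\rho_{uu}=\{(1,1)\}$, $\rho_{uw}=D^2$, and $\rho_{uu}\circ\rho_{uw}=\{1\}\times D$, so $\rho_{uw}\not\subseteq\rho_{uu}\circ\rho_{uw}$ and $C(\mathbf{R})$ is not path-consistent. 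So both your ``each subcase is closed by a single application of a composition rule'' and the paper's own proof overstate what rules 1--7 deliver; a sound proof requires additional Horn formulas that propagate the unary consequences of the diagonal atoms $\pi_1(x,x)$ and $\pi_3(x,x)$. On the positive side, had you executed the $2\times 2$ enumeration you outline, you would have surfaced exactly these stuck subcases, which the paper's collapsed case analysis silently skips; so the discipline of your plan is the right one, and the remaining work is to find the missing rules rather than to find a cleverer derivation from 4--7.
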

\begin{proof} Properties~\ref{first} and~\ref{second} claim that $r_1$ and $r_3$ are symmetric relations, therefore we have $r_1 = r^T_1$ and $r_3 = r^T_3$.
 Since $r_1\cap r_2\cap r_3\cap r^T_2=\emptyset$, then the set $\{\varrho_i| (u,v)\in r_i\}\cup \{\varrho^T_i| (u,v)\in r^T_i\} \ne \{\varrho_1,\varrho_2,\varrho_3,\varrho^T_2\}$ for any $(u,v)$. Since $\bigcap_{a\in A} a\ne \emptyset$ for any proper subset $A\subset \{\varrho_1,\varrho_2,\varrho_3,\varrho^T_2\}$, then $\rho_{uv}\ne \emptyset$ for any $u\ne v$. 

Due to the property~\ref{seventh}, we have $(u,u)\in r_2\cap r_2^T$ for any $u\in V$. Also, $(u,u)\notin r_1\cap r_3$ because of $r_1\cap r_3\cap \{(v,v)|v\in V\}=\emptyset$. Therefore,
for any $u\in V$, the set $\{\varrho_i| (u,u)\in r_i\}\cup \{\varrho^T_i| (u,u)\in r^T_i\}$ is a proper subset of $\{\varrho_1,\varrho_3\}$. Thus, $\rho_{uu}\ne \emptyset$ and $\rho_{uu} \subseteq \{(a,a)| a\in D\}$.

Note that for any $u \ne v$: a) $(0,0)\notin \rho_{uv}$ if and only if $(u,v)\in r_1$, b) $(1,1)\notin \rho_{uv}$ if and only if $(u,v)\in r_3$, c) $(1,0)\notin \rho_{uv}$ if and only if $(u,v)\in r_2$, and d) $(0,1)\notin \rho_{uv}$ if and only if $(v,u)\in r_2$.

Let us prove that $\rho_{uw}\subseteq \rho_{uv}\circ \rho_{vw}$ for any $u,v,w\in V$. 
Let us first consider the case of distinct $u,v,w$.
Let $(a,c)\in \rho_{uw}$. Our goal is to show that there exists $b$ such that $(a,b)\in \rho_{uv}$ and $(b,c)\in \rho_{vw}$. Let us prove the last statement by reductio ad absurdum. Assume that for any $b$ we have $(a,b)\notin \rho_{uv}$, $(b,c)\notin \rho_{vw}$ and $(a,c)\in \rho_{uw}$.

There are 4 possibilities for $(a,c)$: $(0,0)$, $(1,1)$, $(0,1)$ and $(1,0)$. Let us list all of them and check that $(a,b)\notin \rho_{uv}$ and $(b,c)\notin \rho_{vw}$ and $(a,c)\in \rho_{uw}$ cannot hold for any $b\in \{0,1\}$.

The case $(a,c) = (0,0)$: $(0,b)\notin \rho_{uv}$ and $(b,0)\notin \rho_{vw}$ for $b\in \{0,1\}$ implies $(u,v)\in r_1\cap r^T_2$ and $(v,w)\in r_1\cap r_2$. Due to the property~\ref{fourth} we have $(u,w)\in r_1$ and this contradicts to $(0,0)\in \rho_{uw}$.

The case $(a,c) = (1,1)$: $(1,b)\notin \rho_{uv}$ and $(b,1)\notin \rho_{vw}$ for $b\in \{0,1\}$ implies $(u,v)\in r_3\cap r_2$ and $(v,w)\in r_3\cap r^T_2$. Due to the property~\ref{fifth} we have $(u,w)\in r_3$ and this contradicts to $(1,1)\in \rho_{uw}$.

The case $(a,c) = (0,1)$: $(0,b)\notin \rho_{uv}$ and $(b,1)\notin \rho_{vw}$ for $b\in \{0,1\}$ implies $(u,v)\in r_1\cap r^T_2$ and $(v,w)\in r_3\cap r^T_2$. Due to the property~\ref{third} we have $(w,u)\in r_2$ and this contradicts to $(0,1)\in \rho_{uw}$.

The case $(a,c) = (1,0)$: $(1,b)\notin \rho_{uv}$ and $(b,0)\notin \rho_{vw}$ for $b\in \{0,1\}$ implies $(u,v)\in r_3\cap r_2$ and $(v,w)\in r_1\cap r_2$. Due to the property~\ref{third} we have $(u,w)\in r_2$ and this contradicts to $(1,0)\in \rho_{uw}$.

It remains to check path-consistency property for any triple of variables $u, v, w\in V$ where either $u=w\ne v$ or $u=v\ne w$ (i.e. 2-local consistency). The case $u=v=w$ is trivial.

Let us check the case $u=w\ne v$. Let $(a,a)\in \rho_{uu}$. Let us assume that for any $b\in D$ we have $(a,b)\notin \rho_{uv}$. The case $a=0$ gives $(0,0)\in \rho_{uu}$, $(0,0),(0,1)\notin \rho_{uv}$, and therefore, $(u,u)\notin r_1$,
$(u,v)\in r_1\cap r_2^T$. From property~\ref{fourth} we conclude $(u,u)\in r_1$ and obtain a contradiction. The case $a=1$ gives $(1,1)\in \rho_{uu}$, $(1,0),(1,1)\notin \rho_{uv}$, and therefore, $(u,u)\notin r_3$,
$(u,v)\in r_3\cap r_2$. From property~\ref{fifth} we conclude $(u,u)\in r_3$ and obtain a contradiction.

Finally, let us check the case $u=v\ne w$. Let $(a,c)\in \rho_{uw}$ and for any $b\in D$ we have $(a,b)\notin \rho_{uu}, (b,c)\notin \rho_{uw}$. 

The case $(a,c)=(0,0)$ gives $(0,0)\in \rho_{uw}$, $(0,b)\notin \rho_{uu}, (b,0)\notin \rho_{uw}$. The last is equivalent to $(u,w)\notin r_1$, $(u,u)\in r_1$, $(u,w)\in r_1\cap r_2$. From property~\ref{fourth} we conclude $(u,w)\in r_1$ and obtain a contradiction.

The case $(a,c)=(1,1)$ gives $(1,1)\in \rho_{uw}$, $(1,b)\notin \rho_{uu}, (b,1)\notin \rho_{uw}$. The last is equivalent to $(u,w)\notin r_3$, $(u,u)\in r_3$, $(u,w)\in r_3\cap r^T_2$. From property~\ref{fifth} we conclude $(u,w)\in r_3$ and obtain a contradiction.

The case $(a,c)=(0,1)$ gives $(0,1)\in \rho_{uw}$, $(0,b)\notin \rho_{uu}, (b,1)\notin \rho_{uw}$. The last is equivalent to $(u,w)\notin r^T_2$, $(u,u)\in r_1$, $(u,w)\in r_3\cap r^T_2$. From property~\ref{sixth} we conclude $(w,u)\in r_2$ and obtain a contradiction.

The case $(a,c)=(1,0)$ gives $(1,0)\in \rho_{uw}$, $(1,b)\notin \rho_{uu}, (b,0)\notin \rho_{uw}$. The last is equivalent to $(u,w)\notin r_2$, $(u,u)\in r_3$, $(u,w)\in r_1\cap r_2$. From property~\ref{sixth} we conclude $(u,w)\in r_2$ and obtain a contradiction. Thus, the lemma is proved.
\end{proof}

\begin{corollary} Let $L$ be the set of formulas 1-7 and $L^{\rm stop} = \{\pi_1(x,y)\wedge \pi_2(x,y)\wedge \pi_3(x,y) \wedge \pi_2(y,x)\rightarrow {\rm F}, \pi_1(x,x)\wedge \pi_3(x,x) \rightarrow {\rm F}\}$. Then, ${\rm Dense}(\boldsymbol{\Gamma})$ can be solved by the Datalog program $L\cup L^{\rm stop}$.
\end{corollary}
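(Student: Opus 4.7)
The plan is to verify the two Datalog-soundness conditions by splitting on whether the stop set $L^{\rm stop}$ is triggered on $\mathbf{R}^L = (V, r_1^L, r_2^L, r_3^L)$. A preliminary step is to check that $\boldsymbol{\Gamma}\models \Psi$ for each $\Psi\in L$; this is a routine Boolean verification (e.g.\ formula~\ref{fourth} encodes that $(x\vee y)\wedge(\neg y\vee z)$ implies $x\vee z$, which is resolution). By the definition of $\mathbf{R}^L$ in Section~\ref{DS-section}, the structure $\mathbf{R}^L$ is the least extension of $\mathbf{R}$ that satisfies every formula of $L$, and by Theorem~\ref{closure} applied to the true densification $\mathbf{R}^*$ of $\mathbf{R}$, any extension of $\mathbf{R}$ contained in $\mathbf{R}^*$ that satisfies $L$ must still lie within $\mathbf{R}^*$; hence $\mathbf{R}\subseteq \mathbf{R}^L\subseteq \mathbf{R}^*$, so in particular ${\rm Hom}(\mathbf{R}^L,\boldsymbol{\Gamma})={\rm Hom}(\mathbf{R},\boldsymbol{\Gamma})$.

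Case 1: some stop rule fires on $\mathbf{R}^L$. If $(x,y)\in r_1^L\cap r_2^L\cap r_3^L\cap (r_2^L)^T$, then every $h\colon V\to D$ must satisfy the four 2-SAT clauses $h(x)\vee h(y)$, $\neg h(x)\vee h(y)$, $\neg h(x)\vee \neg h(y)$, $h(x)\vee \neg h(y)$ on the pair $(x,y)$, which is unsatisfiable, and similarly $(x,x)\in r_1^L\cap r_3^L$ forces both $h(x)=1$ and $h(x)=0$. Thus ${\rm Hom}(\mathbf{R}^L,\boldsymbol{\Gamma}) = {\rm Hom}(\mathbf{R},\boldsymbol{\Gamma})=\emptyset$, confirming condition (b).

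Case 2: no stop rule fires on $\mathbf{R}^L$. Then $\mathbf{R}^L$ satisfies the hypotheses of Lemma~\ref{path}, so $C(\mathbf{R}^L)$ is path-consistent. Now I will invoke the classical fact that for binary constraint languages over $\{0,1\}$ preserved by the majority polymorphism ${\rm mjy}$ --- which $\varrho_1,\varrho_2,\varrho_3$ are --- path-consistency implies global consistency: every locally consistent pair $(a,b)\in \rho_{uv}$ extends to a homomorphism $h$ with $h(u)=a$, $h(v)=b$. In particular, ${\rm Hom}(\mathbf{R}^L,\boldsymbol{\Gamma})\neq \emptyset$. To obtain maximality of $(\mathbf{R}^L,\boldsymbol{\Gamma})$ via Theorem~\ref{jvm}, take any $(v_1,v_2)\notin r_i^L$; then $\varrho_i\not\supseteq \rho_{v_1 v_2}$, so there exists $(a,b)\in \rho_{v_1 v_2}\setminus \varrho_i$, and global consistency produces $h\in {\rm Hom}(\mathbf{R}^L,\boldsymbol{\Gamma})$ with $(h(v_1),h(v_2))=(a,b)\notin \varrho_i$, as required by Theorem~\ref{jvm}.

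The main obstacle is the final step, invoking that path-consistency of $C(\mathbf{R}^L)$ suffices for global consistency and for surjectivity of projections onto each edge. I would justify it along the following lines: first prove by induction on $|V|$ that for every $(a,b)\in \rho_{uv}$ with $u\ne v$ one can assign $h(u):=a$, $h(v):=b$ and propagate path-consistency to the remaining variables, using that the residual binary relations obtained by fixing one coordinate remain in $\langle \Gamma\rangle$ (hence unary and still preserved by ${\rm mjy}$); the base case $|V|\le 3$ is immediate from path-consistency itself. Alternatively one can appeal to the standard fact that bounded width languages admit such an extension property once $(k,\ell)$-minimality for appropriate $k,\ell$ holds, and check that the rules 1--7 enforce the needed minimality for $\{\varrho_1,\varrho_2,\varrho_3\}$.
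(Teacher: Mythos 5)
Your proposal is correct and follows essentially the same route as the paper: you split on whether a stop rule fires, use Lemma~\ref{path} to deduce path-consistency when no stop rule fires, and then invoke the classical fact (Baker--Pixley for majority-preserved binary languages) that path-consistency implies global consistency to establish maximality via Theorem~\ref{jvm}. Your extra preliminary step --- using Theorem~\ref{closure} to establish $\mathbf{R}\subseteq\mathbf{R}^L\subseteq\mathbf{R}^*$ and hence ${\rm Hom}(\mathbf{R}^L,\boldsymbol{\Gamma})={\rm Hom}(\mathbf{R},\boldsymbol{\Gamma})$ --- is left implicit in the paper but is a sound and welcome clarification.
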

\begin{proof} Let ${\mathbf R}$ be an instance of ${\rm Dense}(\boldsymbol{\Gamma})$. If ${\rm Hom}({\mathbf R}, \boldsymbol{\Gamma})=\emptyset$, then ${\rm Hom}({\mathbf R}^L, \boldsymbol{\Gamma})=\emptyset$. By construction, ${\mathbf R}^L$ satisfies properties 1-7. If $r^L_1\cap r^L_2\cap r^L_3 \cap (r^L_2)^T = \emptyset$ and $r^L_1\cap r^L_3\cap \{(v,v)|v\in V\}=\emptyset$, then, by Lemma~\ref{path}, the subset $C({\mathbf R}^L)$ is path-consistent (and therefore, is satisfiable). The last contradicts to ${\rm Hom}({\mathbf R}^L, \boldsymbol{\Gamma})=\emptyset$. Therefore, either $r^L_1\cap r^L_2\cap r^L_3 \cap (r^L_2)^T \ne \emptyset$ or $r^L_1\cap r^L_3\cap \{(v,v)|v\in V\}\ne \emptyset$. In that case the Datalog program will identify the emptiness of ${\rm Hom}({\mathbf R}, \boldsymbol{\Gamma})$ by applying the rule $\pi_1(x,y)\wedge \pi_2(x,y)\wedge \pi_3(x,y) \wedge \pi_2(y,x)\rightarrow {\rm F}$ to $(u,v)\in r^L_1\cap r^L_2\cap r^L_3 \cap (r^L_2)^T$ or the rule $\pi_1(x,x)\wedge \pi_3(x,x) \rightarrow {\rm F}$ to $(u,u)\in r^L_1\cap r^L_3\cap \{(v,v)|v\in V\}$.

Let us now consider the case ${\rm Hom}({\mathbf R}^L, \boldsymbol{\Gamma})\ne\emptyset$. In that case we have $r^L_1\cap r^L_2\cap r^L_3 \cap (r^L_2)^T = \emptyset$, $r^L_1\cap r^L_3\cap \{(v,v)|v\in V\}=\emptyset$ and the subset $C({\mathbf R}^L)$ is path-consistent. A well-known application of Baker-Pixley theorem to languages with a majority polymorphism~\cite{JEAVONS1998251} gives us that path-consistency (or, 3-consistency) implies global consistency. Thus, any 3-consistent solution can be globally extended, i.e.
$$
{\rm pr}_{u,v}  {\rm Hom}({\mathbf R}, \boldsymbol{\Gamma}) = {\rm pr}_{u,v}  {\rm Hom}({\mathbf R}^L, \boldsymbol{\Gamma}) = \rho_{u,v}
$$
for any $\langle (u,v), \rho_{uv}\rangle \in C({\mathbf R}^L)$. Thus, 
$$
\bigcap_{h\in {\rm Hom}({\mathbf R}, \boldsymbol{\Gamma})} h^{-1}(\varrho_i)  = \{(u,v)| {\rm pr}_{u,v}  {\rm Hom}({\mathbf R}, \boldsymbol{\Gamma})\subseteq \varrho_i\} = \{(u,v)| \rho_{u,v}\subseteq \varrho_i\} \subseteq r^L_i
$$
The last implies that $({\mathbf R}^L, \boldsymbol{\Gamma})$ is a maximal pair, and this completes the proof.
\end{proof}
\section{Conclusion and open questions}
We studied the size of an implicational system $\Sigma$ corresponding to a densification operator on a set of constraints for different constraint languages. It turns out that only for bounded width languages this size can be bounded by a polynomial of the number of variables. This naturally led us to more efficient algorithms for the densification and the sparsification tasks.

%Let us call $\Gamma$ a core language if ${\rm core}(\Gamma)=\Gamma$.
An unresolved issue of the paper is a relationship (equality?) between the following classes of constraint languages: a) core languages with a weak polynomial densification operator, b) core languages of bounded width. %, e) tractable languages $\Gamma$ for which $\sup\limits_{\Psi\,\,{\rm is\,\,}\boldsymbol{\Gamma}{\rm \,\,irreducible}}{\rm RW}(\Psi) < \infty$. In the Boolean case these five classes are the same. Is this true for the general domain $D$?
%\item If ${\rm Dense}(\boldsymbol{\Gamma})$ can be solved by a Datalog program, then $S_{\boldsymbol{\Gamma}}(n) = \min\limits_{\Sigma\subseteq \Sigma_n^{\boldsymbol{\Gamma}}, \Sigma^\triangleright = \Sigma_n^{\boldsymbol{\Gamma}}} |\Sigma| = \mathcal{O}({\rm poly}(n))$. What about a language that has an ability to count in the sense of Feder-Vardi~\cite{Feder} which is known to be not of bounded width? The growth rate of $S_{\boldsymbol{\Gamma}}(n)$ for $D=\{0,1\}$ and $\Gamma = \{{\rm lin} = \{(x,y,z)| x\oplus y\oplus z=1\}\}$ is of special interest.
Also, the complexity classification of ${\rm Dense}(\boldsymbol{\Gamma})$ for the general domain $D$ is still open.
%\item Which languages have a weak polynomial densification operator?
%\end{itemize}

%\section*{References}
\bibliographystyle{unsrt}

\end{document}
% end of file template.tex